\pdfoutput=1
\documentclass[pdflatex,sn-mathphys-num]{sn-jnl}

\usepackage{graphicx}%
\usepackage{multirow}%
\usepackage{amsmath,amssymb,amsfonts}%
\numberwithin{equation}{section}
\usepackage{amsthm}%
\usepackage{mathrsfs}%
\usepackage[title]{appendix}%
\usepackage{xcolor}%
\usepackage{textcomp}%
\usepackage{manyfoot}%
\usepackage{booktabs}%
\usepackage{algorithm}%
\usepackage{algorithmicx}%
\usepackage{algpseudocode}%
\usepackage{listings}%
\usepackage{bm}
\usepackage{newtxtext}
\usepackage{natbib}
\usepackage{setspace}
\usepackage{ulem} 
\usepackage{accents} 
\theoremstyle{thmstyleone}%
\newtheorem{theorem}{Theorem}
\theoremstyle{thmstyletwo}%
\newtheorem{remark}{Remark}%

\theoremstyle{thmstylethree}%

\begin{document}

\title[Article Title]{The intrinsic decomposition of vorticity dynamics on an arbitrarily moving and deforming boundary}


\author*[1]{\fnm{Tao} \sur{Chen}}\email{chentao2023@njust.edu.cn}

\affil*[1]{\orgdiv{School of Physics}, \orgname{Nanjing University of Science and Technology}, \orgaddress{ \city{Nanjing}, \postcode{210094}, \country{China}}}


\abstract{Boundary vorticity dynamics provides a rigorous theoretical foundation for understanding vorticity creation at boundaries, vorticity-boundary interactions, as well as the rational design of effective boundary flow control strategies. It cornerstone is the boundary vorticity flux (BVF), first introduced by Lighthill in 1963, which quantities the local rate of vorticity production at a boundary, and thereby serves as a mathematical measure of distributed vorticity source strength. Owing to the non-uniqueness of the vorticity current tensor, two alternative definitions of BVF naturally exist, leading to the Lighthill-Panton-Wu and Lyman-Huggins interpretations of vorticity dynamics. Recent advances in vorticity and vortex kinematics have revealed two elementary modes embedded in vorticity: the rigid-rotation mode and the spin mode, extending beyond the classical single-vorticity paradigm in fluid mechanics. For each interpretation, by adopting a differential-geometric approach, we develop a general theory of the intrinsic decomposition of BVF for compressible Newtonian fluid interacting with an arbitrarily moving and deforming boundary surface. The analyses are further extended to the decomposition of boundary enstrophy dynamics, centered on the boundary enstrophy flux (BEF). Beyond the existing literature, the new theory explicitly identifies a complete set of boundary sources for the rigid-rotation and spin modes, as well as for various enstrophy constituents, arising from the interplay among external force, surface geometry and kinematics, and both longitudinal and transverse physical processes on a deformable boundary. It is noteworthy that introducing a conjugate curvature tensor pair consistently yields compact mathematical representations for all source terms, manifesting as bilinear (or quadratic-form-type) couplings between fundamental vortcity modes and the surface curvature tensors, irrespective of the complexity or generality of the boundary kinematics.}

\keywords{Boundary vorticity flux, Boundary enstrophy flux, Intrinsic decomposition, Deformable boundary}



\maketitle
\tableofcontents
\section{Introduction}\label{sec1}
Vorticity, defined as the curl of the velocity field $(\bm{\omega}\equiv\bm{\nabla}\times\bm{u})$, is a fundamental derived quantity in fluid mechanics~\citep{Helmholtz1858,Truesdell1954,Batchelor1967}. Representing twice the local
angular velocity of a fluid volume element, it has been extensively used to investigate coherent vortex structures, which are regarded as the sinews and muscles of complex fluid motions~\citep{Kuchemann1965report,Saffman1992Vortex}. In contrast to the primitive variables (velocity and pressure), vorticity-based approaches usually provide a more intuitive physical interpretation and deeper insight into the generation, interaction, and evolution of complex vortical structures within a flow field~\citep{Wu2006vorticity,Zhou2024}. At the fundamental level, \textit{boundary vorticity dynamics} is a crucial subject of fluid mechanics that investigates the physical mechanisms underlying vorticity creation at solid boundaries or fluid-fluid interfaces, as well as the ensuing interactions between vorticity and boundaries~\citep{WuWu1996,Wu2006vorticity}. The theory of boundary vorticity dynamics proves valuable in complex flow diagnosis, configuration design, and the development of adaptive boundary flow control methods~\citep{Wu2018review}.

It has long been recognized that the most important source of vorticity is the surface of a solid boundary~\citep{Lamb1932hydrodynamics,Batchelor1967}. However, the concept of vorticity creation at a boundary had not crystallized into a formal theory until the seminal work of~\citet{Lighthill1963}, who laid the first cornerstone of boundary vorticity dynamics. For a two-dimensional (2D) incompressible viscous flow interacting with a stationary flat rigid wall,~\citet{Lighthill1963} introduced the \textit{boundary vorticity flux} (BVF), ${\sigma}\equiv\bm{n}\bm{\cdot}\bm{\nabla}{\omega}=\nu\partial_{n}{\omega}$, to quantitatively characterize the vorticity source strength on a no-slip wall. Physically, it describes the local vorticity creation rate per unit area on a boundary. Here, $\nu$ is the kinematic viscosity, $\omega$ is the vorticity component perpendicular to the 2D plane, and $\bm{n}$ is the unit normal vector on the boundary oriented into the fluid. By applying the Navier-Stokes (NS) equations at the wall and invoking the no-slip velocity
adherence condition, he derived an exact on-wall relation revealing boundary vorticity creation by
the surface pressure gradient at a rate measured by the BVF. He demonstrated that the vorticity-based considerations illuminate the detailed development of the boundary layer just as clearly as do momentum-based analyses. This in fact embedded the whole boundary-layer theory within the broader context of vorticity dynamics. Despite the fundamental nature of this insight, the importance of vorticity creation and Lighthill's pioneering work did not receive due recognition until the late 1970s~\citep{WuWu1996}. Lighthill himself later reaffirmed the central idea in a survey paper on waves and hydrodynamic loading~\citep{lighthill1979waves}, as well as in his textbook on theoretical fluid mechanics~\citep{lighthill1986informal}.

The definition of Lighthill's BVF was subsequently generalized by~\citet{Panton1984} to three-dimensional (3D) scenarios in his monograph on incompressible flows, yielding the expression $\bm{\sigma}_{\bm\omega}^{(1)}\equiv\nu\bm{n}\bm{\cdot}\bm{\nabla\omega}=\nu\partial_{n}\bm{\omega}$ for a flat wall. Concurrently, the wall tangential acceleration was first identified by~\citet{Morton1984} as a constituent of the vorticity source. Immediately afterward, the physics of forced, unsteady, separated flows was explained in a unified way, based on the BVF $\bm{\sigma}_{\bm\omega}^{(1)}$~\citep{reynolds1985review}.
The general expression of $\bm{\sigma}_{\bm\omega}^{(1)}$ was later derived by~\citet{Wu1986} for 3D incompressible viscous flow over an arbitrarily curved stationary wall, revealing an additional 3D viscous contribution associated with the interplay between vorticity and surface curvature tensor. This 3D effect had been briefly noted by~\citet{Lighthill1963} as a small term, though without further elaboration; exceptions, however, arise near sharp edges or spiral points of the skin-friction lines~\citep{Wu1986}. The full expression of $\bm{\sigma}_{\bm\omega}^{(1)}$ was eventually derived by~\citet{Wu1993} for 3D compressible viscous flow interacting with an arbitrarily moving and deforming
wall, and later generalized to an interface separating two immiscible fluids within a sharp-interface framework~\citep{Wu1995,WuWu1996}. For these historical reasons, we refer to $\bm{\sigma}_{\bm\omega}^{(1)}$ as the \textit{Lighthill-Panton-Wu BVF}. A similar extension applicable to any continuous media was independently made by~\citet{Hornung1989,Hornung1990}, based on the Stokes-Helmholtz decomposition of the divergence of the total stress tensor. Based on~\citet{Wu1993}, a new representation of $\bm{\sigma}_{\bm\omega}^{(1)}$ was proposed by~\citet{Xie2013688}, which partitions the contributions from the tangential and normal vorticity components on a boundary. This formulation is directly associated with the intrinsic generalized Stokes formula of the second kind~\citep{Xie2013SCPMA}.
By introducing the nonholonomic basis theory, the kinematic expression of $\bm{\sigma}_{\bm\omega}^{(1)}$ was derived in an orthogonal
coordinate system aligned with the surface principal directions~\citep{Xie2020theory}. Representative applications of $\bm{\sigma}_{\bm\omega}^{(1)}$-based theory include duct flows with deformable boundaries~\citep{Chen2016Vorticity,ChenYu2017}, wake vortex suppression by traveling wave on a flexible surface~\citep{Xin2013Vorticity,Wu2007Suppression}, turbulent drag reduction by traveling wave of flexible wall~\citep{Zhao2004Turbulent}, dynamic stall delay induced by airfoil pitching-up
motion~\citep{Zou2017Causal}, and geometric optimization in a model pump-turbine~\citep{Lu2020BoundaryVorticity}, and so forth.

An alternative definition of BVF, $\bm{\sigma}_{\bm\omega}^{(2)}\equiv-\nu\bm{n}\times(\bm{\nabla}\times\bm{\omega})$, was proposed by~\citet{Lyman1990vorticity} (referred to as the \textit{Lyman vorticity flux}), which directly incorporates
the purely 3D viscous contribution into the definition. While the integrals of $\bm{\sigma}_{\bm\omega}^{(1)}$ and $\bm{\sigma}_{\bm\omega}^{(2)}$ over any closed surface are identical, the two definitions differ in their description of the local vorticity creation rates on a boundary. These two BVFs were shown to arise from distinct definitions
of the viscous vorticity current tensor in the vorticity evolution equation~\citep{Terrington2023LH,chen2024physics}. Consequently,
they offer two different yet equally valid dynamical interpretations of vorticity dynamics, which are referred to as the Lighthill-Panton-Wu and Lyman-Huggins interpretations~\citep{Terrington2023LH}. The non-uniqueness of BVF was also discussed by~\citet{Kolar2003} in terms of general and effective vorticity fluxes, and the differences between the two formulations were further clarified by~\citet{Chen2021features}. Several conceptual advantages of $\bm{\sigma}_{\bm\omega}^{(2)}$ have been emphasized in the context of vortex reconnection~\citep{Terrington2021Lymanflux} and vortex ring attachment to a free surface~\citep{Terrington2022JFMfree}. Most recently, the Lighthill-Panton-Wu BVF $\bm{\sigma}_{\bm\omega}^{(1)}$ was proved to be the uniquely correct choice for use as the Neumann boundary condition in both adjoint and stochastic representations of Lagrangian vorticity dynamics~\citep{Wang2022Origin,Du2025Back}, whereas the Lyman flux $\bm{\sigma}_{\bm\omega}^{(2)}$ was shown to be generally related to the Huggins vorticity current tensor~\citep{huggins1970exact,huggins1971dynamical,huggins1980vortex} and the Josephson-Anderson relation~\citep{Eyink2008Turbulent,Eyink2021,DuZaki2025}. Due to these historical reasons, we refer to $\bm{\sigma}_{\bm\omega}^{(2)}$ as the \textit{Lyman-Huggins BVF}.

Enstrophy, defined as $\Omega\equiv\omega^2/2$, quantifies the rotational intensity fo a flow without accounting for the direction of vorticity. Being directly associated with the BVF, the boundary enstrophy flux (BEF) $F_{\Omega}\equiv\bm{\omega}\bm{\cdot}\bm{\sigma}_{\bm\omega}$ measures the rate of enstrophy creation on a boundary~\citep{Wu1993,WuWu1996}. The BEF indicates the boundary sources and sinks of an enstrophy field and is notably linked to complex topological features such as isolated critical points and sepration/attachment lines in a skin-friction field~\citep{Chen2021turbulence,liu2023reconstruction}. The evolution equation for the BEF, derived by~\citet{Chen2021features}, was recently applied to analyze sheet/cloud cavitation over a hydrofoil~\citep{WangHaobo2026}. From multiple perspectives, an explicit intrinsic coupling relation has been derived between the BEF and the elementary surface physical quantities (which include skin friction, surface vorticity, surface pressure, and surface curvature tensor)~\citep{LiuTS2016,Chen2019BEF}, which indicates that the BEF can be generated by the viscous coupling between skin friction and surface pressure gradient, as well as quadratic vorticity-curvature interaction. In fact, skin friction and surface pressure have been regarded as the footprints of near-wall coherent structures~\citep{wu1988zfw,bewley2004footprint,liu2018skinfriction,Chen2021turbulence}. The application of this relation also leads to a new aerodynamic lift formula~\citep{liu2017aiaatn}. Based on the variational formulation of this equation~\citep{LiuShen2008}, global skin friction and surface pressure fields with high resolution can be mutually inferred using the conventional measurement techniques in experiments: the global luminescent oil-film (GLOF) method for skin friction~\citep{LiuTS2019PAS} and pressure-sensitive paint (PSP) for surface pressure~\citep{Liu2021PSP}. A similar approach extracts skin friction from surface temperature measurements via temperature-sensitive paint (TSP)~\citep{miozzi2016global,Miozzi2019Skin}. Recent theoretical studies have revealed that the BEF generates the boundary dilatation flux unidirectionally~\citep{Mao2022unified,ChenTao2022TAML}, by which wall sound sources were identified in flow past a cylinder as an application~\citep{ZhaoYH2025}. Corresponding to the two BVFs, there exist two types of BEF which we refer to as \textit{the Lighthill-Panton-Wu and Lyman-Huggins BEFs}~\citep{Terrington2023LH}, denoted by $F_{\Omega}^{(1)}$ and $F_{\Omega}^{(2)}$. New explorations on BVF and BEF include interfacial vorticity dynamics in the Navier-Stokes-Korteweg system with a diffuse interface~\citep{Chen2022NSK,Chen2024IJMF}, the generalized interface circulation dynamics within a sharp-interface framework~\citep{Terrington2021Lymanflux,Terrington2022JFM}, as well as the
Lie derivatives of fundamental surface quantities with respect to the celerity of near-wall
coherent structures~\citep{Chen2023Lie}.

Moreover, from the perspective of vortex force theory in aerodynamics, analyzing the Lamb vector $\bm{L}\equiv\bm{\omega}\times\bm{u}$ and its spatial divergence $\vartheta_{L}\equiv\bm{\nabla}\bm{\cdot}\bm{L}$ (i.e., the Lamb dilatation) could provide valuable insights into the origins of aerodynamic
forces exerted on an immersed body~\citep{wang2013pf,Wu2018review,Chen2023PhysicaD}. The Lamb dilatation appears as an acoustic source in Lighthill's wave equation~\citep{Lighthill1952}, and in vortex-sound formulations~\citep{Powell1964,Howe1975}. One one hand, the near-wall Taylor-series expansion solution for $\bm{L}$ was derived to elucidate the initial formation of the Lamb vector in the viscous sublayer, where the coefficients at successive orders are expressed using the elementary surface quantities including the BEF~\citep{Yang2007Steady,Chen2022AIP}. This formal expansion has been shown to demonstrate a high degree of agreement with the simulation data of supersonic flat-plate flow~\citep{Xue2025}. One the other hand, analogous to the BEF,~\citet{Chen2023PhysicaD} introduced the concept of the boundary Lamb dilatation flux (BLDF) and discussed its relationship with near-wall vortical structures based on its evolution equation. Furthermore, the acoustic source term for the improved Phillips equation for longitudinal field is identified as $-2Q$~\citep{Phillips1960,Mao2022AMS}, where $Q$ denotes the second principal invariant of the velocity gradient tensor (VGT)~\citep{Hunt1988Eddies,Jeong1995}. The boundary $Q$ flux and its decomposition were proposed for generic compressible viscous flow interacting with a stationary wall~\citep{Chen2024Qflux}. The results indicate that the coupling between skin friction and surface pressure gradient does not generate the boundary $Q$ flux due to the equal contributions from boundary enstrophy and squared strain-rate fluxes, whereas the quadratic vorticity-curvature interaction term does contribute. These findings provide a rational foundation for modeling VGT invariant dynamics in wall-bounded flows. For a recent review of the theory and applications of boundary vorticity dynamics, the reader is referred to~\cite{Mao2023Solid}.

The research on boundary vorticity dynamics discussed above is all based on a single measure of vorticity within the classical framework of fluid mechanics. Recent theoretical studies, however, have revealed two elementary modes embedded within the vorticity: the rigid-rotation mode (swirling) and the spin mode (shear). Relevant investigations on distinguishing vorticity modes are rather limited and can be grouped into two categories. The first category belongs to the characteristic algebraic description, which primarily applies the theory of Schur form and normal-nilpotent decomposition to the VGT~\citep{Li2010Theoretical,Liu2018Rortex,Mao2023vortex}. This approach has yielded two invariant vorticity decompositions within the characteristic framework: one is the Liutex-shear vorticity decomposition with a positive spin mode~\citep{Liu2018Rortex}, the other features a negative spin mode~\citep{Chen2026General}. However, the eigenvalue-based theories impose limitations on further theoretical analysis of vorticity dynamics. The second category pertains to material and field descriptions, focusing on the rotational kinematics of directed material elements with more degrees of freedom beyond the coverage of vorticity measure~\citep{Chen2025Kinematic,Chen2026General}. Subsequently, a unifying principle proposed by~\citet{Chen2026operator} links these distinct vorticity modes to two communicative vorticity/vortex operators within a novel operator-form vorticity decomposition framework. These studies provide useful tools for characterizing the formation, interaction and mutual transformation of axial vortices and shear layers in complex flows. Moving beyond pure vorticity kinematics, the present study aims to develop a general theory for the intrinsic decomposition of vorticity dynamics on an arbitrarily moving and deforming boundary. For the first time, the boundary sources of vorticty modes and their associated enstrophy constituents are explicitly expressed in terms of surface physical and geometric quantities under both Lighthill-Panton-Wu and Lyman-Huggins interpretations. To the best of the authors' knowledge, vorticity dynamics theory on a deformable boundary remains very limited to date, although numerous numerical simulations and experimental studies exist.

The remainder of this paper is organized as follows. Section~\ref{surface_geometry} describes the surface geometry and defines the associated differential gradient operators. Section~\ref{VGT_vorticity} introduces decompositions of vorticity and enstrophy, and elucidate mathematical structures of surface velocity gradients. In Section~\ref{xxxx1}, we propose several general theorems pertinent to fluid kinematics on a deformable boundary. Sections~\ref{def_expre_BVFs} and~\ref{kkk1} introduce two alternative definitions of BVF and BEF, followed by derivations of their complete expressions. In~Sections~\ref{BVF_LPW} and~\ref{BVF_LH}, we establish intrinsic decompositions of boundary vorticity dynamics under Lighthill-Panton-Wu and Lyman-Huggins interpretations on an arbitrarily moving and deforming boundary surface. Finally, several supplementary theorems are proposed in~Section~\ref{Several supplementary theorems}. Concluding remarks are made in Section~\ref{conclusion}.

\section{Surface geometry and related differential operators}\label{surface_geometry}
\subsection{Rational description of base surface and its neighborhood}
For any fixed instant, let $\bm{\Sigma}\subset\mathbb{R}^3$ be a sufficiently regular smooth surface, locally parametrized by the curvilinear coordinates 
 $\bm{x}=(x^1,x^2)\in\mathscr{D}_{\bm{x}}$ via the bijective $\mathcal{C}^\infty$-mapping~\citep{Aris1962vectors,Xie2013SCPMA,docarmo2016diffgeo}
\begin{equation}
	\bm{x}=(x^1,x^2)\in\mathscr{D}_{\bm{x}}\subset\mathbb{R}^{2}\mapsto\bm{\Sigma}(\bm{x})\in\mathbb{R}^{3},
\end{equation}
where $\mathscr{D}_{\bm{x}}\subset\mathbb{R}^2$  is a simply connected open domain. At any regular point $\mathcal{P}=\bm{\Sigma}(\bm{x})\in\bm{\Sigma}(\mathscr{D}_{\bm x})$, the local covariant basis vectors are defined as
\begin{equation}
	\bm{g}_{\alpha}(\bm{x})\equiv\frac{\partial\bm{\Sigma}}{\partial x^\alpha}(\bm{x})~~~(\alpha=1,2),
\end{equation}
which represent the tangent vectors of the coordinate curves and span the tangent space $T_{P}{\Sigma}\equiv{\rm Span}\left\{\bm{g}_{1}(\bm{x}),\bm{g}_{2}(\bm{x})\right\}$.
The surface metric tensor (i.e., the first fundamental form of $\bm{\Sigma}$) is defined as $\bm{G}\equiv{g}_{\alpha\beta}\bm{g}^{\alpha}\bm{g}^{\beta}~(\alpha,\beta=1,2)$, where the inner products $g_{\alpha\beta}\equiv\bm{g}_{\alpha}\bm{\cdot}\bm{g}_{\beta}$ define its covariant components. The contravariant components of $\bm{G}$ are uniquely determined by inverse metric tensor relation $g^{\alpha\beta}g_{\beta\gamma}=\delta_{\gamma}^{\alpha}$, with $\delta_{\gamma}^{\alpha}$ being the Kronecker delta. The contravariant basis vectors then follow as $\bm{g}^{\alpha}\equiv g^{\alpha\beta}\bm{g}_{\beta}$. Since the differential area element satisfies ${g}\equiv\det\left[g_{\alpha\beta}\right]=\lVert\bm{g}_{1}\times\bm{g}_{2}\rVert>0$, the unit normal to $\bm{\Sigma}$ is
\begin{equation}
	\bm{n}\equiv\frac{1}{\sqrt{g}}\star(\bm{g}_{1}\wedge\bm{g}_{2})=\frac{\bm{g}_{1}\times\bm{g}_{2}}{\lVert\bm{g}_{1}\times\bm{g}_{2}\rVert},
\end{equation}
where $\wedge$ and $\star$ denote the wedge product and Hodge star operators, respectively~\citep{chern2000lectures}. For notational convenience, we set $\bm{n}=\bm{g}^{3}=\bm{g}_{3}$ to represent the third dimension. The surface curvature tensor (i.e., the second fundamental form of $\bm{\Sigma}$) is defined as $\bm{K}\equiv{b}_{\alpha\beta}\bm{g}^{\alpha}\bm{g}^{\beta}$, where the convariant components are given by
\begin{eqnarray}
	b_{\alpha\beta}\equiv\frac{\partial\bm{g}_{\alpha}}{\partial x^\beta}\bm{\cdot}\bm{n}=\frac{\partial^2\bm{\Sigma}}{\partial x^\alpha\partial x^\beta}\bm{\cdot}\bm{n}
	=\frac{\partial\bm{g}_{\beta}}{\partial x^\alpha}\bm{\cdot}\bm{n}=b_{\beta\alpha}.
\end{eqnarray}
The contravariant and mixed components are obtained via the metric tensor components: $b^{\alpha\beta}=g^{\alpha\gamma}g^{\beta\lambda}b_{\gamma\lambda}$ and $b^{\alpha}_{\beta}=g^{\alpha\gamma}b_{\beta\gamma}$. The scalar curvature is $K\equiv{\rm tr}(\bm{K})=b_{\alpha}^{\alpha}=2H_{ M}$, where $H_{M}\equiv(\lambda_{1}+\lambda_{2})/2$ is the mean curvature, and $\lambda_{1,2}$ are the two principal curvatures. The Gaussian curvature is $K_{G}\equiv\det{(\bm{K})}=\det[b_{\alpha\beta}]/\det[{g_{\alpha\beta}}]=\lambda_{1}\lambda_{2}$, which, by Gauss's Theorema Egregium, depends only on the surface metric~\citep{chern2000lectures}. 
\begin{figure}[htbp]
	\centering 
	\includegraphics[width=1.0\columnwidth,trim={3cm 5.4cm 3cm 5.9cm},clip]{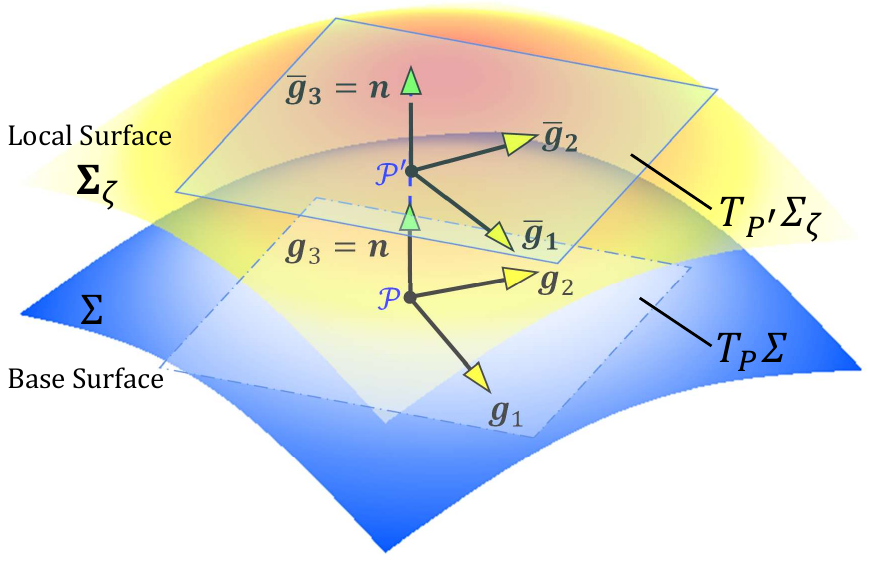}
	\caption{Schematic of the base surface $\bm{\Sigma}(\bm{x})$ and the local surface $\bm{\Sigma}_{\zeta}(\bm{x})=\bm{\Sigma}(\bm{x})+\zeta\bm{n}(\bm{x})$, where $P\in\bm{\Sigma}$ and $P^{\prime}\in\bm{\Sigma}_{\zeta}$; $\bm{n}$ denotes the unit normal vector. $T_{P}\Sigma\equiv{\rm Span}\left\{\bm{g}_{1},\bm{g}_{2}\right\}$ and $T_{P^\prime}\Sigma_{\zeta}\equiv{\rm Span}\left\{\bar{\bm{g}}_{1},\bar{\bm{g}}_{2}\right\}$ are the tangent planes at $P$ and $P^\prime$, respectively.} 
	\label{surface_extension}
\end{figure}

By performing the normal extension into a small proximity $\mathscr{U}\equiv\mathscr{D}_{\bm{x}}\times(-\delta,\delta)$ of the base surface $\bm{\Sigma}$, a 3D semi-orthogonal coordinate system can be constructed as~\citep{HuangKZ2003,docarmo2016diffgeo,Xie2020theory}
\begin{eqnarray}
	(\bm{x},x^3)=(x^1,x^2,\zeta)\in\mathscr{U}\subset\mathbb{R}^3\mapsto\bm{X}(\bm{x},\zeta)=\bm{\Sigma}(\bm{x})+\zeta\bm{n}(\bm{x})\equiv\bm{\Sigma}_{\zeta}(\bm{x})\in\mathbb{R}^{3},
\end{eqnarray}
where $\delta\in\mathbb{R}^{+}$; the base surface corresponds to $\zeta=0$, i.e., $\bm{X}(\bm{x},0)=\bm{\Sigma}(\bm{x})=\bm{\Sigma}_{0}(\bm{x})$. For each fixed $\zeta$, the surface $\bm{\Sigma}_{\zeta}(\bm{x})$  is referred to as the local surface at level $\zeta$. The local covariant basis vectors associated with the coordinates $(\bm{x},\zeta)$ are given by
\begin{subequations}
	\begin{equation}\label{eq26a}
		\bar{\bm{g}}_{\alpha}(\bm{x},\zeta)\equiv\frac{\partial\bm{\Sigma}_{\zeta}(\bm{x})}{\partial x^\alpha}=\left(\delta_{\alpha}^{\beta}-\zeta b_{\alpha}^{\beta}\right)\bm{g}_{\beta}(\bm{x})~~~(\alpha=1,2),
	\end{equation}
	\begin{equation}\label{eq26b}
		\bar{\bm{g}}_{3}(\bm{x},\zeta)\equiv\frac{\partial\bm{\Sigma}_{\zeta}(\bm{x})}{\partial\zeta}=\bm{n}(\bm{x})=\bm{g}_{3}(\bm{x}).
	\end{equation}
\end{subequations}
Using the full spatial gradient operator $\bm{\nabla}\equiv\bm{\nabla}_{\bm X}$ in $\mathbb{R}^{3}$, the corresponding contravariant basis vectors take the form as
\begin{subequations}
\begin{equation}
	\bar{\bm{g}}^{\alpha}(\bm{x},\zeta)\equiv\bm{i}_{\lambda}\frac{\partial x^\alpha}{\partial X^\lambda}=\bm{\nabla}x^{\alpha}~~(\alpha=1,2),
\end{equation}
\begin{equation}
	\bar{\bm{g}}^{3}(\bm{x},\zeta)\equiv\bm{i}_{\lambda}\frac{\partial x^3}{\partial X^\lambda}=\bm{\nabla}x^3=\bm{\nabla}\zeta=\bm{n},
\end{equation}	
\end{subequations}
where $\bm{i}_{\lambda}~(\lambda=1,2,3)$ denote the orthonormal basis vectors of the Cartesian coordinate system $\bm{X}=(X^1,X^2,X^3)$.
Consequently, the Jacobian determinant of the coordinate map is evaluated as
\begin{equation}
	J\equiv\det\left[\bar{\bm{g}}_{1},\bar{\bm{g}}_{2},\bar{\bm{g}}_{3}\right]=\left(1-K\zeta+K_{G}\zeta^2\right)\sqrt{g},
\end{equation}
Provided that $\delta>0$ is chosen sufficiently small so that $J>0$ throught $\mathscr{U}$, the mapping constitutes a smooth diffeomorphism. In other words, we restrict our attention to a family of local surfaces contained in a sufficiently thin neighborhood of the base surface.
The metric and curvature tensors of $\bm{\Sigma}_{\zeta}$ are then expressed as $\overline{\bm{G}}=\bar{g}_{\alpha\beta}\bar{\bm{g}}^{\alpha}\bar{\bm{g}}^{\beta}$ and $\overline{\bm{K}}=\bar{b}_{\alpha\beta}\bar{\bm{g}}^{\alpha}\bar{\bm{g}}^{\beta}$, with the components
\begin{subequations}
	\begin{equation}
		\bar{g}_{\alpha\beta}=g_{\alpha\beta}-2b_{\alpha\beta}\zeta+b_{\alpha}^{\gamma}b_{\gamma\beta}\zeta^2,
	\end{equation}
	\begin{equation}
		\bar{b}_{\alpha\beta}=b_{\alpha\beta}-b_{\alpha}^{\gamma}b_{\gamma\beta}\zeta.
	\end{equation}
\end{subequations}

According to the Cayley-Hamilton theorem, the following identity must hold~\citep{HuangKZ2003}:
\begin{eqnarray}\label{third}
\bm{K}^2\equiv\bm{K}\bm{\cdot}\bm{K}=b_{\alpha}^{\gamma}b_{\gamma\beta}\bm{g}^{\alpha}\bm{g}^{\beta}=K\bm{K}-K_G\bm{G},
\end{eqnarray}
where $\bm{K}^2$ is referred to as the third fundamental tensor. It is noted that~\eqref{third} exhibits a structure analogous to that of a rank-two algebraic equation for $\bm{K}$. Inspired by this resemblance and drawing upon the formal parallelism with Vieta's formulas,~\citet{Yin2008new} introduced the conjugate fundamental tensor $\hat{\bm{K}}$, which, together with $\bm{K}$, satisfies the following elegant pair of relations:
\begin{equation}\label{713_ab}
	\bm{K}+\hat{\bm{K}}=K\bm{G},~\hat{\bm{K}}\bm{\cdot}\bm{K}=K_{G}\bm{G}.
\end{equation}
In contrast to $\bm{K}$, the tensor $\hat{\bm{K}}$ appears only rarely in the standard differential-geometric literature. However, it plays an essential role in the construction of a pair of fundamental differential operators on curved surfaces, as well as in the description of shape-driving forces on biomembrane surfaces~\citep{Yin2008second,Yin2011AMM}. When $K_{G}\equiv\det(\bm{K})\neq0$ holds  at the point under consideration, the conjugate tensor admits the explicit representation
\begin{eqnarray}
	\hat{\bm{K}}=K\bm{G}-\bm{K}=K_{G}\bm{K}^{-1}.
\end{eqnarray}

\subsection{Surface differential operators}
\subsubsection{Surface gradient operator}
Consider a $k$-th order smooth tensor field $\bm{\Phi}$ defined in the small neighborhood $\mathscr{U}$ of the base surface $\bm{\Sigma}$.
On each local surface $\bm{\Sigma}_{\zeta}$, the surface gradient operator is defined as~\citep{WuWu1996,Xie2013SCPMA}
\begin{eqnarray}\label{sdo1}
	\overline{\bm{\nabla}}_{\pi}\equiv\bar{\bm{g}}^{\alpha}\frac{\partial}{\partial x^\alpha}=\bm{\nabla}-\bar{\bm{g}}^{3}\frac{\partial}{\partial x^3}=\bm{\nabla}-\bm{n}\frac{\partial}{\partial n},~\frac{\partial}{\partial n}\equiv\frac{\partial}{\partial x^3}=\frac{\partial}{\partial \zeta},
\end{eqnarray}
where the subscripts $\pi$ and $n$ denote the tangential and normal directions, respectively.
Restricting the operator to the base surface $\bm{\Sigma}$, we obtain
\begin{eqnarray}\label{sdo2}
	\bm{\nabla}_{\pi}\bm{\Phi}_{\Sigma}\equiv{\bm{g}}^{\alpha}\frac{\partial\bm{\Phi}_{\Sigma}}{\partial x^\alpha}=\bm{G}\bm{\cdot}\left[\bm{\nabla}\bm{\Phi}\right]_{\Sigma}=\left[\bm{\nabla}\bm{\Phi}\right]_{\Sigma}-\bm{n}\left[\frac{\partial\bm{\Phi}}{\partial n}\right]_{\Sigma}.
\end{eqnarray}
Here, $\bm{\Phi}_{\Sigma}\equiv\bm{\Phi}\lvert_{\Sigma}=\bm{\Phi}(\bm{x},0)$ denotes the restriction of $\bm{\Phi}$ on $\bm{\Sigma}$. For notational brevity, the subscript ``$\Sigma$'' will be omitted in the subsequent discussion whenever no ambiguity arises. It follows that the surface curvature tensor $\bm{K}$ and the scalar curvature $K$ can be expressed intrinsically in terms of the surface gradient operator and the surface normal on $\bm{\Sigma}$:
\begin{equation}
	\bm{K}=-\bm{\nabla}_{\pi}\bm{n},~~K\equiv{\rm tr}(\bm{K})=-\bm{\nabla}_{\pi}\bm{\cdot}\bm{n}.
\end{equation}

The Laplace-Beltrami operator on $\bm{\Sigma}$ is defined as ${\nabla}_{\pi}^2\bm{\Phi}_{\Sigma}\equiv\bm{\nabla}_{\pi}\bm{\cdot}\bm{\nabla}_{\pi}\bm{\Phi}_{\Sigma}$.
By applying~\eqref{sdo2} twice, the full Laplacian of $\bm{\Phi}$ in $\mathbb{R}^3$ can be evaluated on $\bm{\Sigma}$ as
\begin{eqnarray}\label{sdo4}
\left[\nabla^2\bm{\Phi}\right]_{\Sigma}&\equiv&\left[\bm{\nabla}\bm{\cdot}\bm{\nabla}\bm{\Phi}\right]_{\Sigma}=\bm{\nabla}_{\pi}\bm{\cdot}\left[\bm{\nabla\Phi}\right]_{\Sigma}+\bm{n}\bm{\cdot}\left[\frac{\partial\bm{\nabla\Phi}}{\partial n}\right]_{\Sigma}\nonumber\\
&=&\bm{\nabla}_{\pi}\bm{\cdot}\left(\bm{\nabla}_{\pi}\bm{\Phi}_{\Sigma}+\bm{n}\left[\frac{\partial\bm{\Phi}}{\partial n}\right]_{\Sigma}\right)+\left[\frac{\partial}{\partial n}\left(\bm{n}\bm{\cdot}\bm{\nabla\Phi}\right)\right]_{\Sigma}\nonumber\\
&=&{\nabla}_{\pi}^2\bm{\Phi}_{\Sigma}+\left(\bm{\nabla}_{\pi}\bm{\cdot}\bm{n}\right)\left[\frac{\partial\bm{\Phi}}{\partial n}\right]_{\Sigma}+\bm{n}\bm{\cdot}\bm{\nabla}_{\pi}\left[\frac{\partial\bm{\Phi}}{\partial n}\right]_{\Sigma}+\left[\frac{\partial^2\bm{\Phi}}{\partial n^2}\right]_{\Sigma}\nonumber\\
&=&{\nabla}_{\pi}^2\bm{\Phi}_{\Sigma}-K\left[\frac{\partial\bm{\Phi}}{\partial n}\right]_{\Sigma}+\left[\frac{\partial^2\bm{\Phi}}{\partial n^2}\right]_{\Sigma},
\end{eqnarray}
where $\nabla^2\equiv\bm{\nabla}\bm{\cdot}\bm{\nabla}$ denotes the standard Laplacian in the ambient Euclidean space $\mathbb{R}^3$.
\begin{theorem}\label{curl_1}
On the base surface $\bm{\Sigma}$, the surface curl of the unit normal $\bm{n}$ necessarily vanishes:
\begin{equation}\label{curl_n}
	\bm{\nabla}_{\pi}\times\bm{n}=\bm{0}.
\end{equation}
\end{theorem}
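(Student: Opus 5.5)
We prove this by direct computation in the local basis. By the definition \eqref{sdo2} of the surface gradient, the surface curl of a vector field restricted to $\bm{\Sigma}$ is
\begin{equation*}
	\bm{\nabla}_{\pi}\times\bm{n}=\bm{g}^{\alpha}\times\frac{\partial\bm{n}}{\partial x^\alpha}.
\end{equation*}
The two ingredients are Weingarten's formula, to write $\partial\bm{n}/\partial x^\alpha$ in the tangent basis, and the symmetry $b_{\alpha\beta}=b_{\beta\alpha}$ of the second fundamental form.

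\textbf{Step 1: Weingarten's formula.} Since $\bm{n}\bm{\cdot}\bm{n}=1$, differentiation gives $\bm{n}\bm{\cdot}\partial_{\alpha}\bm{n}=0$, so $\partial_\alpha\bm{n}$ is tangent. Differentiating $\bm{n}\bm{\cdot}\bm{g}_{\beta}=0$ gives
\begin{equation*}
	\partial_\alpha\bm{n}\bm{\cdot}\bm{g}_\beta=-\bm{n}\bm{\cdot}\partial_\alpha\bm{g}_\beta=-b_{\alpha\beta}.
\end{equation*}
Hence $\partial_\alpha\bm{n}=-b_{\alpha\beta}\bm{g}^{\beta}$. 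This is equivalent to the stated identity $\bm{K}=-\bm{\nabla}_{\pi}\bm{n}$, which we may also cite directly: $\bm{\nabla}_{\pi}\bm{n}=\bm{g}^\alpha\partial_\alpha\bm{n}=-b_{\alpha\beta}\bm{g}^\alpha\bm{g}^\beta$.

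\textbf{Step 2: antisymmetric contraction.} Substituting Step 1 gives
\begin{equation*}
	\bm{\nabla}_{\pi}\times\bm{n}=-b_{\alpha\beta}\,\bm{g}^{\alpha}\times\bm{g}^{\beta}.
\end{equation*}
Here $b_{\alpha\beta}$ is symmetric by the equality of mixed partials $\partial_\alpha\partial_\beta\bm{\Sigma}=\partial_\beta\partial_\alpha\bm{\Sigma}$, which the paper already records. The cross product $\bm{g}^\alpha\times\bm{g}^\beta$ is antisymmetric in $(\alpha,\beta)$. Their full contraction therefore vanishes. Explicitly, only the $(1,2)$ and $(2,1)$ terms survive, giving $-(b_{12}-b_{21})\,\bm{g}^1\times\bm{g}^2=\bm{0}$.

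Equivalently, the curl is $-\bm{\epsilon}:\bm{K}$, the axial vector of $-\bm{K}$, and it vanishes because $\bm{K}$ is symmetric.

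\textbf{Main obstacle.} The main point needing care is that $\bm{n}$ is only defined on $\bm{\Sigma}$, so its surface curl must be understood through \eqref{sdo2} using tangential derivatives only. Alternatively, one can use the normal extension of $\bm{n}$ to $\mathscr{U}$ by $\bar{\bm{g}}^3=\bm{\nabla}\zeta$; the normal part then contributes $\bm{n}\times\partial_\zeta\bm{n}=\bm{0}$, since $\bm{n}$ is independent of $\zeta$. With the extension, $\bm{n}=\bm{\nabla}\zeta$ immediately gives $\bm{\nabla}\times\bm{n}=\bm{0}$, and restricting this to $\bm{\Sigma}$ is a one-line alternative proof.
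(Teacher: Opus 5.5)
Your proof is correct and is essentially the paper's argument. The paper writes $\bm{\nabla}_{\pi}\times\bm{n}=\bm{g}^{\alpha}\times\partial_{\alpha}\bm{n}$ as twice the dual of the antisymmetric part of $\bm{\nabla}_{\pi}\bm{n}=-\bm{K}$ and concludes from the symmetry of $\bm{K}$, which is your contraction $-b_{\alpha\beta}\,\bm{g}^{\alpha}\times\bm{g}^{\beta}=\bm{0}$ written in components; your closing alternative via $\bm{n}=\bm{\nabla}\zeta$ (curl of a gradient, with $\bm{n}\times\partial_{\zeta}\bm{n}=\bm{0}$ on $\bm{\Sigma}$) is also valid.
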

\begin{proof}[Proof of Theorem~{\upshape\ref{curl_1}}]	
Since the curvature tensor $\bm{K}$ is symmetric, it follows that
\begin{eqnarray}
	\bm{\nabla}_{\pi}\times\bm{n}&=&\bm{g}^{\alpha}\times\frac{\partial\bm{n}}{\partial x^\alpha}=2\star\mathscr{A}\left(\bm{g}^{\alpha}\frac{\partial\bm{n}}{\partial x^\alpha}\right)\nonumber\\
	&=&-2\star\mathscr{A}\left(\bm{\nabla}_{\pi}\bm{n}\right)=2\star\mathscr{A}(\bm{K})=\bm{0},
\end{eqnarray}
where $\mathscr{A}$ represents the antisymmetrization operator.
This completes the proof.
\end{proof}
\begin{theorem}\label{div_1}
On the base surface $\bm{\Sigma}$, the surface divergence of the metric tensor $\bm{G}$ is
\begin{equation}\label{div_G}
	\bm{\nabla}_{\pi}\bm{\cdot}\bm{G}=K\bm{n}.
\end{equation}
\end{theorem}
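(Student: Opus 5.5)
The plan is to use the intrinsic representation of the metric tensor as the tangential projector, $\bm{G}=\bm{I}-\bm{n}\bm{n}$, where $\bm{I}$ is the identity tensor in $\mathbb{R}^3$. The justification is that $\bm{G}=g_{\alpha\beta}\bm{g}^{\alpha}\bm{g}^{\beta}=\bm{g}^{\alpha}\bm{g}_{\alpha}$, and adding $\bm{g}^{3}\bm{g}_{3}=\bm{n}\bm{n}$ completes the resolution of the identity in the basis $\{\bm{g}_1,\bm{g}_2,\bm{n}\}$. Once this is in place, the surface divergence splits into two pieces:
\begin{equation*}
\bm{\nabla}_{\pi}\bm{\cdot}\bm{G}=\bm{\nabla}_{\pi}\bm{\cdot}\bm{I}-\bm{\nabla}_{\pi}\bm{\cdot}(\bm{n}\bm{n}).
\end{equation*}
The first piece vanishes because $\bm{I}$ is constant, so $\partial\bm{I}/\partial x^\alpha=\bm{0}$ and hence $\bm{g}^{\alpha}\bm{\cdot}\partial_\alpha\bm{I}=\bm{0}$.

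For the second piece, I apply the product rule inside the definition $\bm{\nabla}_{\pi}\bm{\cdot}(\cdot)=\bm{g}^{\alpha}\bm{\cdot}\partial_\alpha(\cdot)$ from~\eqref{sdo2}:
\begin{equation*}
\bm{\nabla}_{\pi}\bm{\cdot}(\bm{n}\bm{n})=\left(\bm{g}^{\alpha}\bm{\cdot}\partial_\alpha\bm{n}\right)\bm{n}+\left(\bm{g}^{\alpha}\bm{\cdot}\bm{n}\right)\partial_\alpha\bm{n}.
\end{equation*}
Since $\bm{g}^{\alpha}$ is tangent, $\bm{g}^{\alpha}\bm{\cdot}\bm{n}=0$, and the second term drops out. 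The first term is $(\bm{\nabla}_{\pi}\bm{\cdot}\bm{n})\bm{n}=-K\bm{n}$, using the relation $K=-\bm{\nabla}_{\pi}\bm{\cdot}\bm{n}$ stated earlier in this section. Collecting signs then gives $\bm{\nabla}_{\pi}\bm{\cdot}\bm{G}=K\bm{n}$, which is~\eqref{div_G}.

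As a cross-check, I would also verify the identity directly in components. Writing $\bm{G}=\bm{g}^{\beta}\bm{g}_{\beta}$, the Weingarten decomposition $\partial_\alpha\bm{g}_\beta=\Gamma^{\gamma}_{\alpha\beta}\bm{g}_\gamma+b_{\alpha\beta}\bm{n}$ shows that the normal part of $\bm{g}^{\alpha}\bm{\cdot}\partial_\alpha\bm{G}$ is $g^{\alpha\beta}b_{\alpha\beta}\bm{n}=K\bm{n}$. The tangential Christoffel terms are exactly the covariant divergence of the surface identity, which is zero by metric compatibility.

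The only step that needs care is the ordering convention for the divergence. Contracting from the left, $\bm{g}^{\alpha}\bm{\cdot}\partial_\alpha(\bm{n}\bm{n})$, is what makes the $(\bm{g}^{\alpha}\bm{\cdot}\bm{n})\partial_\alpha\bm{n}$ term vanish rather than the other term. Because $\bm{G}$ is symmetric, both conventions lead to the same result in the end.
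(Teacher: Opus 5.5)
Your proof is correct, but your main argument takes a different route from the paper's. The paper argues entirely in components. It writes $\bm{G}=\bm{g}_{\beta}\bm{g}^{\beta}$ and differentiates both basis vectors with the Gauss--Weingarten formulas. The Christoffel terms $\Gamma^{\alpha}_{\alpha\beta}\bm{g}^{\beta}-\Gamma^{\alpha}_{\alpha\lambda}\bm{g}^{\lambda}$ then cancel, and the normal terms leave $b^{\alpha}_{\alpha}\bm{n}=K\bm{n}$. That is essentially your cross-check, with the basis factors in the opposite order. (Strictly, $\partial_\alpha\bm{g}_\beta=\Gamma^{\gamma}_{\alpha\beta}\bm{g}_\gamma+b_{\alpha\beta}\bm{n}$ is the Gauss formula; Weingarten's is the one for $\partial_\alpha\bm{n}$.)

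Your primary argument instead uses the projector identity $\bm{G}=\bm{I}-\bm{n}\bm{n}$. The tangential part then vanishes trivially because $\bm{I}$ is constant, and the whole result reduces to $\bm{\nabla}_{\pi}\bm{\cdot}\bm{n}=-K$, which the paper has already stated. This route is shorter and coordinate-free. It shows directly that $\bm{\nabla}_{\pi}\bm{\cdot}\bm{G}$ is purely a normal curvature effect with no connection terms. It also extends at once to identities the paper uses later, such as $\bm{\nabla}_{\pi}\bm{\cdot}(\omega_{n}\bm{G})=\bm{\nabla}_{\pi}\omega_{n}+K\omega_{n}\bm{n}$.

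The paper's component route has a different advantage: it shows the Christoffel-symbol cancellation, i.e.\ the metric compatibility of the induced connection, explicitly. That fits its later use of the Levi-Civita gradient $\bm{\nabla}_{C}$. Both routes ultimately rest on the same geometric input, the Weingarten formula $\partial_{\alpha}\bm{n}=-b^{\beta}_{\alpha}\bm{g}_{\beta}$. Your remark on contraction order is also right: since $\partial_\alpha\bm{G}$ is symmetric, contracting from the right just swaps which product-rule term vanishes.
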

\begin{proof}[Proof of Theorem~{\upshape\ref{div_1}}] Note that the surface metric tensor can be expressed as $\bm{G}=\bm{g}_{\beta}\otimes\bm{g}^{\beta}$. Then, by employing the Gauss-Weingarten-Codazzi (GWC) formulas~\citep{HuangKZ2003}, we obtain
		\begin{eqnarray}
		\bm{\nabla}_{\pi}\bm{\cdot}\bm{G}&=&\bm{g}^{\alpha}\bm{\cdot}\frac{\partial}{\partial x^\alpha}\left(\bm{g}_{\beta}\bm{g}^{\beta}\right)\nonumber\\
		&=&\bm{g}^{\alpha}\bm{\cdot}\left(\frac{\partial\bm{g}_{\beta}}{\partial x^\alpha}\bm{g}^{\beta}+\bm{g}_{\beta}\frac{\partial\bm{g}^{\beta}}{\partial x^\alpha}\right)\nonumber\\
		&=&\bm{g}^{\alpha}\bm{\cdot}\left(\Gamma_{\alpha\beta}^{\lambda}\bm{g}_{\lambda}\bm{g}^{\beta}+b_{\alpha\beta}\bm{n}\bm{g}^{\beta}-\bm{g}_{\beta}\Gamma_{\alpha\lambda}^{\beta}\bm{g}^{\lambda}+\bm{g}_{\beta}b_{\alpha}^{\beta}\bm{n}\right)\nonumber\\
		&=&\Gamma_{\alpha\beta}^{\alpha}\bm{g}^{\beta}-\Gamma_{\alpha\lambda}^{\alpha}\bm{g}^{\lambda}+b_{\alpha}^{\alpha}\bm{n}\nonumber\\
		&=&K\bm{n}.
	\end{eqnarray}
	This completes the proof.
\end{proof}

\subsubsection{Levi-Civita gradient operator}
On the 2D Riemannian submanifold $(\bm{\Sigma},\bm{G})$, one can introduce the Levi-Civita connection $\nabla_{\frac{\partial}{\partial x^\alpha}}$ (also referred to as the induced connection)~\citep{Dubrovin1992modern,chern2000lectures}. For the tangent and cotangent vector fields $\left\{\frac{\partial}{\partial x^\alpha},dx^\beta\right\}$, the following two fundamental properties then hold:
\begin{eqnarray}
	\nabla_{\frac{\partial}{\partial x^\alpha}}\frac{\partial}{\partial x^\beta}=\Gamma_{\alpha\beta}^{\gamma}\frac{\partial}{\partial x^\gamma},~~	\nabla_{\frac{\partial}{\partial x^\alpha}}dx^\beta=-\Gamma_{\alpha\gamma}^{\beta}dx^\gamma,
\end{eqnarray}
where $\Gamma_{\alpha\beta,\gamma}$ and $\Gamma_{\alpha\beta}^{\gamma}\equiv g^{\gamma\lambda}\Gamma_{\alpha\beta,\lambda}$ are
the Christoffel connection coefficients of the first and second kinds, respectively. The former is evaluated from the partial derivatives of the covariant components of $\bm{G}$:
\begin{equation}
		\Gamma_{\alpha\beta,\gamma}=\frac{1}{2}\left(\frac{\partial g_{\alpha\gamma}}{\partial x^{\beta}}+\frac{\partial g_{\beta\gamma}}{\partial x^{\alpha}}-\frac{\partial g_{\alpha\beta}}{\partial x^{\gamma}}\right).
\end{equation}
By virtue
of the Weingarten mapping for the unit normal $\bm{n}=\frac{\partial}{\partial x^3}$, one observes that the Levi-Civita connection differs from the ambinent connection in $\mathscr{U}\subset\mathbb{R}^3$
by the normal connection due to the surface curvature tensor $\bm{K}$. Treating the coordinate basis $\left\{\frac{\partial}{\partial x^\alpha},dx^\beta\right\}$ as equivalently to the natural frame $\left\{\bm{g}_{\alpha},\bm{g}^{\beta}\right\}$ will lead to the well-known GWC formulas~\citep{HuangKZ2003,Xie2020theory} which account for the frame movement along the surface in classical differential geometry. Based on the surface Levi-Civita connection and the contravariant basis vector field, the Levi-Civita gradient operator is defined as~\citep{Xie2013688,Shiqian2017}
\begin{equation}
	\bm{\nabla}_{C}\equiv\bm{g}^{\alpha}\nabla_{\frac{\partial}{\partial x^\alpha}}.
\end{equation}
Specifically, for a tensor field $\bm{\Phi}\in\mathscr{T}^{2}(\mathbb{R}^3)$, the action of the Levi-Civita gradient operator on $\bm{\Sigma}$ is defined componentwise as
\begin{subequations}
	\begin{eqnarray}\label{LC_def}
		\bm{\nabla}_{C}\diamondsuit\bm{\Phi}&\equiv&\left(\bm{g}^{\alpha}{\nabla}_{\frac{\partial}{\partial x^{\alpha}}}\right)\diamondsuit\left(\Phi^{\beta}_{\cdot \gamma}\bm{g}_{\beta}\bm{g}^{\gamma}+\Phi^{\beta}_{\cdot 3}\bm{g}_{\beta}\bm{n}+\Phi^{3}_{\cdot \gamma}\bm{n}\bm{g}^{\gamma}+\Phi^{3}_{\cdot3}\bm{n}\bm{n}\right)\nonumber\\
		&=&\bm{g}^{\alpha}\diamondsuit{\nabla}_{\frac{\partial}{\partial x^{\alpha}}}\left(\Phi^{\beta}_{\cdot \gamma}\bm{g}_{\beta}\bm{g}^{\gamma}+\Phi^{\beta}_{\cdot 3}\bm{g}_{\beta}\bm{n}+\Phi^{3}_{\cdot \gamma}\bm{n}\bm{g}^{\gamma}+\Phi^{3}_{\cdot3}\bm{n}\bm{n}\right)\nonumber\\
		&=&(\nabla_{\alpha}\Phi^{\beta}_{\cdot \gamma})\bm{g}^{\alpha}\diamondsuit\bm{g}_{\beta}\bm{g}^{\gamma}
		+(\nabla_{\alpha}\Phi^{\beta}_{\cdot 3})\bm{g}^{\alpha}\diamondsuit\bm{g}_{\beta}\bm{n}
		+(\nabla_{\alpha}\Phi^{3}_{\cdot \gamma})\bm{g}^{\alpha}\diamondsuit\bm{n}\bm{g}^{\gamma}\nonumber\\
		& &+(\nabla_{\alpha}\Phi^{3}_{\cdot3})\bm{g}^{\alpha}\diamondsuit\bm{n}\bm{n},
	\end{eqnarray}
	with the precise form of the connection terms depending on the tensor component type.
	Here, the symbol $\diamondsuit$ represents an arbitrary admissible tensor-algebraic operator (such as the inner product $\bm{\cdot}$, the cross product $\times$, or the tensor product $\otimes$). The operator $\nabla_{\alpha}$  represents covariant differentiation with respect to the coordinate $x^{\alpha}$, and it acts exclusively on indices associated with the tangent bundle of $\bm{\Sigma}$. The explicit expressions for each covariant derivative are given as follows:
	\begin{equation}
		\nabla_{\alpha}\Phi^{\beta}_{\cdot \gamma}\equiv\frac{\partial\Phi^{\beta}_{\cdot \gamma}}{\partial x^{\alpha}}+\Gamma_{\alpha\lambda}^{\beta}\Phi^{\lambda}_{\cdot \gamma}-\Gamma_{\alpha\gamma}^{\lambda}\Phi^{\beta}_{\cdot \lambda},~\nabla_{\alpha}\Phi^{3}_{\cdot \gamma}\equiv\frac{\partial\Phi^{3}_{\cdot \gamma}}{\partial x^{\alpha}}-\Gamma_{\alpha\gamma}^{\lambda}\Phi^{3}_{\cdot \lambda};
	\end{equation}
	\begin{equation}
		\nabla_{\alpha}\Phi^{\beta}_{\cdot 3}\equiv\frac{\partial\Phi^{\beta}_{\cdot 3}}{\partial x^{\alpha}}+\Gamma_{\alpha\lambda}^{\beta}\Phi^{\lambda}_{\cdot 3},~	\nabla_{\alpha}\Phi^{3}_{\cdot3}=\frac{\partial\Phi^{3}_{\cdot3}}{\partial x^\alpha}.
	\end{equation}
\end{subequations}

\section{Velocity gradient tensor and vorticity decompositions}\label{VGT_vorticity}
\subsection{Velocity gradient tensor, dilatation and vorticity}
The symmetric-antisymmetric decomposition (SAD) of the velocity gradient tensor (VGT) $\bm{A}\equiv\bm{\nabla u}$ is expressed as~\citep{Truesdell1954,Batchelor1967}
\begin{subequations}\label{SAD}
	\begin{equation}\label{SAD1}
		\bm{A}=\bm{D}+\bm{\varOmega},
	\end{equation}
	\begin{equation}\label{SAD2}
		\bm{D}\equiv\mathscr{S}[\bm{A}]=\frac{1}{2}\left(\bm{A}+\bm{A}^{\rm T}\right),~~\bm{\varOmega}\equiv\mathscr{A}[\bm{A}]=\frac{1}{2}\left(\bm{A}-\bm{A}^{\rm T}\right),
	\end{equation}
\end{subequations}
where $\bm{D}$ is the strain-rate tensor, and $\bm{\varOmega}$ is the rotation-rate tensor. The operators $\mathscr{S}$ and $\mathscr{A}$ denote, respectively, the
symmetric and antisymmetric parts of a second-order tensor. It is noted that $\bm{A}$ is defined as the left gradient of the velocity field $\bm{u}$, while its transpose, denoted by ``T'', corresponds to the right gradient tensor, $\bm{A}^{\rm T}=\bm{\nabla}\bm{u}^{\rm T}\equiv\bm{u}\bm{\nabla}$. Since ${\rm tr}(\bm{\varOmega})=0$ (where ${\rm tr}$ denotes the trace operator), the dilatation, a scalar quantity that characterizes fluid compressibility, is given by
\begin{equation}
	\vartheta\equiv\bm{\nabla}\bm{\cdot}\bm{u}={\rm tr}(\bm{A})={\rm tr}(\bm{D}).
\end{equation}
The evaluation of the dual vector of the rotation-rate tensor yields the vorticity vector,
\begin{equation}
	\bm{\omega}\equiv2\star\bm{\varOmega}=\star(\bm{\nabla}\wedge\bm{u})=\bm{\nabla}\times\bm{u}.
\end{equation}
The interaction between the longitudinal and transverse processes in complex flows can be well described under the $(\vartheta,\bm{\omega})$-framework~\citep{Wu2006vorticity}.

\subsection{Surface-element-based kinematic vorticity decomposition}
Consider a material surface element $\delta\bm{\Sigma}\equiv\delta\Sigma\bm{n}$, oriented by the unit normal $\bm{n}$ and with the differential surface area $\delta\Sigma=\lVert\delta\bm{\Sigma}\rVert$. Using the surface rate of deformation tensor~\citep{Dishington1965} 
\begin{equation}\label{dishington}
	\bm{B}\equiv\vartheta\bm{I}-\bm{A}^{\rm T},
\end{equation}
the material rate of change of $\delta\bm{\Sigma}$ is expressed as~\citep{Truesdell1954,Batchelor1967,WuJZ2005JFM,Xie2013SCPMA}
\begin{equation}\label{SS1}
	\frac{1}{\delta\Sigma}\frac{D\delta\bm{\Sigma}}{Dt}=\bm{n}\bm{\cdot}\bm{B},
\end{equation}
where $D/Dt$ represents the material derivative, and $\bm{I}$ is the metric tensor for $\mathbb{R}^3$. The right-hand side of~\eqref{SS1} can be equivalently written as $-(\bm{n}\times\bm{\nabla})\times\bm{u}=-\left(\bm{n}\times\bm{\nabla}_{\pi}\right)\times\bm{u}$, indicating that this evolution rate is intrinsically determined by the instantaneous surface geometry and tangential velocity distribution, independent of the external flow field in the ambient space. 
Applying the Leibniz rule of partial differentiation to the left-hand side of~\eqref{SS1} then yields
\begin{equation}
	\frac{1}{\delta\Sigma}\frac{D\delta\bm{\Sigma}}{Dt}=\bm{n}\left(	\frac{1}{\delta\Sigma}\frac{D\delta{\Sigma}}{Dt}\right)+\frac{D\bm{n}}{Dt}.
\end{equation}
Therefore, an orthogonal decomposition of~\eqref{SS1} readily follows~\citep{WuJZ2005JFM,Xie2013SCPMA}
\begin{subequations}
	\begin{equation}\label{aa1}
		\frac{1}{\delta\Sigma}\frac{D\delta{\Sigma}}{Dt}=\bm{\nabla}_{\pi}\bm{\cdot}\bm{u}=\bm{\nabla}_{\pi}\bm{\cdot}\bm{u}_{\pi}-Ku_{n}\equiv\vartheta_{\pi},
	\end{equation}
	\begin{equation}\label{aa2}
		\frac{D\bm{n}}{Dt}=\bm{W}_{\rm eff}\times\bm{n}=-\bm{\nabla}_{\pi}\bm{u}\bm{\cdot}\bm{n}=-(\bm{K}\bm{\cdot}\bm{u}_{\pi}+\bm{\nabla}_{\pi}u_{n}).
	\end{equation}
\end{subequations}
In~\eqref{aa1} and~\eqref{aa2}, we have employed the orthogonal decomposition of the velocity field $\bm{u}=\bm{u}_{\pi}+u_n\bm{n}$, where $\bm{u}_{\pi}\equiv\bm{G}\bm{\cdot}\bm{u}=\bm{n}\times(\bm{u}\times\bm{n})$ is the surface tangential velocity, and $\bm{u}_{n}=u_{n}\bm{n}$ is the surface-normal velocity. In a unified manner, the dilatation is decomposed as
\begin{subequations}
	\begin{equation}
		\vartheta=\vartheta_{\pi}+\vartheta_{n},
	\end{equation}
	\begin{equation}
	\vartheta_{\pi}=\bm{\nabla}_{\pi}\bm{\cdot}\bm{u},~\vartheta_{n}=\frac{\partial u_{n}}{\partial n}=\bm{n}\bm{\cdot}\bm{A}\bm{\cdot}\bm{n}=\bm{n}\bm{\cdot}\bm{D}\bm{\cdot}\bm{n}.
	\end{equation}
\end{subequations}
In~\eqref{aa1}, the relative rate of change of the surface area element $\delta\Sigma$ is determined by the surface velocity divergence $\bm{\nabla}_{\pi}\bm{\cdot}\bm{u}={\rm tr}(\bm{\nabla}_{\pi}\bm{u})$, which is decomposed into the tangential velocity divergence $(\bm{\nabla}_{\pi}\bm{\cdot}\bm{u}_{\pi})$ and a term ($-Ku_n=-2H_{M}u_{n}$) arising from the coupling between the mean curvature and the surface-normal velocity.
In~\eqref{aa2}, $\bm{W}_{\rm eff}\equiv-\bm{n}\times(\bm{A}\bm{\cdot}\bm{n})=-\bm{n}\times(\bm{\nabla}_{\pi}\bm{u}\bm{\cdot}\bm{n})$ is the effective angular velocity of the unit normal $\bm{n}$, which is caused by the tangential variation of $u_{n}$, and two regular circular motions along the two principal directions of the surface. The total angular velocity of $\bm{n}$ is given by $\bm{W}=\bm{W}_{\rm eff}+(\omega_{n}/2)\bm{n}=-\bm{W}_{\rm D}+\bm{\omega}/2$, where $\bm{W}_{\rm D}\equiv\bm{n}\times(\bm{n}\bm{\cdot}\bm{D})$ denotes the specific angular velocity attributable to the strain-rate tensor~\citep{Chen2026General}. Although this contribution vanishes for a rigid-body motion, $\bm{W}_{\rm D}$ plays an essential role in general fluid flows where deformation is present.

Based on the unit normal $\bm{n}$ and the SAD~\eqref{SAD1}, a direction-dependent vorticity decomposition (DVD) on $\bm{\Sigma}$ was recently proposed  as~\citep{Chen2026General}
\begin{subequations}\label{decomp_vorticity}
		\begin{equation}\label{decomp_vorticity1}
		\bm{\omega}=\bm{R}+\bm{S}+\bm{\omega}_{n},
	\end{equation}
		\begin{equation}\label{decomp_vorticity3}
	\bm{\omega}_{\pi}=\bm{G}\bm{\cdot}\bm{\omega}=\bm{R}+\bm{S},~\bm{\omega}_{n}\equiv\omega_{n}\bm{n}=[(\bm{n}\times\bm{\nabla}_{\pi})\bm{\cdot}\bm{u}_{\pi}]\bm{n},
	\end{equation}
	\begin{equation}\label{decomp_vorticity2}
		\bm{R}\equiv2\bm{W}_{\rm eff}=-2\bm{n}\times\left(\bm{A}\bm{\cdot}\bm{n}\right),~~
		\bm{S}\equiv2\bm{W}_{\rm D}=2\bm{n}\times\left(\bm{D}\bm{\cdot}\bm{n}\right).
	\end{equation}
\end{subequations}
Here, $\bm{R}$ is termed the rigid-rotation mode, which equals twice the effective angular velocity of the unit normal $\bm{n}$. The spin mode $\bm{s}$ equals twice the specific angular velocity $\bm{W}_{D}$, which is identical to the relative vorticity $\bm{\omega}_{r}\equiv\bm{\omega}-2\bm{W}$. According to the generalized Caswell formula~\citep{Caswell1967,WuJZ2005JFM}, the relative vorticity $\bm{\omega}_{r}$ determines the surface shear stress for a compressible viscous Newtonian fluid: $\bm{\tau}=\mu\bm{\omega}_{r}\times\bm{n}=\mu\bm{S}\times\bm{n}$, where $\mu$ is the dynamic viscosity. In this sense, $\bm{S}$ is also referred to as the shear mode in viscous flows. For convenience, we introduce three tangent vectors $(\bm{\xi},\bm{\xi}_{R},\bm{\xi}_{S})\in T\Sigma$, which are orthogonal to $(\bm{\omega},\bm{R},\bm{s})$, respectively, and are defined as follows:
\begin{subequations}\label{xi_decomp}
	\begin{equation}\label{xi_decomp1}
		\bm{\xi}\equiv\bm{\omega}\times\bm{n}=\bm{\xi}_{R}+\bm{\xi}_{S},
	\end{equation}
	\begin{equation}\label{xi_decomp2}
	\bm{\xi}_{R}\equiv\bm{R}\times\bm{n}=2\bm{W}_{\rm eff}\times\bm{n}=-2\left(\bm{K}\bm{\cdot}\bm{u}_{\pi}+\bm{\nabla}_{\pi}u_{n}\right),
	\end{equation}
		\begin{equation}\label{xi_decomp3}
\bm{\xi}_{S}\equiv\bm{S}\times\bm{n}=\bm{\omega}_{r}\times\bm{n}=\bm{\tau}/\mu.
	\end{equation}
\end{subequations}

\subsection{Decomposition of enstrophy}
Enstrophy is defined as half the squared magnitude of the vorticity vector,
\begin{eqnarray}
\Omega\equiv\frac{1}{2}\omega^2=\frac{1}{2}\bm{\omega}\bm{\cdot}\bm{\omega},
\end{eqnarray}
 which measures the local intensity of rotational motion without regard to the orientation of the vorticity vector. Based on the vorticity decomposition in \eqref{decomp_vorticity}, the enstrophy admits the following decomposition:
\begin{subequations}
	\begin{equation}
		\Omega=\Omega_{\pi}+\Omega_{n},
	\end{equation}
	\begin{equation}
		\Omega_{\pi}\equiv\frac{1}{2}\omega_{\pi}^2=\Omega_{R}+\Omega_{S}+\Omega_{RS}	,~\Omega_{n}\equiv\frac{1}{2}\omega_{n}^2.
	\end{equation}
	\begin{equation}
		\Omega_{R}\equiv\frac{1}{2}{R}^2=\frac{1}{2}\bm{R}\bm{\cdot}\bm{R},~	\Omega_{S}\equiv\frac{1}{2}{S}^2=\frac{1}{2}\bm{S}\bm{\cdot}\bm{S},~\Omega_{RS}\equiv\bm{R}\bm{\cdot}\bm{S}.
	\end{equation}
\end{subequations}
Here, $\Omega_{R}$ and $\Omega_{S}$ denote the enstrophy contributions from the rigid-rotation and spin modes, respectively. The coupling term $\Omega_{RS}$ is directly associated with the Klein-Kaden-Betz (KKB) and anti-KKB configurations of the vorticity modes in a flow field. It is well established that the only known mechanism for the rapid formation of axial vortices in a fluid of small viscosity is the rolling up of thin shear layers \citep{Betz1950,Wu2006vorticity}. The KKB configuration, characterized by $\Omega_{RS}>0$, occurs when axial vortices are formed via the wrapping of
shear layers, accompanied by a conversion from spin mode to rigid-rotation mode, ultimately leading to vortex intensification~\citep{Klein1910,Kaden1931,Betz1950,Chen2026General}. The anti-KKB configuration with ${\Omega}_{RS}<0$~\citep{Chen2026General,Chen2026operator}, arises when the axial vortex possesses sufficient swirling strength to maintain its coherence, such that opposing spin effects cannot significantly disrupt the primary vortex structure. During
the formation of an axial vortex, the anti-KKB mechanism could be activated to suppress the unbounded growth of swirling strength in the inner core region, while the KKB mechanism may
continue to govern the roll-up of the outer shear layers.

\subsection{Surface and Levi-Civita gradients of velocity}
\begin{theorem}\label{SLC1} 
On the base surface $\bm{\Sigma}$, the following identities hold:
\begin{subequations}
	\begin{equation}\label{qq00}
		\bm{\nabla}_{\pi}\bm{u}=\bm{\nabla}_{C}\bm{u}_{\pi}-u_n\bm{K}-\frac{1}{2}\bm{\xi}_{R}\bm{n},
	\end{equation}
	\begin{equation}\label{qq0}
		\bm{\nabla}_{\pi}\bm{u}_{\pi}=\bm{\nabla}_{C}\bm{u}_{\pi}+(\bm{K}\bm{\cdot}\bm{u}_{\pi})\bm{n}.
	\end{equation}
\end{subequations}
Equivalently, we have
\begin{subequations}
	\begin{equation}\label{eq320a}
		\left(\bm{\nabla}_{\pi}\bm{u}\right)_{\pi}=\bm{\nabla}_{C}\bm{u}_{\pi}-u_{n}\bm{K},~
		\bm{\nabla}_{\pi}\bm{u}\bm{\cdot}\bm{n}=-\frac{1}{2}\bm{\xi}_{R}=-\bm{W}_{\rm eff}\times\bm{n},
	\end{equation}
	\begin{equation}\label{eq320c}
		\left(\bm{\nabla}_{\pi}\bm{u}_{\pi}\right)_{\pi}=\bm{\nabla}_{C}\bm{u}_{\pi},~\bm{\nabla}_{\pi}\bm{u}_{\pi}\bm{\cdot}\bm{n}=\bm{K}\bm{\cdot}\bm{u}_{\pi}.
	\end{equation}
\end{subequations}
\end{theorem}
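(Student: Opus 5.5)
The plan is a direct componentwise computation in the natural frame $\{\bm{g}_{\alpha},\bm{n}\}$, using the Gauss--Weingarten formulas $\partial\bm{g}_{\beta}/\partial x^{\alpha}=\Gamma^{\lambda}_{\alpha\beta}\bm{g}_{\lambda}+b_{\alpha\beta}\bm{n}$ and $\partial\bm{n}/\partial x^{\alpha}=-b_{\alpha}^{\beta}\bm{g}_{\beta}$. These are the same formulas already used in the proof of Theorem~\ref{div_1}. I would first prove \eqref{qq0}, then obtain \eqref{qq00} by adding the contribution of the normal velocity $u_n\bm{n}$, and finally read off the equivalent forms by projecting onto $T\Sigma$ and onto $\bm{n}$.

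\textbf{Step 1 (tangential part).} Write $\bm{u}_{\pi}=u^{\beta}\bm{g}_{\beta}$. Then
\begin{equation*}
\bm{\nabla}_{\pi}\bm{u}_{\pi}=\bm{g}^{\alpha}\frac{\partial}{\partial x^{\alpha}}\left(u^{\beta}\bm{g}_{\beta}\right)=\left(\frac{\partial u^{\beta}}{\partial x^{\alpha}}+\Gamma^{\beta}_{\alpha\lambda}u^{\lambda}\right)\bm{g}^{\alpha}\bm{g}_{\beta}+u^{\beta}b_{\alpha\beta}\bm{g}^{\alpha}\bm{n}.
\end{equation*}
By the definition \eqref{LC_def}, the first term is exactly $(\nabla_{\alpha}u^{\beta})\bm{g}^{\alpha}\bm{g}_{\beta}=\bm{\nabla}_{C}\bm{u}_{\pi}$. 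By the symmetry $b_{\alpha\beta}=b_{\beta\alpha}$, the second term is $(\bm{K}\bm{\cdot}\bm{u}_{\pi})\bm{n}$. This gives \eqref{qq0}, and the two projections in \eqref{eq320c} follow at once because $\bm{\nabla}_{C}\bm{u}_{\pi}$ is purely tangential.

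\textbf{Step 2 (normal part).} Next compute $\bm{\nabla}_{\pi}(u_n\bm{n})=(\bm{\nabla}_{\pi}u_n)\bm{n}+u_n\bm{\nabla}_{\pi}\bm{n}=(\bm{\nabla}_{\pi}u_n)\bm{n}-u_n\bm{K}$, using $\bm{K}=-\bm{\nabla}_{\pi}\bm{n}$. Adding this to Step 1 gives
\begin{equation*}
\bm{\nabla}_{\pi}\bm{u}=\bm{\nabla}_{C}\bm{u}_{\pi}-u_n\bm{K}+\left(\bm{K}\bm{\cdot}\bm{u}_{\pi}+\bm{\nabla}_{\pi}u_n\right)\bm{n}.
\end{equation*}
By \eqref{xi_decomp2}, the bracket equals $-\tfrac{1}{2}\bm{\xi}_{R}$, which yields \eqref{qq00}.

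\textbf{Step 3 (equivalent forms).} Contracting \eqref{qq00} on the right with $\bm{n}$ kills the tangential tensors $\bm{\nabla}_{C}\bm{u}_{\pi}$ and $\bm{K}$, leaving $-\tfrac{1}{2}\bm{\xi}_{R}$. Since $\bm{\xi}_{R}=2\bm{W}_{\rm eff}\times\bm{n}$, this equals $-\bm{W}_{\rm eff}\times\bm{n}$, consistent with \eqref{aa2}. The tangential projection $(\cdot)_{\pi}=(\cdot)\bm{\cdot}\bm{G}$ removes the dyad $\bm{\xi}_{R}\bm{n}$ and gives the first relation in \eqref{eq320a}.

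There is no real obstacle. The only points needing care are the index convention, since the surface gradient is a left gradient with $\bm{g}^{\alpha}$ as the first factor, and the check that the normal component $\bm{g}^{\alpha}\bm{n}$ carries $b_{\alpha\beta}u^{\beta}$. Matching the covariant derivative $\nabla_{\alpha}u^{\beta}$ to the contravariant-index rule stated after \eqref{LC_def} is the step I would verify most carefully.
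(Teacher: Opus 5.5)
Your proposal is correct and follows essentially the same route as the paper: split $\bm{u}=\bm{u}_{\pi}+u_n\bm{n}$, use the Gauss--Weingarten formulas to get $\bm{\nabla}_{\pi}\bm{u}_{\pi}=\bm{\nabla}_{C}\bm{u}_{\pi}+(\bm{K}\bm{\cdot}\bm{u}_{\pi})\bm{n}$, use $\bm{\nabla}_{\pi}\bm{n}=-\bm{K}$ for the normal part, and identify the bracket with $-\tfrac{1}{2}\bm{\xi}_{R}$ via \eqref{xi_decomp2}. The differences are only cosmetic: you expand in contravariant components $u^{\beta}\bm{g}_{\beta}$ where the paper uses covariant ones $u_{\beta}\bm{g}^{\beta}$, and you prove \eqref{qq0} first rather than recovering it at the end by setting $u_n=0$.
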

\begin{proof}[Proof of Theorem~{\upshape\ref{SLC1}}]	
On the base surface $\bm{\Sigma}$, the Levi-Civita gradient of $\bm{u}$ is evaluated as
\begin{eqnarray}\label{qq1}
	\bm{\nabla}_{C}\bm{u}&=&\bm{g}^{\alpha}{\nabla}_{\frac{\partial}{\partial x^{\alpha}}}\left(u_{\beta}\bm{g}^{\beta}+u_{n}\bm{n}\right)\nonumber\\
	&=&(\nabla_{\alpha}u_{\beta})\bm{g}^{\alpha}\bm{g}^{\beta}+\left(\bm{g}^{\alpha}\frac{\partial u_n}{\partial x^\alpha}\right)\bm{n}\nonumber\\
	&=&\bm{\nabla}_{C}\bm{u}_{\pi}+(\bm{\nabla}_{\pi}u_{n})\bm{n},
\end{eqnarray}
where the covariant derivative of the covariant velocity component is expressed as
\begin{equation}\label{qq2}
	\nabla_{\alpha}u_{\beta}=\frac{\partial u_\beta}{\partial x^{\alpha}}-\Gamma_{\alpha\beta}^{\gamma}u_{\gamma}.
\end{equation}

The surface tangential velocity gradient is expanded as
\begin{equation}\label{qq3}
	\bm{\nabla}_{\pi}\bm{u}=\bm{\nabla}_{\pi}\bm{u}_{\pi}+\bm{\nabla}_{\pi}(u_n\bm{n}).
\end{equation}
The first term is evaluated as
\begin{equation}\label{qq4}
	\bm{\nabla}_{\pi}\bm{u}_{\pi}=(\nabla_{\alpha}u_{\beta})\bm{g}^{\alpha}\bm{g}^{\beta}+u_{\beta}b_{\alpha}^{\beta}\bm{g}^{\alpha}\bm{n}=\bm{\nabla}_{C}\bm{u}_{\pi}+(\bm{K}\bm{\cdot}\bm{u}_{\pi})\bm{n}.
\end{equation}
The second term is evaluated as
\begin{equation}\label{qq5}
	\bm{\nabla}_{\pi}(u_n\bm{n})=(\bm{\nabla}_{\pi}u_{n})\bm{n}+u_{n}\bm{\nabla}_{\pi}\bm{n}=(\bm{\nabla}_{\pi}u_{n})\bm{n}-u_{n}\bm{K}.
\end{equation}
Substituting~\eqref{qq4} and~\eqref{qq5} into~\eqref{qq3} yields
\begin{eqnarray}\label{qq6}
	\bm{\nabla}_{\pi}\bm{u}=\bm{\nabla}_{C}\bm{u}_{\pi}+\left(\bm{K}\bm{\cdot}\bm{u}_{\pi}+\bm{\nabla}_{\pi}u_{n}\right)\bm{n}-u_{n}\bm{K}.
\end{eqnarray}
By using~\eqref{xi_decomp2},~\eqref{qq6} becomes
\begin{eqnarray}\label{qq7}
	\bm{\nabla}_{\pi}\bm{u}=\bm{\nabla}_{C}\bm{u}_{\pi}-\frac{1}{2}\bm{\xi}_{R}\bm{n}-u_{n}\bm{K}.
\end{eqnarray}
Letting $u_{n}=0$ in~\eqref{qq6} gives~\eqref{qq0}.
This completes the proof.
\end{proof}

\begin{theorem}\label{SLC2}
On the base surface $\bm{\Sigma}$, the full velocity gradient tensor $\bm{A}\equiv\bm{\nabla u}$ is decomposed as
\begin{subequations}
	\begin{equation}\label{qq8}
		\bm{A}=\vartheta_{n}\bm{n n}+\bm{n}\left(\frac{1}{2}\bm{\xi}_{R}+\bm{\xi}_{S}\right)+\bm{\nabla}_{\pi}\bm{u}.
	\end{equation}
Equivalently, we have the following representation
	\begin{equation}\label{qq9}
		\bm{A}=\vartheta_{n}\bm{n n}+\bm{n}\bm{\xi}_{S}+\mathscr{A}\left(\bm{n}\bm{\xi}_{R}\right)+\bm{\nabla}_{C}\bm{u}_{\pi}-u_{n}\bm{K}.
	\end{equation}
	where
\begin{equation}
	\mathscr{A}(\bm{n}\bm{\xi}_{R})\equiv\frac{1}{2}\left(\bm{n}\bm{\xi}_{R}-\bm{\xi}_{R}\bm{n}\right)=\frac{1}{2}\bm{n}\wedge\bm{\xi}_{R},~\left(\bm{\nabla}_{\pi}\bm{u}\right)_{\pi}=\bm{\nabla}_{C}\bm{u}_{\pi}-u_{n}\bm{K}.
\end{equation}
\end{subequations}
\end{theorem}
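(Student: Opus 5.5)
Split the full gradient along the identity $\bm{I}=\bm{G}+\bm{n}\bm{n}$ acting on the differentiation slot. By \eqref{sdo2}, on $\bm{\Sigma}$ we have $\bm{A}=\bm{\nabla}_{\pi}\bm{u}+\bm{n}\,\partial\bm{u}/\partial n$. The surface-gradient part $\bm{\nabla}_{\pi}\bm{u}$ is already expressed by Theorem~\ref{SLC1}. So everything reduces to writing the normal derivative $\partial\bm{u}/\partial n=\bm{n}\bm{\cdot}\bm{A}$ in terms of $\vartheta_n$, $\bm{\xi}_R$ and $\bm{\xi}_S$.

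Decompose $\bm{n}\bm{\cdot}\bm{A}$ into its normal component and its tangential part $\bm{G}\bm{\cdot}(\bm{n}\bm{\cdot}\bm{A})$. The normal component is $\bm{n}\bm{\cdot}\bm{A}\bm{\cdot}\bm{n}=\vartheta_n$ by definition, which gives $\vartheta_n\bm{n}$. For the tangential part, write $\bm{n}\bm{\cdot}\bm{A}=2\,\bm{n}\bm{\cdot}\bm{D}-\bm{A}\bm{\cdot}\bm{n}$, which follows from $\bm{A}^{\rm T}=2\bm{D}-\bm{A}$ and the symmetry of $\bm{D}$. Then use the vector identity $(\bm{n}\times\bm{v})\times\bm{n}=\bm{G}\bm{\cdot}\bm{v}$ for any vector $\bm{v}$ together with \eqref{decomp_vorticity2}:
\[
\bm{\xi}_S=\bm{S}\times\bm{n}=2\,\bm{G}\bm{\cdot}(\bm{D}\bm{\cdot}\bm{n}),\qquad \bm{\xi}_R=\bm{R}\times\bm{n}=-2\,\bm{G}\bm{\cdot}(\bm{A}\bm{\cdot}\bm{n}).
\]
Hence the tangential part equals $\bm{\xi}_S+\tfrac{1}{2}\bm{\xi}_R$. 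This step carries the real content: the $\tfrac12$ coefficient on $\bm{\xi}_R$ comes from splitting $\bm{A}^{\rm T}$ into $2\bm{D}-\bm{A}$. The main care needed is keeping the left-gradient convention straight, so that $\bm{n}\bm{\cdot}\bm{A}=\partial\bm{u}/\partial n$ while $\bm{A}\bm{\cdot}\bm{n}=\bm{\nabla}\bm{u}\bm{\cdot}\bm{n}$.

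Substituting back gives
\[
\bm{A}=\vartheta_n\bm{n}\bm{n}+\bm{n}\left(\tfrac12\bm{\xi}_R+\bm{\xi}_S\right)+\bm{\nabla}_{\pi}\bm{u},
\]
which is \eqref{qq8}.

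For \eqref{qq9}, insert \eqref{qq00} for $\bm{\nabla}_{\pi}\bm{u}$. Its contribution $-\tfrac12\bm{\xi}_R\bm{n}$ combines with the term $\tfrac12\bm{n}\bm{\xi}_R$ to give $\tfrac12(\bm{n}\bm{\xi}_R-\bm{\xi}_R\bm{n})=\mathscr{A}(\bm{n}\bm{\xi}_R)$. The remaining pieces are $\bm{\nabla}_C\bm{u}_{\pi}-u_n\bm{K}$, which equals $(\bm{\nabla}_{\pi}\bm{u})_{\pi}$ by \eqref{eq320a}.

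As a consistency check, the trace of \eqref{qq9} should be $\vartheta_n+\bm{\nabla}_{\pi}\bm{\cdot}\bm{u}_{\pi}-Ku_n=\vartheta_n+\vartheta_{\pi}$, matching \eqref{aa1}. Also, the antisymmetric part should reproduce $\bm{\omega}=\bm{R}+\bm{S}+\bm{\omega}_n$.
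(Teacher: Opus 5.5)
Your proof is correct and follows the paper's argument. You split off $\bm{n}\,\partial\bm{u}/\partial n$ via \eqref{sdo2}, rewrite $\bm{n}\bm{\cdot}\bm{A}$ in terms of $\bm{A}\bm{\cdot}\bm{n}$, and insert \eqref{qq00} to reach \eqref{qq9}. The only difference is that you use $\bm{A}^{\rm T}=2\bm{D}-\bm{A}$ and read off $\bm{\xi}_S$ and $\tfrac12\bm{\xi}_R$ directly from \eqref{decomp_vorticity2}, whereas the paper uses the antisymmetric part ($2\bm{n}\bm{\cdot}\bm{\varOmega}=\bm{\xi}=\bm{\xi}_R+\bm{\xi}_S$) and then subtracts $\tfrac12\bm{\xi}_R$ via \eqref{eq320a}. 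The two routes are equivalent and equally short.
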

\begin{proof}[Proof of Theorem~{\upshape\ref{SLC2}}] Using the identity between $\bm{\varOmega}$ and $\bm{\omega}$:
\begin{equation}
2\bm{n}\bm{\cdot}\bm{\varOmega}=\bm{\omega}\times\bm{n}=\bm{\xi},
\end{equation}
it follows that
\begin{eqnarray}\label{qq10}
	\bm{A}&=&\bm{\nabla}_{\pi}\bm{u}+\bm{n}\bm{n}\bm{\cdot}\bm{A}\nonumber\\
	&=&\bm{\nabla}_{\pi}\bm{u}+\bm{nn}\bm{\cdot}(\bm{A}-\bm{A}^{\rm T}+\bm{A}^{\rm T})\nonumber\\
	&=&\bm{\nabla}_{\pi}\bm{u}+\bm{n}\left(2\bm{n}\bm{\cdot}\bm{\varOmega}\right)+\bm{n}\left(\bm{A}\bm{\cdot}\bm{n}\right)\nonumber\\
	&=&\bm{\nabla}_{\pi}\bm{u}+\bm{n}\bm{\xi}+\bm{n}\left(\bm{\nabla}_{\pi}\bm{u}\bm{\cdot}\bm{n}+\vartheta_{n}\bm{n}\right)
\end{eqnarray}
Substituting~\eqref{qq00} into~\eqref{qq10} yields
\begin{eqnarray}\label{qq11}
	\bm{A}&=&\bm{\nabla}_{\pi}\bm{u}+\bm{n}\bm{\xi}+\bm{n}\left(-\frac{1}{2}\bm{\xi}_{R}+\vartheta_{n}\bm{n}\right)\nonumber\\
	&=&\bm{\nabla}_{\pi}\bm{u}+\bm{n}\bm{\xi}_{R}+\bm{n}\bm{\xi}_{S}+\bm{n}\left(-\frac{1}{2}\bm{\xi}_{R}+\vartheta_{n}\bm{n}\right)\nonumber\\
	&=&\vartheta_{n}\bm{nn}+\bm{n}\left(\frac{1}{2}\bm{\xi}_{R}+\bm{\xi}_{S}\right)+\bm{\nabla}_{\pi}\bm{u}.
\end{eqnarray}
Therefore,~\eqref{qq8} is proved. Combining~\eqref{qq00} and~\eqref{qq11} yields~\eqref{qq9}. This completes the proof.
\end{proof}
\begin{remark}
It follows directly from~\eqref{qq8} or~\eqref{qq11} that the strain-rate tensor on $\bm{\Sigma}$ is decomposed as
\begin{eqnarray}\label{eq3p24}
\bm{D}\equiv\mathscr{S}(\bm{A})&=&\vartheta_{n}\bm{nn}+\mathscr{S}\left(\bm{n}\left(\bm{\xi}-\frac{1}{2}\bm{\xi}_{R}\right)\right)+\mathscr{S}\left(\bm{\nabla}_{\pi}\bm{u}\right)\nonumber\\
&=&\vartheta_{n}\bm{nn}+\mathscr{S}\left(\bm{n}\left(\bm{\omega}_{\pi}-\bm{W}_{\rm eff}\right)\times\bm{n}\right)+\mathfrak{D}.
\end{eqnarray} 
Equation \eqref{eq3p24} was previously derived in the context of continuum mechanics on Riemannian manifolds \citep{Xie2013688}. The last term in \eqref{eq3p24} is identified as the surface strain-rate tensor~\citep{Xie2013SCPMA}: 
\begin{equation}
	\mathfrak{D}\equiv\mathscr{S}\left(\bm{\nabla}_{\pi}\bm{u}\right)=\frac{1}{2}(\bm{\nabla}_{\pi}\bm{u}+\bm{u}\bm{\nabla}_{\pi}).
\end{equation}
The remaining terms collectively constitute the relative strain-rate tensor on a deformable surface~\citep{Chen2016Vorticity,Xie2020theory}:
\begin{equation}
\bm{D}-\mathfrak{D}=\vartheta_{n}\bm{nn}+\mathscr{S}\left(\bm{n}\left(\bm{\omega}_{\pi}-\bm{W}_{\rm eff}\right)\times\bm{n}\right).
\end{equation}
\end{remark}
\begin{remark}
From~\eqref{qq00} and~\eqref{eq320a}, we have
\begin{equation}\label{re1}
\bm{\nabla}_{\pi}\bm{u}=\left(\bm{\nabla}_{\pi}\bm{u}\right)_{\pi}-\frac{1}{2}\bm{\xi}_{R}\bm{n}.
\end{equation}
Substituting~\eqref{re1} into the first equality in~\eqref{qq11} gives
\begin{eqnarray}\label{re3p25}
\bm{A}&=&\vartheta_{n}\bm{nn}+\bm{n}\bm{\xi}-\mathscr{S}(\bm{n}\bm{\xi}_{R})+\left(\bm{\nabla}_{\pi}\bm{u}\right)_{\pi}\nonumber\\
&=&\vartheta_{n}\bm{nn}+\bm{n}(\bm{\omega}\times\bm{n})
-\left[\bm{n}(\bm{W}_{\rm eff}\times\bm{n})+(\bm{W}_{\rm eff}\times\bm{n})\bm{n}\right]+\left(\bm{\nabla}_{\pi}\bm{u}\right)_{\pi}.
\end{eqnarray}
Equation~\eqref{re3p25} recovers the result in~\citet[Eq. (3.5)]{WuJZ2005JFM}.
\end{remark}
\begin{remark}
Evaluating the symmetric part of~\eqref{qq9} yields
\begin{eqnarray}\label{CW}
		\bm{D}\equiv\mathscr{S}(\bm{A})
		&=&\mathscr{S}(\vartheta_{n}\bm{n n})+\mathscr{S}(\bm{n}\bm{\xi}_{S})+\mathscr{S}\circ\mathscr{A}\left(\bm{n}\bm{\xi}_{R}\right)+\mathscr{S}\left(\left(\bm{\nabla}_{\pi}\bm{u}\right)_{\pi}\right)\nonumber\\
		&=&\vartheta_{n}\bm{nn}+\mathscr{S}(\bm{n}\bm{\xi}_{S})+\mathscr{S}\left(\left(\bm{\nabla}_{\pi}\bm{u}\right)_{\pi}\right),
\end{eqnarray}
where the last term can be further expanded as
\begin{equation}\label{CW1}
	\mathscr{S}\left(\left(\bm{\nabla}_{\pi}\bm{u}\right)_{\pi}\right)=\mathscr{S}\left(\bm{\nabla}_{C}\bm{u}_{\pi}\right)-u_{n}\bm{K}.
\end{equation}
Equation~\eqref{CW} is referred to as the generalized Caswell formula, or the Caswell-Wu formula~\citep{WuJZ2005JFM}. It should be noted that the invariant representation~\eqref{CW1} based on the Levi-Civita gradient operator, was not explicitly given in~\citet{WuJZ2005JFM}; in that reference, only the component form under a surface-orthogonal curvilinear coordinate system was reported. In the special case of flow past a stationary rigid no-slip wall, $\bm{\omega}=\bm{S}$ and $\bm{R}=\bm{0}$ hold on $\bm{\Sigma}$, so that the Caswell formula follows straightforwardly as~\citep{Caswell1967}
\begin{eqnarray}
	\bm{D}=\vartheta_{n}\bm{nn}+\mathscr{S}(\bm{n\xi})=\vartheta_{n}\bm{nn}+\frac{1}{2}\bm{n}(\bm{\omega}\times\bm{n})+\frac{1}{2}(\bm{\omega}\times\bm{n})\bm{n}.
\end{eqnarray}
Evaluating the anti-symmetric pat of~\eqref{qq9} yields
\begin{eqnarray}
	\bm{\Omega}\equiv\mathscr{A}(\bm{A})&=&\mathscr{A}(\vartheta_{n}\bm{n n})+\mathscr{A}(\bm{n}\bm{\xi}_{S})+\mathscr{A}\circ\mathscr{A}\left(\bm{n}\bm{\xi}_{R}\right)+\mathscr{A}(\left(\bm{\nabla}_{\pi}\bm{u}\right)_{\pi})\nonumber\\
	&=&\mathscr{A}(\bm{n}\bm{\xi}_{S})+\mathscr{A}(\bm{n}\bm{\xi}_{R})+\mathscr{A}(\left(\bm{\nabla}_{\pi}\bm{u}\right)_{\pi})\nonumber\\
	&=&\mathscr{A}(\bm{n}\bm{\xi})+\mathscr{A}(\left(\bm{\nabla}_{\pi}\bm{u}\right)_{\pi}),
\end{eqnarray}
where the last term can be expressed as
\begin{eqnarray}
	\mathscr{A}(\left(\bm{\nabla}_{\pi}\bm{u}\right)_{\pi})&=&\mathscr{A}(\bm{\nabla}_{C}\bm{u}_{\pi})=*\left(\frac{1}{2}\bm{\omega}_{n}\right)\nonumber\\
	&=&\frac{1}{2}(\nabla_{\alpha}u_{\beta})\bm{g}^{\alpha}\wedge\bm{g}^{\beta}=\frac{1}{2}\left(\nabla_{\alpha}u_{\beta}-\nabla_{\beta}u_{\alpha}\right)\bm{g}^{\alpha}\bm{g}^{\beta}.
\end{eqnarray}
\end{remark}

\section{Several theorems of fluid kinematics on a deformable boundary}\label{xxxx1}
We now establish several theorems concerning fluid kinematics on a deformable boundary $\bm{\Sigma}$.
\begin{theorem}\label{thm1}
	For any tangent vector field $\bm{a}\in T\Sigma$ on the base surface $\bm{\Sigma}$, it holds that
	\begin{equation}\label{id1}
		\bm{n}\bm{\times}\left(\bm{K}\bm{\cdot}\bm{c}\right)=K\bm{a}-\bm{K}\bm{\cdot}\bm{a},
	\end{equation}
where $\bm{a}$ and $\bm{c}$ constitute an orthogonal vector pair ($\bm{c}=\bm{a}\times\bm{n}$ and $\bm{a}=\bm{n}\times\bm{c}$). 	By the first equality in~\eqref{713_ab},~\eqref{id1} can be equivalently formulated as
\begin{equation}\label{id1_equivalent}
	\bm{n}\times\left(\bm{K}\bm{\cdot}\bm{c}\right)=\hat{\bm{K}}\bm{\cdot}(\bm{n}\times\bm{c})=\hat{\bm{K}}\bm{\cdot}\bm{a}.
\end{equation}
\end{theorem}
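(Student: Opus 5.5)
The plan is to work pointwise in a principal-direction orthonormal frame and then lift the result to an invariant statement using linearity and the Cayley--Hamilton identity~\eqref{third}.

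\textbf{Frame.} At a fixed point $P\in\bm{\Sigma}$, take orthonormal principal directions $\bm{e}_1,\bm{e}_2\in T_P\Sigma$ with $\bm{e}_1\times\bm{e}_2=\bm{n}$. This is possible because $\bm{K}$ is symmetric and tangential. In this frame
\[
\bm{K}=\lambda_1\bm{e}_1\bm{e}_1+\lambda_2\bm{e}_2\bm{e}_2,\qquad K=\lambda_1+\lambda_2 .
\]
Since both sides of~\eqref{id1} are linear in $\bm{a}$, it suffices to check the identity on $\bm{a}=\bm{e}_1$ and $\bm{a}=\bm{e}_2$. The relation $\bm{c}=\bm{a}\times\bm{n}$ gives $\bm{c}=\bm{a}\times\bm{n}$ itself, a rotation by $-\pi/2$ in the tangent plane, and $\bm{n}\times\bm{c}=\bm{n}\times(\bm{a}\times\bm{n})=\bm{a}$ because $\bm{a}\perp\bm{n}$. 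This confirms the two stated forms of the pair are consistent.

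\textbf{First basis vector.} For $\bm{a}=\bm{e}_1$:
\[
\bm{c}=\bm{e}_1\times\bm{n}=-\bm{e}_2,\qquad \bm{K}\bm{\cdot}\bm{c}=-\lambda_2\bm{e}_2,\qquad \bm{n}\times(-\lambda_2\bm{e}_2)=\lambda_2\bm{e}_1 .
\]
The right-hand side is $K\bm{e}_1-\lambda_1\bm{e}_1=\lambda_2\bm{e}_1$, so the two sides agree.

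\textbf{Second basis vector.} For $\bm{a}=\bm{e}_2$:
\[
\bm{c}=\bm{e}_2\times\bm{n}=\bm{e}_1,\qquad \bm{n}\times(\lambda_1\bm{e}_1)=\lambda_1\bm{e}_2,
\]
which equals $K\bm{e}_2-\lambda_2\bm{e}_2$. Linearity then gives~\eqref{id1} for every tangent $\bm{a}$ at $P$. Since $P$ is arbitrary, the identity holds on all of $\bm{\Sigma}$.

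\textbf{Invariant form.} For a coordinate-free version, write $\bm{n}\times(\cdot)$ on $T\Sigma$ as the rotation tensor $\bm{J}=-\bm{n}\times\bm{G}$, so that $\bm{c}=-\bm{J}\bm{\cdot}\bm{a}$. The claim becomes $-\bm{J}\bm{\cdot}\bm{K}\bm{\cdot}\bm{J}\bm{\cdot}\bm{a}=(K\bm{G}-\bm{K})\bm{\cdot}\bm{a}$. This is the standard two-dimensional fact that conjugating a symmetric tensor by a quarter-turn gives its adjugate, $\mathrm{tr}(\bm{K})\bm{G}-\bm{K}$. That fact is exactly what underlies the conjugate tensor of~\citet{Yin2008new}.

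\textbf{Equivalent formulation.} Equation~\eqref{id1_equivalent} follows immediately. By the first relation in~\eqref{713_ab}, $K\bm{a}-\bm{K}\bm{\cdot}\bm{a}=\hat{\bm{K}}\bm{\cdot}\bm{a}$, and we already have $\bm{a}=\bm{n}\times\bm{c}$.

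\textbf{Main obstacle.} The only real difficulty is keeping the orientation and sign conventions straight, i.e. which way $\bm{c}=\bm{a}\times\bm{n}$ rotates. Verifying both basis cases explicitly, rather than relying on symmetry, guards against a sign error.
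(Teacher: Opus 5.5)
Your argument is correct, but it takes a different route from the paper.

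\textbf{The paper's route.} The paper never picks a special frame. It writes $\bm{a}=a_\alpha\bm{g}^\alpha$ and $\bm{c}=a_\alpha\varepsilon^{\alpha3\beta}\bm{g}_\beta$ in arbitrary surface coordinates. It then applies the two-dimensional contraction identity $\varepsilon^{3\beta\alpha}\varepsilon^{3\gamma\lambda}=g^{\beta\gamma}g^{\alpha\lambda}-g^{\beta\lambda}g^{\alpha\gamma}$ and reads off $b^\beta_\beta\bm{a}-\bm{K}\bm{\cdot}\bm{a}$ directly. The symmetry of $\bm{K}$ enters only tacitly, in the last step, where $\bm{a}\bm{\cdot}\bm{K}$ is identified with $\bm{K}\bm{\cdot}\bm{a}$.

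\textbf{Your route.} You diagonalize $\bm{K}$ pointwise in an orthonormal principal frame, check the two basis vectors, and extend by linearity. Symmetry enters explicitly here, through the spectral theorem. Your sign bookkeeping is right: $\bm{e}_1\times\bm{n}=-\bm{e}_2$, $\bm{e}_2\times\bm{n}=\bm{e}_1$ and $\bm{n}\times\bm{e}_2=-\bm{e}_1$. Umbilic points cause no difficulty.

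\textbf{What each buys.} The paper's computation is frame-free and uses the same index machinery that is reused later, for instance the $\varepsilon^{3\alpha\gamma}b^\beta_\alpha$ contractions in Theorem~\ref{thm3}. Yours is more elementary and makes the orientation conventions checkable. It also exposes the structural content: nothing curvature-specific is used, so the identity holds for any symmetric tangential 2-tensor. Conjugating such a tensor by the quarter-turn gives its adjugate $\mathrm{tr}(\bm{K})\bm{G}-\bm{K}$. Hence $\hat{\bm{K}}$ is simply $\mathrm{adj}(\bm{K})$ on $T\Sigma$. This makes $\hat{\bm{K}}\bm{\cdot}\bm{K}=K_G\bm{G}$ immediate and shows that no assumption $K_G\neq0$ is needed.

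\textbf{Two cosmetic points.} The Cayley--Hamilton identity~\eqref{third} announced in your plan never actually enters the proof. The overall sign you attach to $\bm{J}$ does not matter, because $-\bm{J}\bm{\cdot}\bm{K}\bm{\cdot}\bm{J}$ is even in $\bm{J}$.
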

\begin{proof}[Proof of Theorem~{\upshape\ref{thm1}}]
Under the surface coordinate system $\bm{x}=(x^1,x^2)$, we have $\bm{a}={a}_{\alpha}\bm{g}^{\alpha}$ and $\bm{c}=a_{\alpha}\varepsilon^{\alpha3\beta}\bm{g}_{\beta}$, where $\bm{\mathcal{E}}\equiv\varepsilon^{3\alpha\beta}\bm{g}_{\alpha}\bm{g}_{\beta}$ is the surface Levi-Civita permutation tensor. The surface curvature tensor is expressed as $\bm{K}=b_{\gamma\lambda}\bm{g}^{\gamma}\bm{g}^{\lambda}$. It follows that
\begin{eqnarray}
	\bm{n}\bm{\times}\left(\bm{K}\bm{\cdot}\bm{c}\right)&=&	\bm{n}\bm{\times}\left({a}_{\alpha}b_{\gamma\beta}\varepsilon^{\alpha3\beta}\bm{g}^{\gamma}\right)\nonumber\\
	&=&{a}_{\alpha}b_{\gamma\beta}\varepsilon^{3\beta\alpha}\varepsilon^{3\gamma\lambda}\bm{g}_{\lambda}\nonumber\\
	&=&{a}_{\alpha}b_{\gamma\beta}\left(g^{\beta\gamma}g^{\alpha\lambda}-g^{\beta\lambda}g^{\alpha\gamma}\right)\bm{g}^{\lambda}\nonumber\\
	&=&{a}_{\alpha}b_{\gamma\beta}g^{\beta\gamma}g^{\alpha\lambda}\bm{g}_{\lambda}-{a}_{\alpha}b_{\gamma\beta}g^{\beta\lambda}g^{\alpha\gamma}\bm{g}_{\lambda}\nonumber\\
	&=&b_{\beta}^{\beta}\left(a_\alpha\bm{g}^{\alpha}\right)-{a}_{\alpha}b_{\beta}^{\alpha}\bm{g}^{\beta}\nonumber\\
	&=&K\bm{a}-\bm{K}\bm{\cdot}\bm{a}\nonumber\\
	&=&\hat{\bm{K}}\bm{\cdot}\bm{a}=\hat{\bm{K}}\bm{\cdot}(\bm{n}\times\bm{c}).
\end{eqnarray}
This completes the proof.
\end{proof}
\begin{remark}
	Taking the dot product of $\bm{a}$ with \eqref{id1_equivalent} yields transformation identities among quadratic forms:
	\begin{equation}\label{ff1}
		\bm{c}\bm{\cdot}\bm{K}\bm{\cdot}\bm{c}=\bm{a}\bm{\cdot}\hat{\bm{K}}\bm{\cdot}\bm{a},~~\bm{c}\bm{\cdot}\hat{\bm{K}}\bm{\cdot}\bm{c}=	\bm{a}\bm{\cdot}{\bm{K}}\bm{\cdot}\bm{a};
	\end{equation}
	Due to symmetry, it readily follows that
	\begin{equation}\label{ff2}
	\bm{c}\bm{\cdot}\bm{K}\bm{\cdot}\bm{c}+\bm{a}\bm{\cdot}\bm{K}\bm{\cdot}\bm{a}=Kc^2=Ka^2.
	\end{equation}
	Moreover, for orthogonal pairs $(\bm{c}_{i},\bm{a}_{i})~(i=1,2)$ where $\bm{c}_{i}=\bm{a}_{i}\times\bm{n}$, we have
	\begin{equation}\label{ff3}
	\bm{c}_{1}\bm{\cdot}\bm{K}\bm{\cdot}\bm{c}_{2}=\bm{a}_{1}\bm{\cdot}\hat{\bm{K}}\bm{\cdot}\bm{a}_{2},~~\bm{c}_{1}\bm{\cdot}\hat{\bm{K}}\bm{\cdot}\bm{c}_{2}=	\bm{a}_{1}\bm{\cdot}{\bm{K}}\bm{\cdot}\bm{a}_{2}.
\end{equation}
\end{remark}

\begin{theorem}\label{thm2}
On the base surface $\bm{\Sigma}$, the wall-normal derivative of the full velocity gradient tensor $\bm{A}\equiv\bm{\nabla u}$ can be expressed as
\begin{subequations}
\begin{equation}\label{hh1}
	\left[\frac{\partial\bm{A}}{\partial n}\right]_{\Sigma}=\bm{K}\bm{\cdot}\bm{\nabla}_{\pi}\bm{u}+\bm{\nabla}_{\pi}\left[\frac{\partial\bm{u}}{\partial n}\right]_{\Sigma}+\bm{n}\left[\frac{\partial^2\bm{u}}{\partial n^2}\right]_{\Sigma},
\end{equation}
where the first two terms on the right-hand side are, respectively,
\begin{eqnarray}\label{hhh1}
\bm{K}\bm{\cdot}\bm{\nabla}_{\pi}\bm{u}&=&\bm{K}\bm{\cdot}\bm{\nabla}_{C}\bm{u}_{\pi}-u_{n}\bm{K}^2-\frac{1}{2}(\bm{K}\bm{\cdot}\bm{\xi}_{R})\bm{n}\nonumber\\
&=&\bm{K}\bm{\cdot}\bm{\nabla}_{C}\bm{u}_{\pi}-u_{n}\left(K\bm{K}-K_{G}\bm{G}\right)\nonumber\\
& &+\left(K\bm{K}\bm{\cdot}\bm{u}_{\pi}-K_{G}\bm{u}_{\pi}+\bm{K}\bm{\cdot}\bm{\nabla}_{\pi}u_{n}\right)\bm{n},
\end{eqnarray}
\begin{equation}\label{hh2}
\bm{\nabla}_{\pi}\left[\frac{\partial\bm{u}}{\partial n}\right]_{\Sigma}=	\left(\bm{\nabla}_{\pi}\vartheta_{n}\right)\bm{n}-\vartheta_{n}\bm{K}+\bm{\nabla}_{\pi}\left(\frac{1}{2}\bm{\xi}_{R}+\bm{\xi}_{S}\right).
\end{equation}
\end{subequations}
The effects of mean curvature and Gaussian curvature are already reflected in~\eqref{hhh1}.
\end{theorem}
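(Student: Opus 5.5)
The plan is to work in the semi-orthogonal coordinates $(\bm{x},\zeta)$ of the normal extension, writing $\bm{A}=\bm{\nabla}\bm{u}=\bar{\bm{g}}^{\alpha}\partial_{\alpha}\bm{u}+\bm{n}\,\partial_{\zeta}\bm{u}$ throughout $\mathscr{U}$ and differentiating in $\zeta$ before restricting to $\zeta=0$. The only position-dependent pieces are the contravariant basis vectors $\bar{\bm{g}}^{\alpha}(\bm{x},\zeta)$, because $\bar{\bm{g}}^{3}=\bm{n}$ is independent of $\zeta$.

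\textbf{Step 1: differentiate the contravariant basis.} Inverting~\eqref{eq26a} to first order in $\zeta$ gives $\bar{\bm{g}}^{\alpha}=(\delta^{\alpha}_{\beta}+\zeta b^{\alpha}_{\beta})\bm{g}^{\beta}+O(\zeta^2)$. Hence $[\partial_{\zeta}\bar{\bm{g}}^{\alpha}]_{\Sigma}=b^{\alpha}_{\beta}\bm{g}^{\beta}$.

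\textbf{Step 2: assemble~\eqref{hh1}.} Since $\partial_{\zeta}$ and $\partial_{\alpha}$ commute, we get
\begin{equation*}
\left[\frac{\partial\bm{A}}{\partial n}\right]_{\Sigma}=b^{\alpha}_{\beta}\bm{g}^{\beta}\partial_{\alpha}\bm{u}+\bm{g}^{\alpha}\partial_{\alpha}\left[\frac{\partial\bm{u}}{\partial n}\right]_{\Sigma}+\bm{n}\left[\frac{\partial^2\bm{u}}{\partial n^2}\right]_{\Sigma}.
\end{equation*}
The first term equals $\bm{K}\bm{\cdot}\bm{\nabla}_{\pi}\bm{u}$, because $\bm{K}\bm{\cdot}\bm{g}^{\alpha}=b^{\alpha}_{\beta}\bm{g}^{\beta}$ by symmetry of $b$. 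This yields~\eqref{hh1} directly.

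\textbf{Step 3: expand $\bm{K}\bm{\cdot}\bm{\nabla}_{\pi}\bm{u}$, giving~\eqref{hhh1}.} Left-multiply~\eqref{qq00} by $\bm{K}$. The first line of~\eqref{hhh1} follows at once. For the second line:
\begin{itemize}
\item Replace $\bm{K}^2$ by $K\bm{K}-K_G\bm{G}$ using the Cayley--Hamilton identity~\eqref{third}.
\item Substitute $\bm{\xi}_{R}=-2(\bm{K}\bm{\cdot}\bm{u}_{\pi}+\bm{\nabla}_{\pi}u_n)$ from~\eqref{xi_decomp2}.
\item Apply~\eqref{third} again to $\bm{K}^2\bm{\cdot}\bm{u}_{\pi}$, using $\bm{G}\bm{\cdot}\bm{u}_{\pi}=\bm{u}_{\pi}$.
\end{itemize}
This produces the bracket $K\bm{K}\bm{\cdot}\bm{u}_{\pi}-K_G\bm{u}_{\pi}+\bm{K}\bm{\cdot}\bm{\nabla}_{\pi}u_n$ multiplying $\bm{n}$.

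\textbf{Step 4: derive~\eqref{hh2}.} I first identify $[\partial\bm{u}/\partial n]_{\Sigma}=\bm{n}\bm{\cdot}\bm{A}$. By~\eqref{qq8}, using $\bm{n}\bm{\cdot}\bm{\nabla}_{\pi}\bm{u}=\bm{0}$ (since $\bm{n}\bm{\cdot}\bm{g}^{\alpha}=0$), this is $\vartheta_n\bm{n}+\tfrac12\bm{\xi}_{R}+\bm{\xi}_{S}$. Applying $\bm{\nabla}_{\pi}$ with the product rule and $\bm{\nabla}_{\pi}\bm{n}=-\bm{K}$ gives $(\bm{\nabla}_{\pi}\vartheta_n)\bm{n}-\vartheta_n\bm{K}+\bm{\nabla}_{\pi}(\tfrac12\bm{\xi}_{R}+\bm{\xi}_{S})$, which is~\eqref{hh2}.

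\textbf{Main obstacle.} The delicate point is Step 1–2: making sure the derivative of $\bar{\bm{g}}^{\alpha}$ carries the sign $+\zeta b^{\alpha}_{\beta}$, and that contracting $\bm{K}$ from the left reproduces $b^{\alpha}_{\beta}\bm{g}^{\beta}\partial_{\alpha}$ and not a transposed ordering. I would check this against a sphere, where $b^{\alpha}_{\beta}\propto\delta^{\alpha}_{\beta}$. Everything else is algebra using results already stated.
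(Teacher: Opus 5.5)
Your proposal is correct and follows the paper's proof essentially step for step: you use the same expansion $\bm{A}=\bar{\bm{g}}^{\alpha}\partial_{\alpha}\bm{u}+\bm{n}\,\partial_{\zeta}\bm{u}$, the same use of \eqref{qq00}, \eqref{third} and \eqref{xi_decomp2} for \eqref{hhh1}, and the same use of \eqref{qq8} with $\bm{\nabla}_{\pi}\bm{n}=-\bm{K}$ for \eqref{hh2}. The only cosmetic difference is how you obtain $[\partial_{\zeta}\bar{\bm{g}}^{\alpha}]_{\Sigma}=b^{\alpha}_{\beta}\bm{g}^{\beta}$: you invert $(\delta^{\beta}_{\alpha}-\zeta b^{\beta}_{\alpha})$ to first order, whereas the paper differentiates the duality relations $\bar{\bm{g}}^{\alpha}\bm{\cdot}\bar{\bm{g}}_{\beta}=\delta^{\alpha}_{\beta}$ and $\bar{\bm{g}}^{\alpha}\bm{\cdot}\bm{n}=0$; both give the same result.
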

\begin{proof}[Proof of Theorem~{\upshape\ref{thm2}}] In the neighborhood $\mathscr{U}\subset\mathbb{R}^3$ of the base surface $\bm{\Sigma}$, $\bm{A}$ can be expanded as
	\begin{eqnarray}
		\bm{A}\equiv\bm{\nabla u}=\bar{\bm{g}}^{i}\frac{\partial\bm{u}}{\partial x^i}=\bar{\bm{g}}^{\alpha}\frac{\partial\bm{u}}{\partial x^\alpha}+\bm{n}\frac{\partial\bm{u}}{\partial\zeta}.
	\end{eqnarray}
Then, applying the Leibniz rule to evaluate the wall-normal derivative of $\bm{A}$ gives
	\begin{eqnarray}\label{ss1}
		\frac{\partial\bm{A}}{\partial\zeta}&=&\frac{\partial}{\partial\zeta}\left(\bar{\bm{g}}^{\alpha}\frac{\partial\bm{u}}{\partial x^\alpha}+\bm{n}\frac{\partial\bm{u}}{\partial\zeta}\right)\nonumber\\
		&=&\frac{\partial\bar{\bm{g}}^{\alpha}}{\partial\zeta}\frac{\partial\bm{u}}{\partial x^\alpha}+\bar{\bm{g}}^{\alpha}\frac{\partial}{\partial\zeta}\left(\frac{\partial\bm{u}}{\partial x^\alpha}\right)+\frac{\partial\bm{n}}{\partial\zeta}\frac{\partial\bm{u}}{\partial\zeta}+\bm{n}\frac{\partial^2\bm{u}}{\partial\zeta^2}\nonumber\\
		&=&\frac{\partial\bar{\bm{g}}^{\alpha}}{\partial\zeta}\frac{\partial\bm{u}}{\partial x^\alpha}+\bar{\bm{g}}^{\alpha}\frac{\partial}{\partial x^\alpha}\left(\frac{\partial\bm{u}}{\partial\zeta}\right)+\bm{n}\frac{\partial^2\bm{u}}{\partial\zeta^2}\nonumber\\
		&=&\frac{\partial\bar{\bm{g}}^{\alpha}}{\partial\zeta}\frac{\partial\bm{u}}{\partial x^\alpha}+\overline{\bm{\nabla}}_{\pi}\left(\frac{\partial\bm{u}}{\partial\zeta}\right)+\bm{n}\frac{\partial^2\bm{u}}{\partial\zeta^2}.
		\end{eqnarray}
		
		On the base surface $\bm{\Sigma}$,~\eqref{ss1} reduces to
	\begin{eqnarray}\label{ss2}
	\left[\frac{\partial\bm{A}}{\partial\zeta}\right]_{\Sigma}=\left[\frac{\partial\bar{\bm{g}}^{\alpha}}{\partial\zeta}\right]_{\Sigma}\left[\frac{\partial\bm{u}}{\partial x^\alpha}\right]_{\Sigma}+{\bm{\nabla}}_{\pi}\left[\frac{\partial\bm{u}}{\partial\zeta}\right]_{\Sigma}+\bm{n}\left[\frac{\partial^2\bm{u}}{\partial\zeta^2}\right]_{\Sigma}.
	\end{eqnarray}
On the right-hand side of~\eqref{ss2}, the wall-normal derivative of ${\bar{\bm{g}}^{\alpha}}$ in the first term is evaluated as
\begin{eqnarray}\label{ss3}
\frac{\partial\bar{\bm{g}}^{\alpha}}{\partial\zeta}=\left(\frac{\partial\bar{\bm{g}}^{\alpha}}{\partial\zeta}\bm{\cdot}\bar{\bm{g}}_{\beta}\right)\bar{\bm{g}}^{\beta}+\left(\frac{\partial\bar{\bm{g}}^{\alpha}}{\partial\zeta}\bm{\cdot}\bm{n}\right)\bm{n}.
\end{eqnarray}
Applying the Leibniz rule with respect to the normal derivative $\partial_{\zeta}$ yields
\begin{subequations}\label{ss4ab}
\begin{eqnarray}\label{ss4a}
	\frac{\partial\bar{\bm{g}}^{\alpha}}{\partial\zeta}\bm{\cdot}\bar{\bm{g}}_{\beta}=\frac{\partial}{\partial\zeta}\left(\bar{\bm{g}}^{\alpha}\bm{\cdot}\bar{\bm{g}}_{\beta}\right)-\bar{\bm{g}}^{\alpha}\bm{\cdot}\frac{\partial\bar{\bm{g}}_{\beta}}{\partial\zeta}=\frac{\partial\delta_{\beta}^{\alpha}}{\partial\zeta}-\bar{\bm{g}}^{\alpha}\bm{\cdot}\frac{\partial\bar{\bm{g}}_{\beta}}{\partial\zeta}=-\bar{\bm{g}}^{\alpha}\bm{\cdot}\frac{\partial\bar{\bm{g}}_{\beta}}{\partial\zeta},
\end{eqnarray}
\begin{eqnarray}\label{ss4b}
\frac{\partial\bar{\bm{g}}^{\alpha}}{\partial\zeta}\bm{\cdot}\bm{n}=\frac{\partial}{\partial\zeta}\left(\bar{\bm{g}}^{\alpha}\bm{\cdot}\bm{n}\right)-\bar{\bm{g}}^{\alpha}\bm{\cdot}\frac{\partial\bm{n}}{\partial\zeta}=0.
\end{eqnarray}
\end{subequations}
On the base surface $\bm{\Sigma}$, by virtue of~\eqref{eq26a},~\eqref{ss4a} and~\eqref{ss4b},~\eqref{ss3} simplifies to
\begin{eqnarray}\label{ss5}
\left[\frac{\partial\bar{\bm{g}}^{\alpha}}{\partial\zeta}\right]_{\Sigma}=\left(-{\bm{g}}^{\alpha}\bm{\cdot}\left[\frac{\partial\bar{\bm{g}}_{\beta}}{\partial\zeta}\right]_{\Sigma}\right){\bm{g}}^{\beta}=b_{\beta}^{\alpha}\bm{g}^{\beta}=\bm{K}\bm{\cdot}\bm{g}^{\alpha}.
\end{eqnarray}
Therefore, we obtain
\begin{eqnarray}\label{ss6}
	\left[\frac{\partial\bar{\bm{g}}^{\alpha}}{\partial\zeta}\right]_{\Sigma}\left[\frac{\partial\bm{u}}{\partial x^\alpha}\right]_{\Sigma}=\bm{K}\bm{\cdot}\bm{g}^{\alpha}\frac{\partial\bm{u}}{\partial x^\alpha}=\bm{K}\bm{\cdot}\bm{\nabla}_{\pi}\bm{u}.
\end{eqnarray}
Note that $\bm{\nabla}_{\pi}\bm{u}$ in~\eqref{ss6} can be replaced by~\eqref{qq00}, the subsequent application of~\eqref{third} and~\eqref{xi_decomp2} yields~\eqref{hhh1}. It is noted that 
\begin{eqnarray}
\bm{K}\bm{\cdot}\bm{\xi}_{R}
&=&-2(\bm{K}^2\bm{\cdot}\bm{u}_{\pi}+\bm{K}\bm{\cdot}\bm{\nabla}_{\pi}u_{n})\nonumber\\
&=&-2((K\bm{K}-K_{G}\bm{G})\bm{\cdot}\bm{u}_{\pi}+\bm{K}\bm{\cdot}\bm{\nabla}_{\pi}u_{n})\nonumber\\
&=&-2(K\bm{K}\bm{\cdot}\bm{u}_{\pi}-K_{G}\bm{u}_{\pi}+\bm{K}\bm{\cdot}\bm{\nabla}_{\pi}u_{n}).
\end{eqnarray}

By applying~\eqref{qq8}, the second term on the right-hand side of~\eqref{ss2} evaluates to
\begin{eqnarray}
{\bm{\nabla}}_{\pi}\left[\frac{\partial\bm{u}}{\partial\zeta}\right]_{\Sigma}&=&\bm{\nabla}_{\pi}(\bm{n}\bm{\cdot}\bm{A})\nonumber\\
&=&\bm{\nabla}_{\pi}\left[\vartheta_{n}\bm{n}+\left(\frac{1}{2}\bm{\xi}_{R}+\bm{\xi}_{S}\right)\right]\nonumber\\
&=&\left(\bm{\nabla}_{\pi}\vartheta_{n}\right)\bm{n}+\vartheta_{n}\left(\bm{\nabla}_{\pi}\bm{n}\right)+\bm{\nabla}_{\pi}\left(\frac{1}{2}\bm{\xi}_{R}+\bm{\xi}_{S}\right)\nonumber\\
&=&\left(\bm{\nabla}_{\pi}\vartheta_{n}\right)\bm{n}-\vartheta_{n}\bm{K}+\bm{\nabla}_{\pi}\left(\frac{1}{2}\bm{\xi}_{R}+\bm{\xi}_{S}\right).
\end{eqnarray}
This completes the proof.
\end{proof}

\begin{theorem}\label{thm3}
On the base surface $\bm{\Sigma}$, the wall-normal derivative of the tangential and normal vorticity components $(\bm{\omega}_{\pi},{\omega}_{n})$ can be expressed as
\begin{subequations}
		\begin{eqnarray}\label{pp1}
		\left[\frac{\partial\bm{\omega}_{\pi}}{\partial n}\right]_{\Sigma}&=&-\bm{n}\times\bm{\nabla}_{\pi}\vartheta_{n}+\bm{K}\bm{\cdot}\bm{S}-K\bm{S}+\bm{n}\times\left[\frac{\partial^2\bm{u}_{\pi}}{\partial n^2}\right]_{\Sigma}\nonumber\\
		&=&-\bm{n}\times\bm{\nabla}_{\pi}\vartheta_{n}-\hat{\bm{K}}\bm{\cdot}\bm{S}+\bm{n}\times\left[\frac{\partial^2\bm{u}_{\pi}}{\partial n^2}\right]_{\Sigma},
	\end{eqnarray}
	\begin{eqnarray}\label{pp2}	\left[\frac{\partial{\omega}_{n}}{\partial n}\right]_{\Sigma}&=&\bm{\mathcal{E}}\bm{:}(\bm{K}\bm{\cdot}\bm{\nabla}_{C}\bm{u}_{\pi})-\bm{\nabla}_{\pi}\bm{\cdot}\left(\frac{1}{2}\bm{R}+\bm{S}\right)\nonumber\\
	&=&-\bm{\nabla}_{\pi}\bm{\cdot}\bm{R}-\bm{\nabla}_{\pi}\bm{\cdot}\bm{S}+K{\omega}_{n}\nonumber\\
	&=&-\bm{\nabla}_{\pi}\bm{\cdot}\bm{\omega}.
	\end{eqnarray}
\end{subequations}
\end{theorem}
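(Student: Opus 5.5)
The plan is to write $\bm{\omega}=\star(\bm{\nabla}\wedge\bm{u})$, i.e. $\bm{\omega}$ as the dual of the antisymmetric part of $\bm{A}$, and to differentiate along the normal. Theorem~\ref{thm2} already gives $[\partial\bm{A}/\partial n]_{\Sigma}$, and since $\partial\bm{n}/\partial\zeta=\bm{0}$ the normal derivative commutes with the dual map. Hence
\begin{equation*}
\left[\frac{\partial\bm{\omega}}{\partial n}\right]_{\Sigma}=2\star\mathscr{A}\left[\frac{\partial\bm{A}}{\partial n}\right]_{\Sigma}.
\end{equation*}
The tangential and normal parts are then obtained by projecting with $\bm{G}$ and $\bm{n}$; both projectors are $\zeta$-independent along normal lines. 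I will insert the three terms of~\eqref{hh1}, with~\eqref{hhh1} and~\eqref{hh2}, and take the dual term by term. For dyads, $2\star\mathscr{A}(\bm{a}\bm{b})=\bm{a}\times\bm{b}$.

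For~\eqref{pp1}, I take the tangential projection.
\begin{itemize}
\item The $\bm{n}\,\partial^2_n\bm{u}$ term contributes $\bm{n}\times\partial^2_n\bm{u}_{\pi}$.
\item In~\eqref{hh2}, $(\bm{\nabla}_{\pi}\vartheta_n)\bm{n}$ gives $\bm{\nabla}_{\pi}\vartheta_n\times\bm{n}=-\bm{n}\times\bm{\nabla}_{\pi}\vartheta_n$. The term $-\vartheta_n\bm{K}$ is symmetric and drops out.
\item The tangential part of $\bm{g}^{\alpha}\times\partial_\alpha(\tfrac12\bm{\xi}_R+\bm{\xi}_S)$ comes only from the normal components of $\partial_\alpha\bm{\xi}$, via Weingarten: $\partial_\alpha\bm{\xi}\cdot\bm{n}=-\bm{\xi}\cdot\partial_\alpha\bm{n}$. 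This yields $\bm{g}^{\alpha}\times\bm{n}\,(b_{\alpha\beta}\xi^\beta)=(\bm{K}\cdot\bm{\xi})\times\bm{n}$ with $\bm{\xi}=\tfrac12\bm{\xi}_R+\bm{\xi}_S$.
\item From~\eqref{hhh1}, the tangential part of the dual of $\bm{K}\cdot\bm{\nabla}_\pi\bm{u}$ is $-\tfrac12(\bm{K}\cdot\bm{\xi}_R)\times\bm{n}$. It cancels the $\bm{\xi}_R$ piece just obtained.
\end{itemize}
What remains is $(\bm{K}\cdot\bm{\xi}_S)\times\bm{n}$ with $\bm{\xi}_S=\bm{S}\times\bm{n}$. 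Theorem~\ref{thm1}, with $\bm{a}=\bm{S}$ and $\bm{c}=\bm{\xi}_S$, gives $\bm{n}\times(\bm{K}\cdot\bm{\xi}_S)=\hat{\bm{K}}\cdot\bm{S}$, so this term equals $-\hat{\bm{K}}\cdot\bm{S}=\bm{K}\cdot\bm{S}-K\bm{S}$. That yields both lines of~\eqref{pp1}.

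For~\eqref{pp2}, I take the normal component. The terms $(\cdot)\bm{n}$ and $\bm{n}(\cdot)$ in~\eqref{hh1}, \eqref{hhh1} and~\eqref{hh2} have duals orthogonal to $\bm{n}$, so they drop. The term $-u_n(K\bm{K}-K_G\bm{G})$ is symmetric and also drops. This leaves two contributions:
\begin{itemize}
\item $\bm{\mathcal{E}}\bm{:}(\bm{K}\cdot\bm{\nabla}_C\bm{u}_\pi)$, with the sign fixed by the paper's convention for $\bm{\mathcal{E}}$;
\item $\bm{n}\cdot\bm{\nabla}_\pi\times\bm{\xi}$, which I rewrite using $\bm{n}\cdot(\bm{\nabla}_\pi\times(\bm{a}\times\bm{n}))=-\bm{\nabla}_\pi\cdot\bm{a}$ for tangent $\bm{a}$, after checking that Theorem~\ref{curl_1} kills the extra curvature term. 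This produces $-\bm{\nabla}_\pi\cdot(\tfrac12\bm{R}+\bm{S})$ and hence the first line of~\eqref{pp2}.
\end{itemize}
The last line follows by a different route: take the normal component of $\bm{\nabla}\cdot\bm{\omega}=0$, splitting $\bm{\nabla}=\bm{\nabla}_\pi+\bm{n}\partial_n$. Together with $\bm{n}\cdot\partial_n\bm{\omega}=\partial_n\omega_n$, this gives $\partial_n\omega_n=-\bm{\nabla}_\pi\cdot\bm{\omega}$. Expanding $\bm{\nabla}_\pi\cdot(\omega_n\bm{n})=-K\omega_n$ with $\bm{\omega}=\bm{R}+\bm{S}+\omega_n\bm{n}$ gives the middle line.

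The main obstacle is the consistency of the first and middle lines. It requires
\begin{equation*}
\bm{\mathcal{E}}\bm{:}(\bm{K}\cdot\bm{\nabla}_C\bm{u}_\pi)=-\tfrac12\bm{\nabla}_\pi\cdot\bm{R}+K\omega_n .
\end{equation*}
I would verify this directly. Substitute $\bm{R}=\bm{n}\times\bm{\xi}_R$, so that $\bm{R}=-2\bm{n}\times(\bm{K}\cdot\bm{u}_\pi+\bm{\nabla}_\pi u_n)$, and compute $\bm{\nabla}_\pi\cdot\bm{R}$ by Codazzi symmetry ($\nabla_\alpha b_{\beta\gamma}$ totally symmetric). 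The $\bm{\nabla}_\pi u_n$ part should vanish because it is the surface curl of a gradient, which is consistent with Theorem~\ref{curl_1}. The $\bm{K}\cdot\bm{u}_\pi$ part should give $2\varepsilon^{\alpha\beta}b_\beta^{\gamma}\nabla_\alpha u_\gamma$. Combined with $\omega_n=\varepsilon^{\alpha\beta}\nabla_\alpha u_\beta$, the 2D identity $\bm{K}+\hat{\bm{K}}=K\bm{G}$ (equivalently Theorem~\ref{thm1} in index form) should close the identity. Tracking sign conventions of $\bm{\mathcal{E}}$ will be the delicate part.
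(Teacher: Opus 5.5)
Your proof is correct. For \eqref{pp1} and the first line of \eqref{pp2} it is the paper's argument. You apply $2\star\mathscr{A}$ to \eqref{hh1}, insert \eqref{hhh1} and \eqref{hh2}, and use Theorem~\ref{thm1} to turn $\bm{n}\times(\bm{K}\bm{\cdot}\bm{\xi}_S)$ into $\hat{\bm K}\bm{\cdot}\bm S$. The only difference is that you project onto $T\Sigma$ and $\bm n$ before evaluating each dual, whereas the paper first assembles the full vector \eqref{pp6}.

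The genuine difference is in the remaining two lines of \eqref{pp2}. The paper proves, in components, the kinematic identity $\bm{\mathcal E}\bm{:}(\bm K\bm{\cdot}\bm\nabla_C\bm u_\pi)=K\omega_n-\tfrac12\bm\nabla_\pi\bm{\cdot}\bm R$. It uses the Codazzi--Mainardi equations, $\bm\nabla_\pi\times\bm n=\bm 0$ and the two-dimensional $\varepsilon$--$\delta$ contraction (see \eqref{pp10}--\eqref{pp15}). You instead get $-\bm\nabla_\pi\bm{\cdot}\bm\omega$ in one line from $[\bm\nabla\bm{\cdot}\bm\omega]_\Sigma=\bm\nabla_\pi\bm{\cdot}\bm\omega+\bm n\bm{\cdot}[\partial_n\bm\omega]_\Sigma=0$. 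The middle line then follows from $\bm\nabla_\pi\bm{\cdot}(\omega_n\bm n)=-K\omega_n$.

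Consequently, what you call the main obstacle is not one. The first and last lines are each derived independently as expressions for $[\partial\omega_n/\partial n]_\Sigma$. Their equality therefore holds automatically, and so does the identity $\bm{\mathcal E}\bm{:}(\bm K\bm{\cdot}\bm\nabla_C\bm u_\pi)=-\tfrac12\bm\nabla_\pi\bm{\cdot}\bm R+K\omega_n$. The index computation you sketch, which is exactly the paper's, is only a cross-check.

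Each route has its merit:
\begin{itemize}
\item Yours is short and makes explicit that the normal BVF is dictated by solenoidality, as the paper itself remarks after \eqref{BVF_identity3p}.
\item The paper's gives a self-contained surface-geometric proof of the curvature--velocity-gradient identity. It shows how the Codazzi equations split $\bm{\mathcal E}\bm{:}(\bm K\bm{\cdot}\bm\nabla_C\bm u_\pi)$ into a mean-curvature/normal-vorticity coupling plus the divergence of the rigid-rotation mode.
\end{itemize}
As a minor point, avoid reusing $\bm\xi$ for $\tfrac12\bm\xi_R+\bm\xi_S$, since the paper reserves $\bm\xi=\bm\xi_R+\bm\xi_S$.
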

\begin{proof}[Proof of Theorem~{\upshape\ref{thm3}}] Successively applying the antisymmetrization operator $(\mathscr{A})$ and the Hodge star operator $(\star)$ to~\eqref{hh1}, and subsequently multiplying the derived expression by a factor of 2, yields
	\begin{eqnarray}\label{pp3}
	\left[\frac{\partial\bm{\omega}}{\partial n}\right]_{\Sigma}=2\star\mathscr{A}\left(\bm{K}\bm{\cdot}\bm{\nabla}_{\pi}\bm{u}\right)+\bm{\nabla}_{\pi}\times\left[\frac{\partial\bm{u}}{\partial n}\right]_{\Sigma}+\bm{n}\times\left[\frac{\partial^2\bm{u}}{\partial n^2}\right]_{\Sigma},
	\end{eqnarray}
where $2\star\mathscr{A}\left(\bm{K}\bm{\cdot}\bm{\nabla}_{\pi}\bm{u}\right)=(\bm{K}\bm{\cdot}\bm{\nabla}_{\pi})\times\bm{u}$.
By employing~\eqref{id1} and~\eqref{hhh1}, the first term on the right-hand side of~\eqref{pp3} evaluates to
\begin{eqnarray}\label{pp5}
	2\star\mathscr{A}\left(\bm{K}\bm{\cdot}\bm{\nabla}_{\pi}\bm{u}\right)
	&=&2\star\mathscr{A}(\bm{K}\bm{\cdot}\bm{\nabla}_{C}\bm{u}_{\pi})-\frac{1}{2}2\star\mathscr{A}\left((\bm{K}\bm{\cdot}\bm{\xi}_{R})\bm{n}\right)-u_{n}2\star\mathscr{A}(\bm{K}^{2})\nonumber\\
	&=&2\star\mathscr{A}(\bm{K}\bm{\cdot}\bm{\nabla}_{C}\bm{u}_{\pi})+\frac{1}{2}\bm{n}\times\left(\bm{K}\bm{\cdot}\bm{\xi}_{R}\right)\nonumber\\
	&=&2\star\mathscr{A}(\bm{K}\bm{\cdot}\bm{\nabla}_{C}\bm{u}_{\pi})+\frac{1}{2}\left(K\bm{R}-\bm{K}\bm{\cdot}\bm{R}\right)\nonumber\\
	&=&\bm{\mathcal{E}}\bm{:}(\bm{K}\bm{\cdot}\bm{\nabla}_{C}\bm{u}_{\pi})\bm{n}+\frac{1}{2}\hat{\bm{K}}\bm{\cdot}\bm{R}.
\end{eqnarray}
Similarly, using~\eqref{hh2}, the second term on the right-hand side of~\eqref{pp3} is calculated as
\begin{eqnarray}\label{hh2a}
	\bm{\nabla}_{\pi}\times\left[\frac{\partial\bm{u}}{\partial n}\right]_{\Sigma}&=&	2\star\mathscr{A}(\left(\bm{\nabla}_{\pi}\vartheta_{n}\right)\bm{n})-2\star\vartheta_{n}\mathscr{A}(\bm{K})+2\star\mathscr{A}\bm{\nabla}_{\pi}\left(\frac{1}{2}\bm{\xi}_{R}+\bm{\xi}_{S}\right)\nonumber\\
	&=&-\bm{n}\times\bm{\nabla}_{\pi}\vartheta_{n}+\bm{\nabla}_{\pi}\times\left(\frac{1}{2}\bm{\xi}_{R}+\bm{\xi}_{S}\right).
\end{eqnarray}
The last term in~\eqref{hh2a} is expanded as
\begin{eqnarray}\label{hh3}
& &\bm{\nabla}_{\pi}\times\left(\frac{1}{2}\bm{\xi}_{R}+\bm{\xi}_{S}\right)=\bm{\nabla}_{\pi}\times\left[\left(\frac{1}{2}\bm{R}+\bm{S}\right)\times\bm{n}\right]\nonumber\\
	&=&\bm{n}\bm{\cdot}\bm{\nabla}_{\pi}\left(\frac{1}{2}\bm{R}+\bm{S}\right)-\left(\frac{1}{2}\bm{R}+\bm{S}\right)\bm{\cdot}\bm{\nabla}_{\pi}\bm{n}\nonumber\\
	& &+\left(\bm{\nabla}_{\pi}\bm{\cdot}\bm{n}\right)\left(\frac{1}{2}\bm{R}+\bm{S}\right)-\left[\bm{\nabla}_{\pi}\bm{\cdot}\left(\frac{1}{2}\bm{R}+\bm{S}\right)\right]\bm{n}\nonumber\\
	&=&\bm{K}\bm{\cdot}\left(\frac{1}{2}\bm{R}+\bm{S}\right)-K\left(\frac{1}{2}\bm{R}+\bm{S}\right)-\left[\bm{\nabla}_{\pi}\bm{\cdot}\left(\frac{1}{2}\bm{R}+\bm{S}\right)\right]\bm{n}\nonumber\\
	&=&-\hat{\bm{K}}\bm{\cdot}\left(\frac{1}{2}\bm{R}+\bm{S}\right)-\left[\bm{\nabla}_{\pi}\bm{\cdot}\left(\frac{1}{2}\bm{R}+\bm{S}\right)\right]\bm{n}.
\end{eqnarray}
Substituting~\eqref{hh3} into~\eqref{hh2a} generates
\begin{eqnarray}\label{hh4}	\bm{\nabla}_{\pi}\times\left[\frac{\partial\bm{u}}{\partial n}\right]_{\Sigma}&=&-\bm{n}\times\bm{\nabla}_{\pi}\vartheta_{n}-\left[\bm{\nabla}_{\pi}\bm{\cdot}\left(\frac{1}{2}\bm{R}+\bm{S}\right)\right]\bm{n}\nonumber\\
	& &-\hat{\bm{K}}\bm{\cdot}\left(\frac{1}{2}\bm{R}+\bm{S}\right).
\end{eqnarray}
Based on~\eqref{pp5} and~\eqref{hh4},~\eqref{pp3} becomes
\begin{eqnarray}\label{pp6}
	\left[\frac{\partial\bm{\omega}}{\partial n}\right]_{\Sigma}&=&\bm{\mathcal{E}}\bm{:}(\bm{K}\bm{\cdot}\bm{\nabla}_{C}\bm{u}_{\pi})\bm{n}+\bm{n}\times\left[\frac{\partial^2\bm{u}}{\partial n^2}\right]_{\Sigma}\nonumber\\
	& &-\bm{n}\times\bm{\nabla}_{\pi}\vartheta_{n}-\left[\bm{\nabla}_{\pi}\bm{\cdot}\left(\frac{1}{2}\bm{R}+\bm{S}\right)\right]\bm{n}-\hat{\bm{K}}\bm{\cdot}\bm{S}.
\end{eqnarray}
Taking the tangential component of~\eqref{pp6} recovers~\eqref{pp1}.

The wall-normal component of~\eqref{pp6} is expressed as
\begin{eqnarray}\label{pp7}
	\left[\frac{\partial\omega_n}{\partial n}\right]_{\Sigma}=\bm{\mathcal{E}}\bm{:}(\bm{K}\bm{\cdot}\bm{\nabla}_{C}\bm{u}_{\pi})-\bm{\nabla}_{\pi}\bm{\cdot}\left(\frac{1}{2}\bm{R}+\bm{S}\right).
\end{eqnarray}
The two terms on the right-hand side of~\eqref{pp7} evaluate to
\begin{subequations}
	\begin{equation}\label{pp8}
		\bm{\mathcal{E}}\bm{:}(\bm{K}\bm{\cdot}\bm{\nabla}_{C}\bm{u}_{\pi})=\varepsilon^{3\alpha\gamma}b^\beta_{\alpha}\nabla_{\beta}u_{\gamma},
	\end{equation}
	\begin{equation}\label{pp9}
		\begin{aligned}
	-\bm{\nabla}_{\pi}\bm{\cdot}\left(\frac{1}{2}\bm{R}\right)
	&=-\bm{\nabla}\bm{\cdot}\bm{W}_{\rm eff}\\
	&=\bm{\nabla}_{\pi}\bm{\cdot}\left[\bm{n}\times\left(\bm{K}\bm{\cdot}\bm{u}_{\pi}\right)\right]+\bm{\nabla}_{\pi}\bm{\cdot}\left(\bm{n}\times\bm{\nabla}_{\pi}u_{n}\right).
		\end{aligned}
	\end{equation}
\end{subequations}
On the right-hand side of~\eqref{pp9}, the first term is evaluated as
\begin{eqnarray}\label{pp10}
	\bm{\nabla}_{\pi}\bm{\cdot}\left[\bm{n}\times\left(\bm{K}\bm{\cdot}\bm{u}_{\pi}\right)\right]&=&\left(\bm{K}\bm{\cdot}\bm{u}_{\pi}\right)\bm{\cdot}(\bm{\nabla}_{\pi}\times\bm{n})-\bm{n}\bm{\cdot}\bm{\nabla}_{\pi}\times\left(\bm{K}\bm{\cdot}\bm{u}_{\pi}\right)\nonumber\\
	&=&-\bm{n}\bm{\cdot}\bm{\nabla}_{\pi}\times\left(\bm{K}\bm{\cdot}\bm{u}_{\pi}\right)\nonumber\\
	&=&-\bm{n}\bm{\cdot}\bm{g}^{\gamma}\times\left[\nabla_{\gamma}(b_{\alpha}^{\beta}u_{\beta})\bm{g}^{\alpha}+u_{\beta}b_{\alpha}^{\beta}b_{\gamma}^{\alpha}\bm{n}\right]\nonumber\\
	&=&-\bm{n}\bm{\cdot}\bm{g}^{\gamma}\times\bm{g}^{\alpha}\nabla_{\gamma}(b_{\alpha}^{\beta}u_{\beta})\nonumber\\
	&=&\varepsilon^{3\alpha\gamma}\nabla_{\gamma}(b_{\alpha}^{\beta}u_{\beta})\nonumber\\
	&=&\varepsilon^{3\alpha\gamma}b_{\alpha}^{\beta}\nabla_{\gamma}u_{\beta}+\varepsilon^{3\alpha\gamma}u_{\beta}\nabla_{\gamma}b_{\alpha}^{\beta}\nonumber\\
	&=&\varepsilon^{3\alpha\gamma}b_{\alpha}^{\beta}\nabla_{\gamma}u_{\beta},
\end{eqnarray}
where we have invoked the Codazzi-Mainardi equations $\nabla_{\gamma}b_{\alpha}^{\beta}=\nabla_{\alpha}b_{\gamma}^{\beta}$~\citep{chern2000lectures,docarmo2016diffgeo} and the property in~\eqref{curl_n}. The second term is evaluated as
\begin{eqnarray}\label{pp11}
	\bm{\nabla}_{\pi}\bm{\cdot}\left(\bm{n}\times\bm{\nabla}_{\pi}u_{n}\right)&=&\bm{\nabla}_{\pi}u_{n}\bm{\cdot}(\bm{\nabla}_{\pi}\times\bm{n})-\bm{n}\bm{\cdot}\bm{\nabla}_{\pi}\times(\bm{\nabla}_{\pi}u_{n})\nonumber\\
	&=&-\bm{n}\bm{\cdot}\bm{\nabla}_{\pi}\times(\bm{\nabla}_{\pi}u_{n})\nonumber\\
	&=&-\bm{n}\times\bm{g}^{\alpha}\bm{\cdot}\partial_{\alpha}(\bm{g}^{\beta}\partial_{\beta}u_{n})\nonumber\\
	&=&-\varepsilon^{3\alpha\gamma}\bm{g}_{\gamma}\bm{\cdot}(\nabla_{\alpha}(\partial_{\beta}u_{n})\bm{g}^{\beta}+b^{\beta}_{\alpha}\partial_{\beta}u_{n}\bm{n})\nonumber\\
	&=&-\varepsilon^{3\alpha\beta}\nabla_{\alpha}(\partial_{\beta}u_n)\nonumber\\
	&=&-\varepsilon^{3\alpha\beta}(\partial_{\alpha}\partial_{\beta}u_{n}-\Gamma_{\alpha\beta}^{\gamma}\partial_{\gamma}u_n)\nonumber\\
	&=&-\varepsilon^{3\alpha\beta}\partial_{\alpha}\partial_{\beta}u_{n}+\varepsilon^{3\alpha\beta}\Gamma_{\alpha\beta}^{\gamma}\partial_{\gamma}u_n\nonumber\\
	&=&0,
	\end{eqnarray}
	where use has been made of the symmetry properties ($\partial_{\alpha}\partial_{\beta}=\partial_{\beta}\partial_{\alpha}$ and $\Gamma_{\alpha\beta}^{\gamma}=\Gamma_{\beta\alpha}^{\gamma}$), as well as~\eqref{curl_n}. By combining~\eqref{pp10} and~\eqref{pp11},~\eqref{pp9} is evaluated to
	\begin{eqnarray}\label{pp12}
	-\bm{\nabla}_{\pi}\bm{\cdot}\left(\frac{1}{2}\bm{R}\right)=	\varepsilon^{3\alpha\gamma}b_{\alpha}^{\beta}\nabla_{\gamma}u_{\beta}.
	\end{eqnarray}
	The wall-normal vorticity component is 
	\begin{eqnarray}\label{pp13}
		\omega_{n}=\bm{n}\bm{\cdot}(\bm{\nabla}_{\pi}\times\bm{u}_{\pi})=\varepsilon^{3\gamma\lambda}\nabla_{\gamma}u_{\lambda}.
	\end{eqnarray}
	Multiplying~\eqref{pp13} with the permutation symbol $\varepsilon^{3\alpha\beta}$, we get
	\begin{eqnarray}\label{pp14}
		\varepsilon_{3\alpha\beta}\omega_{n}=\varepsilon_{3\alpha\beta}\varepsilon^{3\gamma\lambda}\nabla_{\gamma}u_{\lambda}=(\delta_{\alpha}^{\gamma}\delta_{\beta}^{\lambda}-\delta_{\beta}^{\gamma}\delta_{\alpha}^{\lambda})\nabla_{\gamma}u_{\lambda}=\nabla_{\alpha}u_{\beta}-\nabla_{\beta}u_{\alpha}.
	\end{eqnarray}
	Using~\eqref{pp12} and~\eqref{pp14}, the term in~\eqref{pp8} can be calculated as
	\begin{eqnarray}\label{pp15}
	\bm{\mathcal{E}}\bm{:}(\bm{K}\bm{\cdot}\bm{\nabla}_{C}\bm{u}_{\pi})&=&\varepsilon^{3\alpha\gamma}b^\beta_{\alpha}(\nabla_{\beta}u_{\gamma}-\nabla_{\gamma}u_{\beta})+\varepsilon^{3\alpha\gamma}b_{\alpha}^{\beta}\nabla_{\gamma}u_{\beta}\nonumber\\
	&=&\varepsilon^{3\alpha\gamma}b^\beta_{\alpha}\varepsilon_{3\beta\gamma}\omega_{n}+\varepsilon^{3\alpha\gamma}b_{\alpha}^{\beta}\nabla_{\gamma}u_{\beta}\nonumber\\
	&=&(\delta^{\alpha}_{\beta}\delta^{\gamma}_{\gamma}-\delta^{\gamma}_{\beta}\delta^{\alpha}_{\gamma})b^{\beta}_{\alpha}\omega_{n}+\varepsilon^{3\alpha\gamma}b_{\alpha}^{\beta}\nabla_{\gamma}u_{\beta}\nonumber\\
	&=&b_{\alpha}^{\alpha}\omega_{n}+\varepsilon^{3\alpha\gamma}b_{\alpha}^{\beta}\nabla_{\gamma}u_{\beta}\nonumber\\
	&=&K\omega_{n}-\bm{\nabla}_{\pi}\bm{\cdot}\left(\frac{1}{2}\bm{R}\right)\nonumber\\
	&=&-\bm{\nabla}_{\pi}\bm{\cdot}\left(\bm{\omega}_{n}+\frac{1}{2}\bm{R}\right),
	\end{eqnarray}
	where we have considered the fact that
	\begin{eqnarray}\label{pp16}
		-\bm{\nabla}_{\pi}\bm{\cdot}\bm{\omega}_{n}=-\bm{\nabla}_{\pi}\omega_{n}\bm{\cdot}\bm{n}-\omega_{n}\bm{\nabla}_{\pi}\bm{\cdot}\bm{n}=K\omega_{n}.
	\end{eqnarray}
	Substituting~\eqref{pp15} into~\eqref{pp7} and using~\eqref{decomp_vorticity1} recovers~\eqref{pp2}. This completes the proof.
\end{proof}

\begin{theorem}\label{thm_new1}
On the base surface $\bm{\Sigma}$, it holds that
\begin{subequations}
\begin{equation}\label{BVF_identity1}
\left[\frac{\partial\bm{\omega}}{\partial n}\right]_{\Sigma}=-\bm{n}\times\left(\bm{\nabla}\times\bm{\omega}\right)+(\bm{n}\times\bm{\nabla})\times\bm{\omega},
\end{equation}
\begin{equation}\label{BVF_identity2}
	\left[\frac{\partial\bm{\omega}_{\pi}}{\partial n}\right]_{\Sigma}=-\bm{n}\times\left(\bm{\nabla}\times\bm{\omega}\right)+\bm{\nabla}_{\pi}\bm{\omega}\bm{\cdot}\bm{n},
\end{equation}
\begin{equation}\label{BVF_identity3}
	\left[\frac{\partial\bm{\omega}_{n}}{\partial n}\right]_{\Sigma}=-(\bm{\nabla}_{\pi}\bm{\cdot}\bm{\omega})\bm{n}=\left(\bm{n}\times\bm{\nabla}\right)\times\bm{\omega}-\bm{\nabla}_{\pi}\bm{\omega}\bm{\cdot}\bm{n},
\end{equation}
\end{subequations}
along with the following identities:
\begin{subequations}
\begin{equation}\label{BVF_identity4}
	\bm{n}\times(\bm{\nabla}_{\pi}\times\bm{\omega})=\bm{\nabla}_{\pi}\bm{\omega}\bm{\cdot}\bm{n}=\bm{\nabla}_{\pi}\omega_{n}+\bm{K}\bm{\cdot}\bm{\omega}_{\pi}.
\end{equation}
\begin{equation}\label{431b}
	\bm{n}\times(\bm{\nabla}_{\pi}\times\bm{\omega}_{n})=\bm{\nabla}_{\pi}\omega_{n},
\end{equation}
\begin{equation}\label{431c}
	\bm{n}\times(\bm{\nabla}_{\pi}\times\bm{\omega}_{\pi})=\bm{K}\bm{\cdot}\bm{\omega}_{\pi},
\end{equation}
\end{subequations}
In addition, we have
\begin{subequations}\label{xx321}
	\begin{equation}\label{431f}
		(\bm{n}\times\bm{\nabla})\times\bm{\omega}=\bm{\nabla}\bm{\omega}\bm{\cdot}\bm{n}=-\bm{n}\bm{\cdot}\bm{B}_{\bm\omega},~~\bm{B}_{\bm\omega}\equiv-\bm{\omega}\bm{\nabla}=-\bm{\nabla}\bm{\omega}^{\rm T}.
	\end{equation}
	\begin{equation}\label{431d}
		(\bm{n}\times\bm{\nabla})\times\bm{\omega}_{n}=(\bm{n}\times\bm{\nabla}_{\pi})\times\bm{\omega}_{n}=\bm{\nabla}_{\pi}\omega_{n}+K\bm{\omega}_{n},
	\end{equation}
	\begin{equation}\label{431e}
		(\bm{n}\times\bm{\nabla})\times\bm{\omega}_{\pi}=(\bm{n}\times\bm{\nabla}_{\pi})\times\bm{\omega}_{\pi}=\bm{K}\bm{\cdot}\bm{\omega}_{\pi}-(\bm{\nabla}_{\pi}\bm{\cdot}\bm{\omega}_{\pi})\bm{n}.
	\end{equation}
\end{subequations}
\end{theorem}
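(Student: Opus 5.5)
The plan is to derive everything from the surface-gradient splitting $\bm{\nabla}=\bm{\nabla}_{\pi}+\bm{n}\,\partial/\partial n$ of~\eqref{sdo1}--\eqref{sdo2}. Together with the facts $\bm{\nabla}_{\pi}\bm{n}=-\bm{K}$ (symmetric), $\bm{\nabla}_{\pi}\times\bm{n}=\bm{0}$ from Theorem~\ref{curl_1}, and the Weingarten relation $\partial_\alpha\bm{n}=-b_\alpha^\beta\bm{g}_\beta$, this should suffice.

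\textbf{Step 1: the general identities~\eqref{431f} and~\eqref{BVF_identity1}.} For any vector $\bm{v}$, the BAC--CAB rule with $\bm{\nabla}$ acting only on $\bm{\omega}$ gives
\[
(\bm{n}\times\bm{\nabla})\times\bm{\omega}=\bm{\nabla}\bm{\omega}\bm{\cdot}\bm{n}-\bm{n}(\bm{\nabla}\bm{\cdot}\bm{\omega}).
\]
Since $\bm{\nabla}\bm{\cdot}\bm{\omega}=0$, this yields~\eqref{431f}, with $\bm{\nabla}\bm{\omega}\bm{\cdot}\bm{n}=-\bm{n}\bm{\cdot}\bm{B}_{\bm\omega}$ by the definition of $\bm{B}_{\bm\omega}$. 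Next I use
\[
\bm{n}\times(\bm{\nabla}\times\bm{\omega})=\bm{\nabla}\bm{\omega}\bm{\cdot}\bm{n}-\bm{n}\bm{\cdot}\bm{\nabla}\bm{\omega},
\]
where $\bm{n}\bm{\cdot}\bm{\nabla}\bm{\omega}=\partial\bm{\omega}/\partial n$. Subtracting the two formulas gives~\eqref{BVF_identity1} immediately.

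\textbf{Step 2: the surface identities~\eqref{BVF_identity4}--\eqref{431c} and~\eqref{431d}--\eqref{431e}.} Applying the same BAC--CAB expansion with $\bm{\nabla}_{\pi}$ gives
\[
\bm{n}\times(\bm{\nabla}_{\pi}\times\bm{v})=\bm{\nabla}_{\pi}\bm{v}\bm{\cdot}\bm{n}-\bm{n}\bm{\cdot}\bm{\nabla}_{\pi}\bm{v}=\bm{\nabla}_{\pi}\bm{v}\bm{\cdot}\bm{n},
\]
because $\bm{n}\bm{\cdot}\bm{g}^\alpha=0$. For $\bm{v}=\bm{\omega}_{\pi}$, I differentiate $\bm{\omega}_{\pi}\bm{\cdot}\bm{n}=0$ to get $\bm{\nabla}_{\pi}\bm{\omega}_{\pi}\bm{\cdot}\bm{n}=-\bm{\nabla}_{\pi}\bm{n}\bm{\cdot}\bm{\omega}_{\pi}=\bm{K}\bm{\cdot}\bm{\omega}_{\pi}$, which is~\eqref{431c}. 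For $\bm{v}=\omega_n\bm{n}$, I have $\bm{\nabla}_{\pi}(\omega_n\bm{n})\bm{\cdot}\bm{n}=\bm{\nabla}_{\pi}\omega_n-\omega_n\bm{K}\bm{\cdot}\bm{n}=\bm{\nabla}_{\pi}\omega_n$, which is~\eqref{431b}. Adding the two gives~\eqref{BVF_identity4}.

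For~\eqref{431d}--\eqref{431e}, I first note that $(\bm{n}\times\bm{\nabla})=\bm{n}\times\bm{\nabla}_{\pi}$, since $\bm{n}\times\bm{n}=\bm{0}$. The surface analogue of the expansion in Step 1 is
\[
(\bm{n}\times\bm{\nabla}_{\pi})\times\bm{v}=\bm{\nabla}_{\pi}\bm{v}\bm{\cdot}\bm{n}-\bm{n}(\bm{\nabla}_{\pi}\bm{\cdot}\bm{v}).
\]
For $\bm{v}=\bm{\omega}_{\pi}$ this gives~\eqref{431e} directly. For $\bm{v}=\bm{\omega}_n$, I use $\bm{\nabla}_{\pi}\bm{\cdot}\bm{\omega}_n=-K\omega_n$ from~\eqref{pp16}, which gives~\eqref{431d}.

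\textbf{Step 3: the components~\eqref{BVF_identity2}--\eqref{BVF_identity3}.} Write $\partial\bm{\omega}/\partial n=\partial\bm{\omega}_{\pi}/\partial n+\partial\bm{\omega}_n/\partial n$. Because $\partial\bm{n}/\partial\zeta=\bm{0}$ along the normal extension~\eqref{eq26b}, $\partial\bm{\omega}_n/\partial n=(\partial\omega_n/\partial n)\bm{n}$, and Theorem~\ref{thm3} gives $\partial\omega_n/\partial n=-\bm{\nabla}_{\pi}\bm{\cdot}\bm{\omega}$. This is the first equality of~\eqref{BVF_identity3}. For the second equality, I combine~\eqref{431d} and~\eqref{431e}:
\[
(\bm{n}\times\bm{\nabla})\times\bm{\omega}-\bm{\nabla}_{\pi}\bm{\omega}\bm{\cdot}\bm{n}=K\bm{\omega}_n-(\bm{\nabla}_{\pi}\bm{\cdot}\bm{\omega}_{\pi})\bm{n}=-(\bm{\nabla}_{\pi}\bm{\cdot}\bm{\omega})\bm{n},
\]
using~\eqref{BVF_identity4} and~\eqref{pp16} again. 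Finally, subtracting~\eqref{BVF_identity3} from~\eqref{BVF_identity1} yields~\eqref{BVF_identity2}.

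\textbf{Main obstacle.} I expect the delicate points to be bookkeeping rather than conceptual. First, the ordering convention in $\bm{\nabla}\bm{\omega}\bm{\cdot}\bm{n}$ versus $\bm{n}\bm{\cdot}\bm{\nabla}\bm{\omega}$ must match the left-gradient convention used for $\bm{A}$. Second, $\partial\bm{\omega}_{\pi}/\partial n$ must be read as the normal derivative of the extended field $\bm{G}\bm{\cdot}\bm{\omega}$. This is legitimate only because $\bm{n}$ is constant along normal lines, so the projection commutes with $\partial_\zeta$; I will check this explicitly.
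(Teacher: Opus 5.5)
Your proposal is correct, but it runs in the opposite order to the paper.

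The paper builds up from the components:
\begin{itemize}
\item it splits $\bm{\nabla}\times\bm{\omega}=\bm{\nabla}_{\pi}\times\bm{\omega}+\bm{n}\times\partial_n\bm{\omega}$ to get \eqref{BVF_identity2} directly;
\item it expands $(\bm{n}\times\bm{\nabla})\times\bm{\omega}=\bm{\nabla}_{\pi}\bm{\omega}\bm{\cdot}\bm{n}-(\bm{\nabla}_{\pi}\bm{\cdot}\bm{\omega})\bm{n}$ in the surface frame to get \eqref{BVF_identity3};
\item it sums these two to obtain \eqref{BVF_identity1}, and reads off \eqref{431d}--\eqref{431e} from the component expansion \eqref{eq438a}.
\end{itemize}
You instead get \eqref{BVF_identity1} and \eqref{431f} first, from two ambient BAC--CAB expansions plus $\bm{\nabla}\bm{\cdot}\bm{\omega}=0$. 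You then recover \eqref{BVF_identity2} by subtracting \eqref{BVF_identity3}.

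Your ordering makes clear that \eqref{BVF_identity1} is a purely ambient identity, needing only a frozen $\bm{n}$ and a solenoidal $\bm{\omega}$. Curvature enters only when you split into tangential and normal parts. The paper's ordering instead shows directly where each component comes from.

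Two small simplifications are available. First, you do not need Theorem~\ref{thm3} for $\partial_n\omega_n=-\bm{\nabla}_{\pi}\bm{\cdot}\bm{\omega}$. It follows in one line from $0=\bm{\nabla}\bm{\cdot}\bm{\omega}=\bm{\nabla}_{\pi}\bm{\cdot}\bm{\omega}+\bm{n}\bm{\cdot}\partial_n\bm{\omega}$ together with $\partial_\zeta\bm{n}=\bm{0}$, both of which you already use. Second, the second equality in \eqref{BVF_identity3} follows immediately from your surface formula $(\bm{n}\times\bm{\nabla}_{\pi})\times\bm{v}=\bm{\nabla}_{\pi}\bm{v}\bm{\cdot}\bm{n}-(\bm{\nabla}_{\pi}\bm{\cdot}\bm{v})\bm{n}$ with $\bm{v}=\bm{\omega}$. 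You do not need to go through \eqref{431d}--\eqref{431e} and \eqref{pp16}; this shortcut is exactly the paper's step \eqref{eq437a}.
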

\begin{proof}[Proof of Theorem~{\upshape\ref{thm_new1}}]
	Note that
\begin{eqnarray}\label{pp20}
	\bm{n}\times(\bm{\nabla}_{\pi}\times\bm{\omega})&=&\bm{n}\times(\bm{g}^{\alpha}\times\frac{\partial\bm{\omega}}{\partial x^\alpha})\nonumber\\
	&=&(\bm{n}\bm{\cdot}\frac{\partial\bm{\omega}}{\partial x^\alpha})\bm{g}^{\alpha}-(\bm{n}\bm{\cdot}\bm{g}^{\alpha})\frac{\partial\bm{\omega}}{\partial x^\alpha}\nonumber\\
	&=&\left(\bm{g}^{\alpha}\frac{\partial\bm{\omega}}{\partial x^\alpha}\right)\bm{\cdot}\bm{n}=\bm{\nabla}_{\pi}\bm{\omega}\bm{\cdot}\bm{n}\nonumber\\
	&=&\bm{\nabla}_{\pi}(\bm{\omega}\bm{\cdot}\bm{n})-\bm{\omega}\bm{\cdot}\bm{\nabla}_{\pi}\bm{n}\nonumber\\
	&=&\bm{\nabla}_{\pi}\omega_{n}+\bm{K}\bm{\cdot}\bm{\omega}_{\pi}.
\end{eqnarray}
Therefore,~\eqref{BVF_identity4} is proved. Equation~\eqref{BVF_identity2} holds because
\begin{eqnarray}
	\bm{n}\times\left(\bm{\nabla}\times\bm{\omega}\right)&=&	\bm{n}\times\left(\bm{\nabla}_{\pi}\times\bm{\omega}+\bm{n}\times\frac{\partial\bm{\omega}}{\partial n}\right)\nonumber\\
	&=&\bm{n}\times(\bm{\nabla}_{\pi}\times\bm{\omega})+\frac{\partial\omega_{n}}{\partial n}-\frac{\partial\bm{\omega}}{\partial n}\nonumber\\
	&=&\bm{\nabla}_{\pi}\bm{\omega}\bm{\cdot}\bm{n}-\frac{\partial\bm{\omega}_{\pi}}{\partial n}.
\end{eqnarray}
 Equation~\eqref{BVF_identity3} holds because
 \begin{eqnarray}\label{eq437a}
 (\bm{n}\times\bm{\nabla})\times\bm{\omega}&=&(\bm{n}\times\bm{g}^{\alpha})\times\frac{\partial\bm{\omega}}{\partial x^\alpha}\nonumber\\
 &=&\left(\bm{n}\bm{\cdot}\frac{\partial\bm{\omega}}{\partial x^\alpha}\right)\bm{g}^{\alpha}-\left(\bm{g}^{\alpha}\bm{\cdot}\frac{\partial\bm{\omega}}{\partial x^\alpha}\right)\bm{n}\nonumber\\
 &=&\bm{\nabla}_{\pi}\bm{\omega}\bm{\cdot}\bm{n}-(\bm{\nabla}_{\pi}\bm{\cdot}\bm{\omega})\bm{n}\nonumber\\
 &=&\bm{\nabla}_{\pi}\bm{\omega}\bm{\cdot}\bm{n}+\frac{\partial\bm{\omega}_n}{\partial n}.
 \end{eqnarray}
 The sum of~\eqref{BVF_identity2} and~\eqref{BVF_identity3} yields~\eqref{BVF_identity1}. On one hand,~\eqref{431f} holds because
 \begin{eqnarray}
  (\bm{n}\times\bm{\nabla})\times\bm{\omega}&=&\bm{\nabla}_{\pi}\bm{\omega}\bm{\cdot}\bm{n}+\bm{n}\frac{\partial\bm{\omega}}{\partial n}\bm{\cdot}\bm{n}\nonumber\\
  &=&\left(\bm{\nabla}_{\pi}\bm{\omega}+\bm{n}\frac{\partial\bm{\omega}}{\partial n}\right)\bm{\cdot}\bm{n}\nonumber\\
  &=&\bm{\nabla}\bm{\omega}\bm{\cdot}\bm{n}.
  \end{eqnarray}
 On the other hand,~\eqref{eq437a} can be expanded as
 \begin{eqnarray}\label{eq438a}
 (\bm{n}\times\bm{\nabla})\times\bm{\omega}=\bm{\nabla}_{\pi}\omega_{n}+\bm{K}\bm{\cdot}\bm{\omega}_{\pi}+(-\bm{\nabla}_{\pi}\bm{\cdot}\bm{\omega}_{\pi}+K\omega_{n})\bm{n}.
 \end{eqnarray}
 Letting $\bm{\omega}_{\pi}=\bm{0}$ and $\bm{\omega}_{n}=\bm{0}$ gives~\eqref{431d} and~\eqref{431e}, respectively.
 This completes the proof.
\end{proof}
\begin{remark}
For any smooth tensor field $\bm{\Phi}\in\mathscr{T}^{p}(\mathbb{R}^3)~(p\in\mathbb{N}^+)$ on $\bm{\Sigma}$, the following identity holds~\citep{Xie2013688}
\begin{equation}\label{Xie_id1}
(\bm{n}\times\bm{\nabla}_{\pi})\bm{\cdot}(\bm{n}\times\bm{\Phi})=\bm{\nabla}_{\pi}\bm{\cdot}\bm{\Phi}+K\bm{n}\bm{\cdot}\bm{\Phi}.
\end{equation}
If $\bm{n}\bm{\cdot}\bm{\Phi}=\bm{0}$, then~\eqref{Xie_id1} reduces to~\citet[Eq. (A5)]{Wu1995}.
Taking $\bm{\Phi}=\bm{\omega}_{\pi}\otimes\bm{n}\in\mathscr{T}^{2}(\mathbb{R}^3)$ in~\eqref{Xie_id1} and using~\eqref{431e}, we obtain
\begin{eqnarray}
(\bm{n}\times\bm{\nabla}_{\pi})\bm{\cdot}(\bm{n}\times\bm{\omega}_{\pi}\bm{n})&=&\bm{\nabla}_{\pi}\bm{\cdot}(\bm{\omega}_{\pi}\bm{n})+K\bm{n}\bm{\cdot}\bm{\omega}_{\pi}\bm{n}\nonumber\\
&=&\bm{\nabla}_{\pi}\bm{\cdot}(\bm{\omega}_{\pi}\bm{n})\nonumber\\
&=&(\bm{\nabla}_{\pi}\bm{\cdot}\bm{\omega}_{\pi})\bm{n}-\bm{K}\bm{\cdot}\bm{\omega}_{\pi}\nonumber\\
&=&-(\bm{n}\times\bm{\nabla}_{\pi})\times\bm{\omega}_{\pi}.
\end{eqnarray}
Using $\bm{\xi}=\bm{\omega}_{\pi}\times\bm{n}$, this results in
\begin{equation}\label{phase1}
\underline{(\bm{n}\times\bm{\nabla}_{\pi})\bm{\cdot}(\bm{\xi}\bm{n})=(\bm{n}\times\bm{\nabla}_{\pi})\times\bm{\omega}_{\pi}.}
\end{equation}
Taking $\bm{\Phi}=\omega_{n}\bm{G}\in\mathscr{T}^{2}(T\Sigma)\subset\mathscr{T}^{2}(\mathbb{R}^3)$ in~\eqref{Xie_id1}, and utilizing relations~\eqref{431d} and~\eqref{div_G}, we obtain
\begin{eqnarray}
(\bm{n}\times\bm{\nabla}_{\pi})\bm{\cdot}(\bm{n}\times\omega_{n}\bm{G})&=&\bm{\nabla}_{\pi}\bm{\cdot}(\omega_{n}\bm{G})+K\bm{n}\bm{\cdot}(\omega_{n}\bm{G})\nonumber\\
&=&\bm{\nabla}_{\pi}\omega_{n}\bm{\cdot}\bm{G}+\omega_{n}\bm{\nabla}\bm{\cdot}\bm{G}\nonumber\\
&=&\bm{\nabla}_{\pi}\omega_{n}+K\bm{\omega}_{n}\nonumber\\
&=&(\bm{n}\times\bm{\nabla}_{\pi})\times\bm{\omega}_{n},
\end{eqnarray}
This results in
\begin{equation}\label{phase2}
\underline{(\bm{n}\times\bm{\nabla}_{\pi})\bm{\cdot}(\bm{n}\times\omega_{n}\bm{G})=(\bm{n}\times\bm{\nabla}_{\pi})\times\bm{\omega}_{n}.}
\end{equation}
In conjunction with the above derivation, the sum of~\eqref{phase1} and~\eqref{phase2} gives
\begin{eqnarray}
-\bm{n}\bm{\cdot}\bm{B}_{\bm\omega}&=&(\bm{n}\times\bm{\nabla}_{\pi})\times\bm{\omega}\nonumber\\
&=&\bm{\nabla}_{\pi}\bm{\omega}\bm{\cdot}\bm{n}-(\bm{\nabla}_{\pi}\bm{\cdot}\bm{\omega})\bm{n}\nonumber\\
&=&(\bm{n}\times\bm{\nabla}_{\pi})\bm{\cdot}(\bm{n}\times(\omega_{n}\bm{G}-\bm{\omega}_{\pi}\bm{n}))\nonumber\\
&=&\bm{\nabla}_{\pi}\bm{\cdot}(\omega_{n}\bm{G}-\bm{\omega}_{\pi}\bm{n})\nonumber\\
&=&\bm{\nabla}_{\pi}\omega_{n}+\bm{K}\bm{\cdot}\bm{\omega}_{\pi}-(\bm{\nabla}_{\pi}\bm{\cdot}\bm{\omega})\bm{n}.
\end{eqnarray}
\end{remark}	
\begin{remark}
Interestingly, when discussing the material time derivative of the directed material surface element $\delta\bm{\Sigma}$, similar mathematical structures also arise for $\bm{n}\bm{\cdot}\bm{B}$ on $\bm{\Sigma}$ (see~\eqref{SS1} and~\citet{Wu1995} for details):
\begin{eqnarray}\label{ggg1}
	-\bm{n}\bm{\cdot}\bm{B}&=&(\bm{n}\times\bm{\nabla}_{\pi})\times\bm{u}\nonumber\\
	&=&\bm{\nabla}_{\pi}\bm{u}\bm{\cdot}\bm{n}-(\bm{\nabla}_{\pi}\bm{\cdot}\bm{u})\bm{n}\nonumber\\
	&=&(\bm{n}\times\bm{\nabla}_{\pi})\bm{\cdot}(\bm{n}\times(u_{n}\bm{G}-\bm{u}_{\pi}\bm{n}))\nonumber\\
	&=&\bm{\nabla}_{\pi}\bm{\cdot}(u_{n}\bm{G}-\bm{u}_{\pi}\bm{n})\nonumber\\
	&=&\bm{\nabla}_{\pi}{u}_{n}+\bm{K}\bm{\cdot}\bm{u}_{\pi}-(\bm{\nabla}_{\pi}\bm{\cdot}\bm{u})\bm{n}.
\end{eqnarray}
In analogy to~\eqref{ggg1} and noting~\eqref{dishington}, we formally define $\bm{B}_{\bm\omega}$ as presented in~\eqref{431f}:
\begin{equation}
\bm{B}_{\bm\omega}\equiv(\bm{\nabla}\bm{\cdot}\bm{\omega})\bm{I}-\bm{\omega}\bm{\nabla}=-\bm{\omega}\bm{\nabla}=-\bm{\nabla\omega}^{\rm T},
\end{equation}
where we have used the divergence-free condition $\bm{\nabla}\bm{\cdot}\bm{\omega}=0$.
Since $\bm{\nabla}\bm{\cdot}(\bm{\omega}\bm{\nabla})=\bm{\nabla}(\bm{\nabla}\bm{\cdot}\bm{\omega})=\bm{0}$, the tensor $\bm{B}_{\bm\omega}$ exerts no impact on the material derivative of vorticity for constant viscosity fluids. It will be demonstrated that $\bm{B}_{\bm\omega}$ is responsible for the existence of two alternative interpretations of vorticity dynamics as well as the 3D viscous contribution to Lighthill-Panton-Wu BVF, as elaborated in Section~\ref{def_expre_BVFs}.
\end{remark}
\begin{remark}
Consider a material volume $\mathscr{V}$ enclosed by the boundary surface $\partial\mathscr{V}$, with the outward-directed unit normal $\hat{\bm{n}}=-\bm{n}$.
The generalized Gauss theorem implies that
\begin{eqnarray}
-\oint_{\partial\mathscr{V}}\bm{n}\bm{\cdot}\bm{B}_{\bm\omega}dS=\int_{\mathscr{V}}\bm{\nabla}\bm{\cdot}\bm{B}_{\bm\omega}dV=\bm{0},
\end{eqnarray}
Alternatively, applying the generalized Stokes theorem yields
\begin{equation}
-\oint_{\partial\mathscr{V}}\bm{n}\bm{\cdot}\bm{B}_{\bm\omega}dS=-\oint_{\partial\mathscr{V}}(\hat{\bm{n}}\times\bm{\nabla}_{\pi})\bm{\times}\bm{\omega}dS=-\oint_{\partial(\partial\mathscr{V})=\emptyset}d\bm{l}\bm{\times}\bm{\omega}=\bm{0}.
\end{equation}
In other words, $\bm{B}_{\bm\omega}$ produces no net vorticity flux across the boundary.
\end{remark}

\begin{theorem}\label{thm_new2}
On the base surface $\bm{\Sigma}$, it holds that
\begin{subequations}
	\begin{equation}\label{pp21}
		\left[\frac{\partial\bm{R}}{\partial n}\right]_{\Sigma}=-\bm{n}\times\left(\bm{\nabla}\times\bm{R}\right)+\bm{K}\bm{\cdot}\bm{R},
	\end{equation}
	\begin{equation}\label{pp22}
		\left[\frac{\partial\bm{S}}{\partial n}\right]_{\Sigma}=-\bm{n}\times\left(\bm{\nabla}\times\bm{S}\right)+\bm{K}\bm{\cdot}\bm{S}.
	\end{equation}
\end{subequations}
\end{theorem}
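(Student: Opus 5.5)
The plan is to repeat the general-vector argument behind~\eqref{BVF_identity2} and~\eqref{431c}, applied to $\bm{R}$ and $\bm{S}$ instead of $\bm{\omega}$. The point to exploit is that both modes are tangential not only on $\bm{\Sigma}$ but on every local surface $\bm{\Sigma}_{\zeta}$. Indeed, in the neighborhood $\mathscr{U}$ the normal $\bm{n}(\bm{x})$ is extended by $\bar{\bm{g}}_{3}=\bm{n}$, so $\partial\bm{n}/\partial\zeta=\bm{0}$ by~\eqref{eq26b}. With this extension the definitions~\eqref{decomp_vorticity2}, $\bm{R}=-2\bm{n}\times(\bm{A}\bm{\cdot}\bm{n})$ and $\bm{S}=2\bm{n}\times(\bm{D}\bm{\cdot}\bm{n})$, make sense at every level $\zeta$. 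They give $\bm{n}\bm{\cdot}\bm{R}=0$ and $\bm{n}\bm{\cdot}\bm{S}=0$ identically in $\mathscr{U}$, not merely on $\bm{\Sigma}$. I will fix this extension at the start, since the statement only has meaning once $\partial/\partial n$ of $\bm{R}$ and $\bm{S}$ is specified.

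Let $\bm{V}$ denote either $\bm{R}$ or $\bm{S}$. First, split the full curl via~\eqref{sdo1}:
\begin{equation*}
\bm{\nabla}\times\bm{V}=\bm{\nabla}_{\pi}\times\bm{V}+\bm{n}\times\frac{\partial\bm{V}}{\partial n}.
\end{equation*}
Taking $\bm{n}\times$ and using the $bac$--$cab$ rule gives
\begin{equation*}
\bm{n}\times(\bm{\nabla}\times\bm{V})=\bm{n}\times(\bm{\nabla}_{\pi}\times\bm{V})+\bm{n}\left(\bm{n}\bm{\cdot}\frac{\partial\bm{V}}{\partial n}\right)-\frac{\partial\bm{V}}{\partial n}.
\end{equation*}
Second, because $\bm{n}\bm{\cdot}\bm{V}\equiv 0$ in $\mathscr{U}$ and $\partial_{\zeta}\bm{n}=\bm{0}$, we have $\bm{n}\bm{\cdot}\partial_{n}\bm{V}=\partial_{n}(\bm{n}\bm{\cdot}\bm{V})=0$. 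Hence $\partial\bm{V}/\partial n$ is purely tangential and the middle term drops out.

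Third, evaluate $\bm{n}\times(\bm{\nabla}_{\pi}\times\bm{V})$ on $\bm{\Sigma}$ exactly as in~\eqref{pp20}:
\begin{equation*}
\bm{n}\times(\bm{\nabla}_{\pi}\times\bm{V})=\bm{\nabla}_{\pi}\bm{V}\bm{\cdot}\bm{n}=\bm{\nabla}_{\pi}(\bm{V}\bm{\cdot}\bm{n})-\bm{V}\bm{\cdot}\bm{\nabla}_{\pi}\bm{n}=\bm{K}\bm{\cdot}\bm{V}.
\end{equation*}
This uses $\bm{V}\bm{\cdot}\bm{n}=0$ on $\bm{\Sigma}$, $\bm{K}=-\bm{\nabla}_{\pi}\bm{n}$, and the symmetry of $\bm{K}$; it is~\eqref{431c} with $\bm{\omega}_{\pi}$ replaced by $\bm{V}$. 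Combining the three steps yields $\partial\bm{V}/\partial n=-\bm{n}\times(\bm{\nabla}\times\bm{V})+\bm{K}\bm{\cdot}\bm{V}$, which is~\eqref{pp21} for $\bm{V}=\bm{R}$ and~\eqref{pp22} for $\bm{V}=\bm{S}$.

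The only delicate step is the second one, namely justifying that the normal component of $\partial_{n}\bm{V}$ vanishes. This requires the off-surface definitions of $\bm{R}$ and $\bm{S}$ to use the normal-constant extension of $\bm{n}$. I would state this convention explicitly, and note that with any other extension an additional normal term would appear. As a consistency check, adding the two identities and comparing with~\eqref{BVF_identity2} should reproduce $\partial_{n}\bm{\omega}_{\pi}$, since $\bm{R}+\bm{S}=\bm{\omega}_{\pi}$ in $\mathscr{U}$.
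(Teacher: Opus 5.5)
Your proof is correct and is essentially the paper's argument. The paper likewise uses $R_n=0$ to obtain $\bm{n}\times(\bm{\nabla}_{\pi}\times\bm{R})=\bm{K}\bm{\cdot}\bm{R}$ via the identity behind~\eqref{BVF_identity4}, splits the full curl as $\bm{n}\times(\bm{\nabla}\times\bm{R})=\bm{n}\times(\bm{\nabla}_{\pi}\times\bm{R})-\partial\bm{R}/\partial n$, and treats $\bm{S}$ in the same way; your explicit justification of $\bm{n}\bm{\cdot}\partial_{n}\bm{V}=0$, through the normal-constant extension $\bar{\bm{g}}_{3}=\bm{n}$ that makes $\bm{V}$ tangential throughout $\mathscr{U}$, supplies a step the paper leaves implicit.
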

\begin{proof}[Proof of Theorem~{\upshape\ref{thm_new2}}] Because of $R_{n}=\bm{R}\bm{\cdot}\bm{n}=0$, we obtain
	\begin{eqnarray}\label{pp25}
\bm{n}\times(\bm{\nabla}_{\pi}\times\bm{R})
=\bm{\nabla}_{\pi}R_{n}+\bm{K}\bm{\cdot}\bm{R}
=\bm{K}\bm{\cdot}\bm{R}.
	\end{eqnarray}
Therefore,~\eqref{pp21} holds due to
\begin{eqnarray}\label{pp26}
\bm{n}\times(\bm{\nabla}\times\bm{R})
&=&	\bm{n}\times(\bm{\nabla}_{\pi}\times\bm{R})-\frac{\partial\bm{R}}{\partial n}\nonumber\\
&=&\bm{K}\bm{\cdot}\bm{R}-\frac{\partial\bm{R}}{\partial n}.
\end{eqnarray}
Equation~\eqref{pp22} can be proved in a similar fashion. This completes the proof.
\end{proof}
\begin{remark}
In a small vicinity of the base surface $\bm{\Sigma}$, we introduce 
\begin{subequations}
	\begin{equation}
		\bm{B}_{\bm\omega}=\bm{B}_{\bm R}+\bm{B}_{\bm S},
	\end{equation}
	\begin{equation}
		\bm{B}_{\bm R}\equiv(\bm{\nabla}\bm{\cdot}\bm{R})\bm{I}-\bm{R}\bm{\nabla},~\bm{B}_{\bm S}\equiv(\bm{\nabla}\bm{\cdot}\bm{S})\bm{I}-\bm{S}\bm{\nabla}.
	\end{equation}
\end{subequations}
Then, it follows that
	\begin{subequations}
		\begin{equation}\label{pp23}
		-\bm{n}\bm{\cdot}\bm{B}_{\bm R}=(\bm{n}\times\bm{\nabla}_{\pi})\times\bm{R}=\bm{K}\bm{\cdot}\bm{R}-(\bm{\nabla}_{\pi}\bm{\cdot}\bm{R})\bm{n},
		\end{equation}
		\begin{equation}\label{pp24}
		-\bm{n}\bm{\cdot}\bm{B}_{\bm S}=(\bm{n}\times\bm{\nabla}_{\pi})\times\bm{S}=\bm{K}\bm{\cdot}\bm{S}-(\bm{\nabla}_{\pi}\bm{\cdot}\bm{S})\bm{n}.
		\end{equation}
	\end{subequations}
\end{remark}

\section{Definitions and expressions of boundary vorticity fluxes}\label{def_expre_BVFs}
\subsection{Vorticity current tensor and vorticity transport equation}
For compressible viscus flows, the Navier-Stokes (NS) equations are expressed as~\citep{navier1827memoire,stokes1845theories}
\begin{eqnarray}\label{NSeq}
	\frac{\partial\bm{u}}{\partial t}+\bm{u}\bm{\cdot}\bm{\nabla}\bm{u}=-\frac{1}{\rho}\bm{\nabla}p+\frac{1}{\rho}\bm{\nabla}\bm{\cdot}\bm{T}+\bm{f},
\end{eqnarray}
where $\rho$ is the fluid density, $p$ is the pressure, and $\bm{f}$ is the body force per unit mass. The left-hand side equals to the fluid acceleration $\bm{a}\equiv D\bm{u}/Dt$. For Newtonian fluids, the viscous stress tensor is given by
\begin{equation}\label{viscous_stress}
	\bm{T}=2\mu\bm{D}+\left(\mu_{b}-\frac{2}{3}\mu\right){\vartheta}\bm{I},
\end{equation}
where $\mu$ is the dynamic viscosity, and $\mu_b$ the bulk viscosity. The longitudinal viscosity is defined as $\mu_{\vartheta}\equiv\mu_{b}+4\mu/3$. The corresponding kinematic viscosities are denoted by $\nu_{\vartheta}$, $\nu$ and $\nu_{b}$ after division by the density $\rho$. For simplicity, while retaining the essential physics, the linear diffusion approximation introduced by \citet{Lighthill1956viscosity} is adopted in the theoretical derivation, so that all viscosities are assumed constant. Alternatively, the terms caused by varying viscosity can also be incorporated in the definition of $\bm{f}$. Since $\bm{\nabla}\bm{\cdot}\bm{B}=\bm{0}$ holds for the surface deformation tensor $\bm{B}$, the divergence of $\bm{T}$ is evaluated as~\citep{Wu2006vorticity}
\begin{eqnarray}\label{div_T}	\bm{\nabla}\bm{\cdot}\bm{T}&=&\bm{\nabla}\left(\mu_{\vartheta}\vartheta\right)-\bm{\nabla}\times\left(\mu\bm{\omega}\right)-2\bm{\nabla}\mu\bm{\cdot}\bm{B}\nonumber\\
&=&\mu_{\vartheta}\bm{\nabla}{\vartheta}-\mu\bm{\nabla}\times\bm{\omega}.
\end{eqnarray}
In addition, the Lamb identity gives the following expression for the convective acceleration:
\begin{equation}\label{Lamb_identity}
\bm{u}\bm{\cdot}\bm{\nabla u}=\bm{\omega}\times\bm{u}+\bm{\nabla}\left(\frac{1}{2}u^2\right).
\end{equation}
Substituting~\eqref{div_T} and~\eqref{Lamb_identity} into~\eqref{NSeq} yields
\begin{equation}\label{NS2}
	\bm{a}=\frac{\partial\bm{u}}{\partial t}+\bm{\omega}\times\bm{u}+\bm{\nabla}\left(\frac{1}{2}u^2\right)=-\bm{\nabla}\hat{P}-\nu\bm{\nabla}\times\bm{\omega}+\bm{f},
\end{equation}
where the composite pressure $P\equiv{p}-\mu_{\vartheta}\vartheta$ includes the dilatation modification, and $\bm{\omega}\times\bm{u}$ is the Lamb vector. For the present purpose, we neglect the density variation in the pressure gradient term and assume $-\rho^{-1}\bm{\nabla}P=\bm{\nabla}\hat{P}$ with $\hat{P}\equiv P/\rho$. Consequently, the baroclinic torque term $\rho^{-2}\bm{\nabla}\rho\times\bm{\nabla}p$ is absent from the vorticity transport equation. Alternatively, the terms due to the density and viscosity variations can be included in the force term.

The curl of the Lamb vector can be written as
\begin{eqnarray}
\bm{\nabla}\times(\bm{\omega}\times\bm{u})&=&\bm{u}\bm{\cdot}\bm{\nabla}\bm{\omega}-\bm{\omega}\bm{\cdot}\bm{\nabla}\bm{u}+\left(\bm{\nabla}\bm{\cdot}\bm{u}\right)\bm{\omega}-(\bm{\nabla}\bm{\cdot}\bm{\omega})\bm{u}\nonumber\\
&=&\left[\bm{u}\bm{\cdot}\bm{\nabla}\bm{\omega}+\left(\bm{\nabla}\bm{\cdot}\bm{u}\right)\bm{\omega}\right]-\left[\bm{\omega}\bm{\cdot}\bm{\nabla}\bm{u}+(\bm{\nabla}\bm{\cdot}\bm{\omega})\bm{u}\right]\nonumber\\
&=&\bm{\nabla}\bm{\cdot}\left(\bm{u\omega}-\bm{\omega u}\right)\nonumber\\
&=&\bm{\nabla}\bm{\cdot}\bm{J}_{\rm inv},
\end{eqnarray}
where the inviscid vorticity current tensor is introduced as
\begin{eqnarray}
\bm{J}_{\rm inv}\equiv\bm{u\omega}-\bm{\omega}\bm{u}.
\end{eqnarray}
However, since $\bm{\nabla}\bm{\cdot}\bm{\omega}=0$, these exist two definitions of the viscous vorticity current tensor~\citep{Terrington2023LH}:
\begin{eqnarray}
-\nu\bm{\nabla}\times(\bm{\nabla}\times\bm{\omega})=\nu\nabla^2\bm{\omega}=-\bm{\nabla}\bm{\cdot}\bm{J}_{\rm vis}^{(1)}=-\bm{\nabla}\bm{\cdot}\bm{J}_{\rm vis}^{(2)},
\end{eqnarray}
Here, the definition adopted by~\citet{Lighthill1963},~\citet{Panton1984}, and~\citet{Wu1993} is
\begin{eqnarray}\label{Jvis_LPW}
\bm{J}_{\rm vis}^{(1)}=-\nu\bm{\nabla\omega},
\end{eqnarray}
while the alternative definition reads~\citep{huggins1970exact,huggins1971dynamical,huggins1994vortex}
\begin{eqnarray}\label{Jvis_LH}
	\bm{J}_{\rm vis}^{(2)}\equiv-\nu\left(\bm{\nabla\omega}-\bm{\omega\nabla}\right).
\end{eqnarray}
Based on the historical evolution, $\bm{J}_{\rm vis}^{(1)}$ is referred to as the Lighthill-Panton-Wu vorticity current tensor, whereas $\bm{J}_{\rm vis}^{(2)}$ is designated as the Huggins vorticity current tensor.
Using~\eqref{431f}, the two viscous vorticity current tensors must satisfy
\begin{eqnarray}
\bm{J}_{\rm vis}^{(2)}=\bm{J}_{\rm vis}^{(1)}-\nu\bm{B}_{\bm\omega},~~\text{where}~~\bm{\nabla}\bm{\cdot}\bm{B}_{\bm\omega}=\bm{0}.
\end{eqnarray}
This indicates that the tensor $\bm{B}_{\bm\omega}$ reveals the non-uniqueness of $\bm{J}_{\rm vis}$. Replacing $\bm{J}_{\rm vis}$ with $\bm{J}_{\rm vis}+C\bm{B}_{\bm\omega}~(C~\text{is a real constant})$ does not alter the result.
Consequently, the vorticity transport equation can be concisely written as
\begin{eqnarray}\label{vorticity_eq}
	\frac{\partial\bm{\omega}}{\partial t}=-\bm{\nabla}\bm{\cdot}\bm{J}_{\rm inv}-\bm{\nabla}\bm{\cdot}\bm{J}_{\rm vis},
\end{eqnarray}
where $\bm{J}_{\rm vis}$ can takes the form of $\bm{J}_{\rm vis}^{(1)}$ in~\eqref{Jvis_LPW} or $\bm{J}_{\rm vis}^{(2)}$ in~\eqref{Jvis_LH}.

\subsection{Two alternative definitions of boundary vorticity flux}
By integrating~\eqref{vorticity_eq} over a fixed control volume $\mathscr{V}$ bounded by its surface $\partial\mathscr{V}$, and subsequently applying the Gauss-Ostrogradsky divergence theorem, one can derive the following integral conservation law for vorticity~\citep{Terrington2023LH}:
\begin{eqnarray}\label{integral}
	\frac{d}{dt}\int_{\mathscr V}\bm{\omega}dV=-\oint_{\partial \mathscr{V}}\hat{\bm{n}}\bm{\cdot}\bm{J}_{\rm inv}dS-\oint_{\partial \mathscr{V}}\hat{\bm{n}}\bm{\cdot}\bm{J}_{\rm vis}dS,
\end{eqnarray}
where $\hat{\bm{n}}=-\bm{n}$ denotes the outward-pointing unit normal vector on $\partial{\mathscr{V}}$. On the right-hand side of~\eqref{integral}, the first and second terms correspond, respectively, to the total inviscid and viscous fluxes crossing the control-volume surface. The first term, due to vorticity convection and stretching, gives
\begin{equation}
-\oint_{\partial \mathscr{V}}\hat{\bm{n}}\bm{\cdot}\bm{J}_{\rm inv}dS=\oint_{\partial \mathscr{V}}\left[(\bm{n}\bm{\cdot}\bm{u})\bm{\omega}-(\bm{n}\bm{\cdot}\bm{\omega})\bm{u}\right]dS.
\end{equation}
The integrand of the second term defines the boundary vorticity flux (BVF), given explicitly by
\begin{equation}\label{BVF_def}
	\bm{\sigma}_{\bm{\omega}}\equiv\hat{\bm{n}}\bm{\cdot}\bm{J}_{\rm vis}=-\bm{n}\bm{\cdot}\bm{J}_{\rm vis}.
\end{equation}
\begin{itemize}
	\item From~\eqref{Jvis_LPW} and~\eqref{BVF_def}, we obtain the Lighthill-Panton-Wu BVF~\citep{Lighthill1963,Panton1984,Wu1986,Wu1993}:
	\begin{eqnarray}\label{BVF1}
		\bm{\sigma}_{\bm\omega}^{(1)}\equiv-\bm{n}\bm{\cdot}\bm{J}_{\rm vis}^{(1)}=\nu\bm{n}\bm{\cdot}\bm{\nabla\omega}=\nu\frac{\partial\bm{\omega}}{\partial n}~~\text{on}~~\partial\mathscr{V}.
	\end{eqnarray}
	\item From~\eqref{Jvis_LH} and~\eqref{BVF_def}, we obtain the Lyman-Huggins BVF (or simply, the Lyman flux)~\citep{Lyman1990vorticity}:
	\begin{equation}\label{BVF2}
		\bm{\sigma}_{\bm\omega}^{(2)}\equiv-\bm{n}\bm{\cdot}\bm{J}_{\rm vis}^{(2)}=-\nu\bm{n}\times\left(\bm{\nabla}\times\bm{\omega}\right)~~\text{on}~~\partial\mathscr{V}.
	\end{equation}
\end{itemize}
When $\partial\mathscr{V}$ contains a solid boundary segment $\bm{\Sigma}$, the BVF measures the local vorticity creation rate per unit area and thereby describes the vorticity source strength, since no fluid vorticity exists on the opposite side of the boundary. Although the two BVFs generally yield different local vorticity creation rates along a boundary due to $\bm{n}\bm{\cdot}\bm{B}_{\bm\omega}$, their respective integrals over any closed surface $\partial\mathscr{V}$ are equal to the negative of the volume integral of the vorticity diffusion term:
\begin{eqnarray}
	\oint_{\partial \mathscr{V}}\bm{\sigma}_{\bm\omega}^{(1)}dS=\oint_{\partial \mathscr{V}}\bm{\sigma}_{\bm\omega}^{(2)}dS=-\int_{\mathscr V}\nu\nabla^2\bm{\omega}dV.
\end{eqnarray}
\subsection{Full expressions of two boundary vorticity fluxes}
The full expression of the BVF $\bm{\sigma}_{\bm\omega}^{(1)}$ was derived by~\citet{Wu1993} and~\citet{WuWu1996}, extending the work of~\citet{Lighthill1963} for 2D flow past a stationary wall to an arbitrarily moving and deforming boundary $\bm{\Sigma}$. For a clearer exposition of the theoretical structure, we decompose $\bm{\sigma}_{\bm\omega}^{(1)}$ by applying~\eqref{BVF_identity1} on $\bm{\Sigma}$, yielding
\begin{subequations}
	\begin{equation}\label{Lyman_flux1}
		\bm{\sigma}_{\bm\omega}^{(1)}\equiv\nu\left[\frac{\partial\bm{\omega}}{\partial n}\right]=\bm{\sigma}_{\bm\omega}^{(2)}+\bm{\sigma}_{\rm vis},
	\end{equation}
	\begin{equation}\label{sigma_vis}
		\bm{\sigma}_{\bm\omega}^{(2)}\equiv-\nu\bm{n}\times\left(\bm{\nabla}\times\bm{\omega}\right),~\bm{\sigma}_{\rm vis}\equiv\nu(\bm{n}\times\bm{\nabla})\times\bm{\omega}.
	\end{equation}
\end{subequations}
Clearly, the Lyman-Huggins BVF $\bm{\sigma}_{\bm\omega}^{(2)}$~\citep{Lyman1990vorticity} constitutes a part of the Lighthill-Panton-Wu BVF $\bm{\sigma}_{\bm\omega}^{(1)}$. By applying~\eqref{NS2} to the boundary $\bm{\Sigma}$ and imposing the no-slip boundary condition, the explicit expression for $\bm{\sigma}_{\bm\omega}^{(2)}$ is obtained as
\begin{equation}\label{Wub}
	\bm{\sigma}_{\bm\omega}^{(2)}=\bm{n}\times(\bm{a}-\bm{f})+\bm{n}\times\bm{\nabla}_{\pi}\hat{P}.
\end{equation}
Equation~\eqref{Wub} reveals that vorticity is generated on a boundary through tangential acceleration, tangential external force, and surface pressure gradient, with the aid of viscosity and the no-slip boundary condition.
By contrast, $\bm{\sigma}_{\rm vis}$ is directly obtainable from the kinematic relation~\eqref{xx321}, taking the form
\begin{eqnarray}\label{Wuc}
	\bm{\sigma}_{\rm vis}&=&-\nu\bm{n}\bm{\cdot}\bm{B}_{\bm\omega}=\nu\bm{\nabla}\bm{\omega}\bm{\cdot}\bm{n}\nonumber\\
	&=&\nu\bm{\nabla}_{\pi}\bm{\omega}\bm{\cdot}\bm{n}-\nu(\bm{\nabla}_{\pi}\bm{\cdot}\bm{\omega})\bm{n}\nonumber\\
	&=&\nu\bm{\nabla}_{\pi}\omega_{n}+\nu\bm{K}\bm{\cdot}\bm{\omega}_{\pi}-\nu(\bm{\nabla}_{\pi}\bm{\cdot}\bm{\omega})\bm{n}.
\end{eqnarray}
Equation~\eqref{Wuc} represents the 3D viscous contributions to $\bm{\sigma}_{\bm\omega}^{(1)}$, which arise from three distinct sources: the surface tangential gradient of the wall-normal vorticity component, the coupling between the surface curvature tensor and the surface tangential vorticity component, as well as the surface vorticity divergence. This indicates that the boundary vorticity field can induce significant tangential BVF in regions of large surface curvature, such as along the side edge of a wing.

From~\eqref{BVF_identity2} and~\eqref{BVF_identity3},  an orthogonal decomposition of $\bm{\sigma}_{\bm\omega}^{(1)}$ is obtained as
\begin{subequations}\label{520abc}
	\begin{equation}\label{ggg}
	\bm{\sigma}_{\bm\omega}^{(1)}=\bm{\sigma}_{\pi}^{(1)}+\bm{\sigma}_{n}^{(1)},
	\end{equation}
	\begin{eqnarray}\label{BVF_identity2p}
\bm{\sigma}_{\pi}^{(1)}&\equiv&\nu\left[\frac{\partial\bm{\omega}_{\pi}}{\partial n}\right]_{\Sigma}=\bm{\sigma}_{\bm\omega}^{(2)}+(\bm{\sigma}_{\rm vis})_{\pi}\nonumber\\
&=&\bm{n}\times(\bm{a}-\bm{f})+\bm{n}\times\bm{\nabla}_{\pi}\hat{P}+\nu\bm{\nabla}_{\pi}{\omega}_{n}+\nu\bm{K}\bm{\cdot}\bm{\omega}_{\pi},
	\end{eqnarray}
	\begin{equation}\label{BVF_identity3p}
	\bm{\sigma}_{n}^{(1)}\equiv\nu\left[\frac{\partial\bm{\omega}_{n}}{\partial n}\right]_{\Sigma}=(\bm{\sigma}_{\rm vis})_{n}=-\nu(\bm{\nabla}_{\pi}\bm{\cdot}\bm{\omega})\bm{n}.
	\end{equation}
\end{subequations}
We observe that the tangential BVF $\bm{\sigma}_{\pi}^{(1)}$ encompasses the entirety of  $\bm{\sigma}_{\bm\omega}^{(2)}$ along with the tangential component of the viscous contribution, $(\bm{\sigma}_{\rm vis})_{\pi}=\nu\bm{\nabla}_{\pi}\bm{\omega}\bm{\cdot}\bm{n}$. The wall-normal BVF, defined as $\bm{\sigma}_{n}^{(1)}=(\bm{\sigma}_{\rm vis})_{n}=(\bm{\sigma}_{\rm vis}\bm{\cdot}\bm{n})\bm{n}$, represents the tilting of boundary vorticity lines toward the wall-normal
direction, and is determined solely by the solenoidal condition of vorticity. In the vicinity of a focus on a stationary wall $\bm{\Sigma}$, such as that associated with a tornado vortex, $\bm{\sigma}_{n}^{(1)}$ typically manifests with
converging boundary vorticity lines $(\bm{\omega}\text{-lines})$ and swirling skin-friction lines $(\bm{\tau}\text{-lines})$~\citep{Wu2018review,LiuTS2019PAS}.
\subsection{Simplified expressions of boundary vorticity fluxes on a stationary wall}\label{simple_BVF}
On a stationary curved wall $\bm{\Sigma}$, the velocity adherence condition $\bm{u}=\bm{0}$ implies $\bm{\omega}_{n}=\bm{0}$, $\bm{\omega}=\bm{\omega}_{\pi}$, and $\bm{a}=\bm{0}$; moreover, $\vartheta_{\pi}=0$ and $\vartheta=\vartheta_{n}$ hold for compressible flow. Consequently, we obtain
\begin{itemize}
	\item The Lighthill-Panton-Wu BVF
	\begin{equation}\label{u0_LP_BVF}
		\bm{\sigma}_{\bm\omega}^{(1)}=-\bm{n}\times\bm{f}+\bm{n}\times\bm{\nabla}_{\pi}\hat{P}+\nu\bm{K}\bm{\cdot}\bm{\omega}_{\pi}-\nu(\bm{\nabla}_{\pi}\bm{\cdot}\bm{\omega})\bm{n}.
	\end{equation}
	\item The Lyman-Huggins BVF
	\begin{equation}\label{u0_LH_BVF}
		\bm{\sigma}_{\bm\omega}^{(2)}=-\bm{n}\times\bm{f}+\bm{n}\times\bm{\nabla}_{\pi}\hat{P}.
	\end{equation}
\end{itemize}

\section{Definitions and expressions of boundary enstrophy fluxes}\label{kkk1}
\subsection{Enstrophy transport equation}
By taking the dot product of~\eqref{vorticity_eq} with $\bm{\omega}$, the enstrophy transport equation is derived as
\begin{equation}\label{eq61}
\frac{\partial\Omega}{\partial t}+\bm{u}\bm{\cdot}\bm{\nabla}\Omega=-\bm{\omega}\bm{\cdot}\bm{B}\bm{\cdot}\bm{\omega}-\bm{\nabla}\bm{\cdot}(\bm{J}_{\rm vis}\bm{\cdot}\bm{\omega})+\bm{J}_{\rm vis}\bm{:}\bm{\nabla\omega},
\end{equation}
where the left-hand side represents the material derivative of enstrophy. On the right-hand side, the first term equals to the sum of the vortex stretching term and the dilatation-enstrophy coupling term, namely,
\begin{equation}\label{eq62}
	-\bm{\omega}\bm{\cdot}\bm{B}\bm{\cdot}\bm{\omega}=\bm{\omega}\bm{\cdot}\bm{D}\bm{\cdot}\bm{\omega}-2\vartheta\Omega.
\end{equation}
The second term describes the diffusion of enstrophy.
The last term describes the viscous dissipation of enstrophy.
For incompressible flow,~\eqref{eq61} is consistent with the formulation of~\cite{Terrington2023LH}.

\subsection{Two alternative definitions of boundary enstrophy flux}
The integration of the viscous diffusion term over a fixed control volume $\mathscr{V}$ gives~\citep{Terrington2023LH}
\begin{eqnarray}\label{surface_integral}
\int_{\mathscr{V}}\bm{\nabla}\bm{\cdot}(\bm{J}_{\rm vis}\bm{\cdot}\bm{\omega})dV=\oint_{\partial \mathscr{V}}\hat{\bm{n}}\bm{\cdot}\bm{J}_{\rm vis}\bm{\cdot}\bm{\omega}dS,
\end{eqnarray}
where the integrand of the surface integral defines the boundary enstrophy flux (BEF) as
\begin{equation}\label{general_BEF}
F_{\Omega}\equiv\hat{\bm{n}}\bm{\cdot}\bm{J}_{\rm vis}\bm{\cdot}\bm{\omega}=\bm{\sigma}_{\bm\omega}\bm{\cdot}\bm{\omega},
\end{equation}
which describes the enstrophy creation rate on a boundary. Corresponding to the two BVFs in~\eqref{BVF1} and~\eqref{BVF2}, there exist two types of BEF as follows.
\begin{itemize}
	\item Using~\eqref{BVF1} and~\eqref{general_BEF}, we obtain
	\begin{eqnarray}\label{LPW_BEF}
		F_{\Omega}^{(1)}=\bm{\sigma}_{\bm\omega}^{(1)}\bm{\cdot}\bm{\omega}=\nu\frac{\partial\bm{\omega}}{\partial n}\bm{\cdot}\bm{\omega}=\nu\frac{\partial{\Omega}}{\partial n}~~\text{on}~~\partial\mathscr{V}.
	\end{eqnarray}
This definition bears analogy to classical Fourier's law of heat conduction, and has been used by~\citet{Wu1995},~\citet{LiuTS2016}, and~\citet{Chen2021features}. We refer to $F_{\Omega}^{(1)}$ as the Lighthill-Panton-Wu BEF.
\item 
Using~\eqref{BVF2} and~\eqref{general_BEF}, we obtain 
\begin{eqnarray}\label{LH_BEF}
	F_{\Omega}^{(2)}=\bm{\sigma}_{\bm\omega}^{(2)}\bm{\cdot}\bm{\omega}=-\nu\bm{\xi}\bm{\cdot}\left(\bm{\nabla}\times\bm{\omega}\right)~~\text{on}~~\partial\mathscr{V}.
\end{eqnarray}
We refer to $F_{\Omega}^{(2)}$ as the Lyman-Huggins BEF or the Lyman enstrophy flux.
\end{itemize}

\subsection{Full expressions of two boundary enstrophy fluxes}
The full expressions of the two BEFs are given as follows.
\begin{itemize}
	\item Using~\eqref{520abc}, the Lighthill-Panton-Wu BEF in~\eqref{LPW_BEF} is obtained as
	\begin{subequations}\label{6p7abcd}
		\begin{equation}\label{6p7a}
			F_{\Omega}^{(1)}=F_{\pi}^{(1)}+F_{n}^{(1)},
		\end{equation}
		\begin{equation}\label{6p7b}
			F_{\pi}^{(1)}\equiv\nu\left[\frac{\partial\Omega_{\pi}}{\partial n}\right]_{\Sigma}=\bm{\omega}_{\pi}\bm{\cdot}\bm{\sigma}_{\pi}^{(1)},~F_{n}^{(1)}\equiv\nu\left[\frac{\partial\Omega_{n}}{\partial n}\right]_{\Sigma}=\omega_{n}\sigma_{n}^{(1)},
		\end{equation}
		\begin{equation}\label{6p7c}
			F_{\pi}^{(1)}=\bm{\xi}\bm{\cdot}(\bm{a}-\bm{f})+\bm{\xi}\bm{\cdot}\bm{\nabla}_{\pi}\hat{P}+\nu\bm{\omega}_{\pi}\bm{\cdot}\bm{\nabla}_{\pi}\omega_{n}+\nu\bm{\omega}_{\pi}\bm{\cdot}\bm{K}\bm{\cdot}\bm{\omega}_{\pi},
		\end{equation}
		\begin{equation}\label{6p7d}
			F_{n}^{(1)}=-\nu\omega_{n}\left(\bm{\nabla}_{\pi}\bm{\cdot}\bm{\omega}\right)=-\nu\omega_{n}\left(\bm{\nabla}_{\pi}\bm{\cdot}\bm{\omega}_{\pi}\right)+2\nu{K}\Omega_{n}.
		\end{equation}
	\end{subequations}
	\item Using~\eqref{Wub}, the Lyman-Huggins BEF in~\eqref{LH_BEF} is obtained as
	\begin{equation}\label{6p8}
		F_{\Omega}^{(2)}=\bm{\xi}\bm{\cdot}(\bm{a}-\bm{f})+\bm{\xi}\bm{\cdot}\bm{\nabla}_{\pi}\hat{P}.
	\end{equation}
\end{itemize}

\subsection{Simplified expressions of boundary enstrophy fluxes on a stationary wall}\label{simple_BEF}
On a stationary curved wall (following~\S\ref{simple_BVF}),~\eqref{6p7abcd} and~\eqref{6p8} reduce to
\begin{itemize}
	\item The Lighthill-Panton-Wu BEF
	\begin{equation}
		F_{\Omega}^{(1)}=-\bm{\xi}\bm{\cdot}\bm{f}+\bm{\xi}\bm{\cdot}\bm{\nabla}_{\pi}\hat{P}+\nu\bm{\omega}\bm{\cdot}\bm{K}\bm{\cdot}\bm{\omega},
	\end{equation}
	\item The Lyman-Huggins BEF
	\begin{equation}
		F_{\Omega}^{(2)}=-\bm{\xi}\bm{\cdot}\bm{f}+\bm{\xi}\bm{\cdot}\bm{\nabla}_{\pi}\hat{P}.
	\end{equation}
\end{itemize}
Note that the skin friction is $\bm{\tau}=\mu\bm{\xi}=\mu\bm{\omega}\times\bm{n}$ with $\bm{\omega}=\bm{\omega}_{\pi}$. At a surface point where $\bm{\omega}\neq\bm{0}$, $(\bm{\tau},\bm{\omega})$ forms a special
orthonormal frame on $\bm{\Sigma}$, which has been called the $\bm{\tau}$-frame in a vorticity dynamics theory of 3D flow separation~\citep{Wu2000POF}.

\section{Intrinsic decomposition of boundary vorticity and enstrophy fluxes under the Lighthill-Panton-Wu definition}\label{BVF_LPW}
\subsection{Intrinsic decomposition of the Lighthill-Panton-Wu boundary vorticity flux}
\begin{theorem}\label{thm7}
On the base surface $\bm{\Sigma}$, the Lighthill-Panton-Wu BVF $\bm{\sigma}_{\bm\omega}^{(1)}$ in~\eqref{520abc} can be intrinsically decomposed as
\begin{subequations}
\begin{equation}\label{LH1}
	\bm{\sigma}_{\bm\omega}^{(1)}=\bm{\sigma}_{\pi}^{(1)}+\bm{\sigma}_{n}^{(1)}=\bm{\sigma}_{R}^{(1)}+\bm{\sigma}_{S}^{(1)}+\bm{\sigma}_{n}^{(1)},
\end{equation}
\begin{equation}\label{LH1a}
\bm{\sigma}_{R}^{(1)}\equiv\nu\left[\frac{\partial\bm{R}}{\partial n}\right]_{\Sigma},~\bm{\sigma}_{S}^{(1)}\equiv\nu\left[\frac{\partial\bm{S}}{\partial n}\right]_{\Sigma},~\bm{\sigma}_{n}^{(1)}\equiv\nu\left[\frac{\partial\bm{\omega}_{n}}{\partial n}\right]_{\Sigma},
\end{equation}
\end{subequations}
where the boundary rigid-rotation flux $\bm{\sigma}_{R}^{(1)}$, the boundary spin flux $\bm{\sigma}_{S}^{(1)}$, and the wall-normal BVF $\bm{\sigma}_{n}^{(1)}$ are respectively expressed as
\begin{subequations}\label{vvvv1}
\begin{equation}\label{LH2}
	\bm{\sigma}_{R}^{(1)}=\dashuline{-2\nu\bm{n}\times\bm{\nabla}_{\pi}\vartheta_{n}}~\underline{-2\nu{K}\bm{S}+2\nu\bm{K}\bm{\cdot}\bm{S}},
\end{equation}
\begin{eqnarray}\label{LH3}
	\bm{\sigma}_{S}^{(1)}&=&\dashuline{2\nu\bm{n}\times\bm{\nabla}_{\pi}\vartheta_{n}}\underline{\underline{+\nu\bm{\nabla}_{\pi}\omega_{n}}}\nonumber\\
	& &~\dotuline{+\bm{n}\times(\bm{a}-\bm{f})+\bm{n}\times\bm{\nabla}_{\pi}\hat{P}}\nonumber\\
	& &~\underline{+\nu\bm{K}\bm{\cdot}\bm{R}+2\nu{K}\bm{S}-\nu\bm{K}\bm{\cdot}\bm{S}},
\end{eqnarray}
	\begin{equation}\label{LH4}	\sigma_{n}^{(1)}=-\nu\bm{\nabla}_{\pi}\bm{\cdot}\bm{R}-\nu\bm{\nabla}_{\pi}\bm{\cdot}\bm{S}+\nu K{\omega}_{n}=-\nu\bm{\nabla}_{\pi}\bm{\cdot}\bm{\omega}.
\end{equation}
\end{subequations}
In terms of the conjugate curvature tensor pair $(\bm{K},\hat{\bm{K}})$ introduced in~\eqref{713_ab}, \eqref{vvvv1} admit the following compact representations:
\begin{subequations}
	\begin{equation}\label{LH2equivalent}
		\bm{\sigma}_{R}^{(1)}=\dashuline{-2\nu\bm{n}\times\bm{\nabla}_{\pi}\vartheta_{n}}~\underline{-2\nu\hat{\bm{K}}\bm{\cdot}\bm{S}},
	\end{equation}
	\begin{eqnarray}\label{LH3equivalent}
		\bm{\sigma}_{S}^{(1)}&=&\dashuline{2\nu\bm{n}\times\bm{\nabla}_{\pi}\vartheta_{n}}\underline{\underline{+\nu\bm{\nabla}_{\pi}\omega_{n}}}\nonumber\\
		& &~\dotuline{+\bm{n}\times(\bm{a}-\bm{f})+\bm{n}\times\bm{\nabla}_{\pi}\hat{P}}\nonumber\\
		& &~\underline{+\nu\bm{K}\bm{\cdot}\bm{R}+\nu\bm{K}\bm{\cdot}\bm{S}+2\nu\hat{\bm{K}}\bm{\cdot}\bm{S}}.
	\end{eqnarray}
\end{subequations}
\end{theorem}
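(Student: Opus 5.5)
The plan is to compute $\bm{\sigma}_{R}^{(1)}$ directly from the definition of the rigid-rotation mode. Then $\bm{\sigma}_{S}^{(1)}$ follows by subtraction from the already known tangential flux $\bm{\sigma}_{\pi}^{(1)}$, and $\bm{\sigma}_{n}^{(1)}$ is read off from Theorem~\ref{thm3}.

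\textbf{Step 1: the normal derivative of $\bm{R}$.} In the neighbourhood $\mathscr{U}$ I extend $\bm{R}=-2\bm{n}\times(\bm{A}\bm{\cdot}\bm{n})$ using the normal field of the semi-orthogonal coordinates. Since $\partial\bm{n}/\partial\zeta=\bm{0}$ by~\eqref{eq26b}, this gives $[\partial\bm{R}/\partial n]_{\Sigma}=-2\bm{n}\times\left([\partial\bm{A}/\partial n]_{\Sigma}\bm{\cdot}\bm{n}\right)$. I then insert~\eqref{hh1} of Theorem~\ref{thm2} term by term:
\begin{itemize}
\item The term $\bm{n}[\partial^2\bm{u}/\partial n^2]\bm{\cdot}\bm{n}$ is parallel to $\bm{n}$, so it is killed by the cross product.
\item By~\eqref{eq320a}, $\bm{K}\bm{\cdot}\bm{\nabla}_{\pi}\bm{u}\bm{\cdot}\bm{n}=-\tfrac12\bm{K}\bm{\cdot}\bm{\xi}_{R}$.
\item For $\bm{\nabla}_{\pi}[\partial\bm{u}/\partial n]\bm{\cdot}\bm{n}$ I use~\eqref{hh2}. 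The $-\vartheta_n\bm{K}$ term is tangential, so dotting with $\bm{n}$ removes it. For the tangential vector $\bm{\eta}=\tfrac12\bm{\xi}_R+\bm{\xi}_S$, differentiating $\bm{\eta}\bm{\cdot}\bm{n}=0$ and using $\bm{\nabla}_{\pi}\bm{n}=-\bm{K}$ gives $\bm{\nabla}_{\pi}\bm{\eta}\bm{\cdot}\bm{n}=\bm{K}\bm{\cdot}\bm{\eta}$. Hence this term contributes $\bm{\nabla}_{\pi}\vartheta_n+\tfrac12\bm{K}\bm{\cdot}\bm{\xi}_R+\bm{K}\bm{\cdot}\bm{\xi}_S$.
\end{itemize}
The two $\bm{\xi}_R$ contributions cancel, which is the key simplification. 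The result is
\[
[\partial\bm{A}/\partial n]_{\Sigma}\bm{\cdot}\bm{n}=\bm{\nabla}_{\pi}\vartheta_n+\bm{K}\bm{\cdot}\bm{\xi}_S .
\]
Consequently $\bm{\sigma}_{R}^{(1)}=-2\nu\bm{n}\times\bm{\nabla}_{\pi}\vartheta_n-2\nu\bm{n}\times(\bm{K}\bm{\cdot}(\bm{S}\times\bm{n}))$.

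\textbf{Step 2: converting the curvature term.} I apply Theorem~\ref{thm1} with $\bm{a}=\bm{S}$ and $\bm{c}=\bm{S}\times\bm{n}=\bm{\xi}_S$. This gives $\bm{n}\times(\bm{K}\bm{\cdot}\bm{\xi}_S)=K\bm{S}-\bm{K}\bm{\cdot}\bm{S}=\hat{\bm{K}}\bm{\cdot}\bm{S}$, which yields both~\eqref{LH2} and~\eqref{LH2equivalent}.

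\textbf{Step 3: the spin flux by subtraction.} Since $\bm{\omega}_{\pi}=\bm{R}+\bm{S}$ on and near $\bm{\Sigma}$, linearity gives $\bm{\sigma}_{S}^{(1)}=\bm{\sigma}_{\pi}^{(1)}-\bm{\sigma}_{R}^{(1)}$. I take $\bm{\sigma}_{\pi}^{(1)}$ from~\eqref{BVF_identity2p}, expand $\nu\bm{K}\bm{\cdot}\bm{\omega}_{\pi}=\nu\bm{K}\bm{\cdot}\bm{R}+\nu\bm{K}\bm{\cdot}\bm{S}$, and subtract Step 1. This produces $+2\nu\bm{n}\times\bm{\nabla}_{\pi}\vartheta_n$ and $+2\nu\hat{\bm{K}}\bm{\cdot}\bm{S}$, i.e.~\eqref{LH3equivalent}. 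Substituting $\hat{\bm{K}}=K\bm{G}-\bm{K}$ gives~\eqref{LH3}.

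\textbf{Step 4: the normal flux.} $\sigma_{n}^{(1)}$ is exactly $\nu$ times~\eqref{pp2} of Theorem~\ref{thm3}, equivalently~\eqref{BVF_identity3p}. For the middle form I write $\bm{\omega}=\bm{R}+\bm{S}+\omega_n\bm{n}$ and use~\eqref{pp16}.

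\textbf{Main obstacle.} The delicate step is Step 1. It requires justifying that the normal derivative may be taken inside $\bm{R}$ with $\bm{n}$ held fixed along $\zeta$, and correctly evaluating $\bm{\nabla}_{\pi}\bm{\eta}\bm{\cdot}\bm{n}$ for tangential $\bm{\eta}$, where the sign convention $\bm{K}=-\bm{\nabla}_{\pi}\bm{n}$ must be tracked. If the sign there were wrong, the $\bm{\xi}_R$ terms would not cancel and spurious $\bm{R}$-curvature terms would appear in $\bm{\sigma}_R^{(1)}$. I would therefore check it against the stationary-wall case $\bm{R}=\bm{0}$.
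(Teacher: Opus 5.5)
Your proof is correct and follows the paper's argument for the rigid-rotation flux. The computation of $\bm{\sigma}_{R}^{(1)}$ via \eqref{hh1}--\eqref{hh2}, the cancellation of the $\bm{\xi}_{R}$ contributions, and the conversion through Theorem~\ref{thm1} are exactly what the paper does in \eqref{mm1}--\eqref{mm3}. One small imprecision: your displayed formula for $[\partial\bm{A}/\partial n]_{\Sigma}\bm{\cdot}\bm{n}$ is only its tangential part, but the omitted term $[\partial^{2}u_{n}/\partial n^{2}]_{\Sigma}\bm{n}$ is annihilated by $\bm{n}\times$, so nothing is lost.

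For the spin flux you take the shortcut $\bm{\sigma}_{S}^{(1)}=\bm{\sigma}_{\pi}^{(1)}-\bm{\sigma}_{R}^{(1)}$. The paper instead splits $\bm{S}=\bm{n}\times(\bm{A}\bm{\cdot}\bm{n})+\bm{n}\times(\bm{n}\bm{\cdot}\bm{A})$ and routes the second piece through \eqref{pp1} before invoking \eqref{BVF_identity2p}. That is the same computation written out at greater length, so your version is equivalent and more direct.
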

Equations~\eqref{LH2equivalent} and~\eqref{LH3equivalent} reveal that $\bm{\sigma}_{R}^{(1)}$ and $\bm{\sigma}_{S}^{(1)}$ 
both receive contributions from the surface gradient of surface-normal dilatation ($\pm2\nu\bm{n}\times\bm{\nabla}_{\pi}\vartheta_{n}$), and the $\hat{\bm{K}}-\bm{S}$ interaction ($\pm 2\nu\hat{\bm{K}}\bm{\cdot}\bm{S}$). The boundary spin flux $\bm{\sigma}_{S}^{(1)}$ additionally comprises five distinct sources: the tangential gradient of surface-normal vorticity ($\nu\bm{\nabla}_{\pi}\omega_{n}$), the tangential acceleration or external force ($\bm{n}\times(\bm{a}-\bm{f})$), the surface pressure gradient ($\bm{n}\times\bm{\nabla}_{\pi}\hat{P}$), the ${\bm{K}}-\bm{R}$ and ${\bm{K}}-\bm{S}$ coupling terms ($\nu\bm{K}\bm{\cdot}\bm{R}$ and $\nu\bm{K}\bm{\cdot}\bm{S}$, respectively).
\begin{proof}[Proof of Theorem~{\upshape\ref{thm7}}] By virtue of \eqref{hh1}, the boundary rigid-rotation flux is evaluated as
\begin{eqnarray}\label{mm1}
\bm{\sigma}_{R}^{(1)}&\equiv&\nu\left[\frac{\partial\bm{R}}{\partial n}\right]_{\Sigma}=-2\nu\bm{n}\times\left(\left[\frac{\partial\bm{A}}{\partial n}\right]_{\Sigma}\bm{\cdot}\bm{n}\right)\nonumber\\
&=&-2\nu\bm{n}\times(\bm{K}\bm{\cdot}\bm{\nabla}_{\pi}\bm{u}\bm{\cdot}\bm{n})-2\nu\bm{n}\times\left(\bm{\nabla}_{\pi}\left[\frac{\partial\bm{u}}{\partial n}\right]_{\Sigma}\bm{\cdot}\bm{n}\right).
\end{eqnarray}
On the right-hand side of \eqref{mm1}, the first term, by virtue of \eqref{hhh1}, is evaluated as
\begin{eqnarray}\label{mm2}
-2\nu\bm{n}\times(\bm{K}\bm{\cdot}\bm{\nabla}_{\pi}\bm{u}\bm{\cdot}\bm{n})
	&=&-2\nu\bm{n}\times\left[\bm{K}\bm{\cdot}\left(\bm{\nabla}_{C}\bm{u}_{\pi}-\frac{1}{2}\bm{\xi}_{R}\bm{n}-u_{n}\bm{K}\right)\bm{\cdot}\bm{n}\right]\nonumber\\
	&=&\nu\bm{n}\times\left(\bm{K}\bm{\cdot}\bm{\xi}_{R}\right)\nonumber\\
	&=&\hat{\bm{K}}\bm{\cdot}\bm{R}.
\end{eqnarray}
By means of~\eqref{hh2}, the second term is calculated as
\begin{eqnarray}\label{mm3}
	&&-2\nu\bm{n}\times\left(\bm{\nabla}_{\pi}\left[\frac{\partial\bm{u}}{\partial n}\right]_{\Sigma}\bm{\cdot}\bm{n}\right)\nonumber\\
	&=&-2\nu\bm{n}\times\left[\bm{\nabla}_{\pi}\vartheta_{n}+\bm{\nabla}_{\pi}\left(\frac{1}{2}\bm{\xi}_{R}+\bm{\xi}_{S}\right)\bm{\cdot}\bm{n}\right]\nonumber\\
	&=&-2\nu\bm{n}\times\bm{\nabla}_{\pi}\vartheta_{n}-\nu\bm{n}\times\left(\bm{K}\bm{\cdot}\bm{\xi}_{R}\right)-2\nu\bm{n}\times\left(\bm{K}\bm{\cdot}\bm{\xi}_{S}\right)\nonumber\\
	&=&-2\nu\bm{n}\times\bm{\nabla}_{\pi}\vartheta_{n}-\nu\hat{\bm{K}}\bm{\cdot}\bm{R}-2\nu\hat{\bm{K}}\bm{\cdot}\bm{S}.
\end{eqnarray}
where the first identity in~\eqref{713_ab} enables the expansion of the last term as
\begin{eqnarray}\label{mm4}
-2\nu\hat{\bm{K}}\bm{\cdot}\bm{S}=-2\nu{K}\bm{S}+2\nu\bm{K}\bm{\cdot}\bm{S}.
\end{eqnarray}
Substitution of~\eqref{mm2} and~\eqref{mm3} into~\eqref{mm1} yields~\eqref{LH2} and~\eqref{LH2equivalent}.

By virtue of~\eqref{decomp_vorticity2}, the boundary spin flux is evaluated as
\begin{eqnarray}\label{mm5}
	\bm{\sigma}_{S}^{(1)}&\equiv&\nu\left[\frac{\partial\bm{S}}{\partial n}\right]_{\Sigma}=\nu\bm{n}\times\left(\left[\frac{\partial\bm{A}}{\partial n}\right]_{\Sigma}\bm{\cdot}\bm{n}\right)+\nu\bm{n}\times\left(\bm{n}\bm{\cdot}\left[\frac{\partial\bm{A}}{\partial n}\right]_{\Sigma}\right).
\end{eqnarray}
On the right-hand side of~\eqref{mm5}, the first term is derived from~\eqref{LH2} as 
\begin{eqnarray}\label{mm6}
\nu\bm{n}\times\left(\left[\frac{\partial\bm{A}}{\partial n}\right]_{\Sigma}\bm{\cdot}\bm{n}\right)
=\nu\bm{n}\times\bm{\nabla}_{\pi}\vartheta_{n}+\nu\hat{\bm{K}}\bm{\cdot}\bm{S}.
\end{eqnarray}
By using~\eqref{hh1},~\eqref{pp1} and~\eqref{BVF_identity2p}, the second term evaluates to
	\begin{eqnarray}\label{mm7}
		&&\nu\bm{n}\times\left(\bm{n}\bm{\cdot}\left[\frac{\partial\bm{A}}{\partial n}\right]_{\Sigma}\right)=\nu\bm{n}\times\left[\frac{\partial^2\bm{u}_{\pi}}{\partial n^2}\right]_{\Sigma}\nonumber\\
		&=&\nu\left[\frac{\partial\bm{\omega}_{\pi}}{\partial n}\right]_{\Sigma}+\nu\bm{n}\times\bm{\nabla}_{\pi}\vartheta_{n}+\nu\hat{\bm{K}}\bm{\cdot}\bm{S}\nonumber\\
		&=&\bm{n}\times\left(\bm{a}-\bm{f}\right)+\bm{n}\times\bm{\nabla}_{\pi}\hat{P}+\nu\bm{\nabla}_{\pi}\omega_{n}+\nu\bm{K}\bm{\cdot}\bm{\omega}_{\pi}+\nu\bm{n}\times\bm{\nabla}_{\pi}\vartheta_{n}+\nu\hat{\bm{K}}\bm{\cdot}\bm{S}\nonumber\\	&=&\bm{n}\times\left(\bm{a}-\bm{f}\right)+\bm{n}\times\bm{\nabla}_{\pi}\hat{P}+\nu\bm{\nabla}_{\pi}\omega_{n}+\nu\bm{K}\bm{\cdot}\bm{R}+\nu\bm{n}\times\bm{\nabla}_{\pi}\vartheta_{n}+\nu{K}\bm{S}.
	\end{eqnarray}
Substituting~\eqref{mm6} and~\eqref{mm7} into~\eqref{mm5} gives~\eqref{LH3} and~\eqref{LH3equivalent}. The opposing terms $-2\nu\hat{\bm{K}}\bm{\cdot}\bm{S}$ and $2\nu\hat{\bm{K}}\bm{\cdot}\bm{S}$ cancel each other, so that the total BVF is preserved. Equation~\eqref{LH4} follows from~\eqref{pp2} and~\eqref{BVF_identity3p}. This completes the proof. 
\end{proof}
\begin{remark}
The effect of Gauss curvature can be seen in the quantity $\bm{K}\bm{\cdot}\bm{R}$. Specifically, by~\eqref{id1_equivalent}, the rigid-rotation mode is expressed as
\begin{eqnarray}
	\bm{R}=2\hat{\bm{K}}\bm{\cdot}(\bm{u}_{\pi}\times\bm{n})-2\bm{n}\times\bm{\nabla}_{\pi}u_{n}.
\end{eqnarray}
Hence, by the second identity in~\eqref{713_ab}, we obtain
\begin{eqnarray}
	\bm{K}\bm{\cdot}\bm{R}=2K_{G}(\bm{u}_{\pi}\times\bm{n})+2\bm{K}\bm{\cdot}(\bm{\nabla}_{\pi}u_n\times\bm{n}).
\end{eqnarray}
\end{remark}

\subsection{Intrinsic decomposition of the Lighthill-Panton-Wu boundary enstrophy flux}
\begin{theorem}\label{thm8}
On the base surface $\bm{\Sigma}$, the Lighthill-Panton-Wu BEF ${F}_{\Omega}^{(1)}$ in~\eqref{6p7abcd} can be intrinsically decomposed as
\begin{subequations}\label{defv1}
	\begin{equation}\label{def1}
	F_{\Omega}^{(1)}\equiv\nu\left[\frac{\partial\Omega}{\partial n}\right]_{\Sigma}=F_{\pi}^{(1)}+F_{n}^{(1)},~F_{\pi}^{(1)}=F_{RR}^{(1)}+F_{RS}^{(1)}+F_{SS}^{(1)},
	\end{equation}
	\begin{equation}\label{def2}
	F_{RR}^{(1)}\equiv\nu\left[\frac{\partial\Omega_{R}}{\partial n}\right]_{\Sigma}=\bm{R}\bm{\cdot}\bm{\sigma}_{R}^{(1)},~	F_{SS}^{(1)}\equiv\nu\left[\frac{\partial\Omega_{S}}{\partial n}\right]_{\Sigma}=\bm{S}\bm{\cdot}\bm{\sigma}_{S}^{(1)},
	\end{equation}
	\begin{eqnarray}\label{def3}
	F_{RS}^{(1)}=\nu\left[\frac{\partial\Omega_{RS}}{\partial n}\right]_{\Sigma}=\bm{R}\bm{\cdot}\bm{\sigma}_{S}^{(1)}+\bm{S}\bm{\cdot}\bm{\sigma}_{R}^{(1)},~F_{n}^{(1)}\equiv\nu\left[\frac{\partial\Omega_n}{\partial n}\right]_{\Sigma}=\omega_{n}\sigma_{n}^{(1)},
	\end{eqnarray}
\end{subequations}
where the constituents of the BEF are explicitly expressed as
\begin{subequations}\label{eq7p13}
		\begin{equation}\label{FRR_1}
		F_{RR}^{(1)}=\dashuline{-2\nu\bm{\xi}_{R}\bm{\cdot}\bm{\nabla}_{\pi}\vartheta_{n}}~\underline{-2\nu{K}\bm{R}\bm{\cdot}\bm{S}+2\nu\bm{R}\bm{\cdot}\bm{K}\bm{\cdot}\bm{S}},
	\end{equation}
	\begin{eqnarray}\label{FSS_1}
		F_{SS}^{(1)}&=&\dashuline{2\nu\bm{\xi}_{S}\bm{\cdot}\bm{\nabla}_{\pi}\vartheta_{n}}~\underline{\underline{+\nu\bm{S}\bm{\cdot}\bm{\nabla}_{\pi}\omega_{n}}}\nonumber\\
		& &\dotuline{+\bm{\xi}_{S}\bm{\cdot}(\bm{a}-\bm{f})+\bm{\xi}_{S}\bm{\cdot}\bm{\nabla}_{\pi}\hat{P}}\nonumber\\
		& &\underline{+\nu\bm{R}\bm{\cdot}\bm{K}\bm{\cdot}\bm{S}+2\nu{K}{S}^2-\nu\bm{S}\bm{\cdot}\bm{K}\bm{\cdot}\bm{S}},
	\end{eqnarray}
	\begin{eqnarray}\label{FRS_1}
		F_{RS}^{(1)}&=&\dashuline{-2\nu\bm{\xi}_{S}\bm{\cdot}\bm{\nabla}_{\pi}\vartheta_{n}+2\nu\bm{\xi}_{R}\bm{\cdot}\bm{\nabla}_{\pi}\vartheta_{n}}~\underline{\underline{+\nu\bm{R}\bm{\cdot}\bm{\nabla}_{\pi}\omega_{n}}}\nonumber\\
		& &\dotuline{+\bm{\xi}_{R}\bm{\cdot}(\bm{a}-\bm{f})+\bm{\xi}_{R}\bm{\cdot}\bm{\nabla}_{\pi}\hat{P}}\nonumber\\
		& & \underline{-2\nu{K}{S}^2+2\nu\bm{S}\bm{\cdot}\bm{K}\bm{\cdot}\bm{S}+\nu\bm{R}\bm{\cdot}\bm{K}\bm{\cdot}\bm{R}}\nonumber\\
		& &\underline{+2\nu{K}\bm{R}\bm{\cdot}\bm{S}-\nu\bm{R}\bm{\cdot}\bm{K}\bm{\cdot}\bm{S},}
	\end{eqnarray}
	\begin{equation}
	F_{n}^{(1)}=-\nu\omega_{n}\left(\bm{\nabla}_{\pi}\bm{\cdot}\bm{\omega}\right)=-\nu\omega_{n}\left(\bm{\nabla}_{\pi}\bm{\cdot}\bm{\omega}_{\pi}\right)+2\nu{K}\Omega_{n}.
\end{equation}
\end{subequations}
	In terms of the conjugate curvature tensor pair $(\bm{K},\hat{\bm{K}})$ introduced in~\eqref{713_ab},~\eqref{eq7p13} admit the following compact formulations:
\begin{subequations}\label{vvvv7p15}
	\begin{equation}\label{FRR_1x}
		F_{RR}^{(1)}=\dashuline{-2\nu\bm{\xi}_{R}\bm{\cdot}\bm{\nabla}_{\pi}\vartheta_{n}}~\underline{-2\nu\bm{R}\bm{\cdot}\hat{\bm{K}}\bm{\cdot}\bm{S}},
	\end{equation}
	\begin{eqnarray}\label{FSS_1x}
		F_{SS}^{(1)}
		&=&\dashuline{2\nu\bm{\xi}_{S}\bm{\cdot}\bm{\nabla}_{\pi}\vartheta_{n}}~\underline{\underline{+\nu\bm{S}\bm{\cdot}\bm{\nabla}_{\pi}\omega_{n}}}\nonumber\\
		& &\dotuline{+\bm{\xi}_{S}\bm{\cdot}(\bm{a}-\bm{f})+\bm{\xi}_{S}\bm{\cdot}\bm{\nabla}_{\pi}\hat{P}}\nonumber\\
		& &\underline{+\nu\bm{R}\bm{\cdot}\bm{K}\bm{\cdot}\bm{S}+2\nu\bm{S}\bm{\cdot}\hat{\bm{K}}\bm{\cdot}\bm{S}+\nu\bm{S}\bm{\cdot}\bm{K}\bm{\cdot}\bm{S}},
	\end{eqnarray}
	\begin{eqnarray}\label{FRS_1x}
		F_{RS}^{(1)}
		&=&\dashuline{-2\nu\bm{\xi}_{S}\bm{\cdot}\bm{\nabla}_{\pi}\vartheta_{n}+2\nu\bm{\xi}_{R}\bm{\cdot}\bm{\nabla}_{\pi}\vartheta_{n}}~\underline{\underline{+\nu\bm{R}\bm{\cdot}\bm{\nabla}_{\pi}\omega_{n}}}\nonumber\\
		& &\dotuline{+\bm{\xi}_{R}\bm{\cdot}(\bm{a}-\bm{f})+\bm{\xi}_{R}\bm{\cdot}\bm{\nabla}_{\pi}\hat{P}}\nonumber\\
		& &\underline{-2\nu\bm{S}\bm{\cdot}\hat{\bm{K}}\bm{\cdot}\bm{S}+\nu\bm{R}\bm{\cdot}\bm{K}\bm{\cdot}\bm{R}+2\nu\bm{R}\bm{\cdot}\hat{\bm{K}}\bm{\cdot}\bm{S}+\nu\bm{R}\bm{\cdot}\bm{K}\bm{\cdot}\bm{S}}.
	\end{eqnarray}
\end{subequations}
\end{theorem}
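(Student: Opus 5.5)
The plan is to obtain every enstrophy constituent by dotting the vector decomposition of Theorem~\ref{thm7} with the appropriate vorticity mode. Since $\Omega=\Omega_{R}+\Omega_{S}+\Omega_{RS}+\Omega_{n}$ with $\Omega_{R}=\frac12\bm{R}\bm{\cdot}\bm{R}$, $\Omega_{S}=\frac12\bm{S}\bm{\cdot}\bm{S}$, $\Omega_{RS}=\bm{R}\bm{\cdot}\bm{S}$ and $\Omega_{n}=\frac12\omega_{n}^2$, the Leibniz rule for $\partial/\partial n$ on $\bm{\Sigma}$ immediately gives the definitional identities \eqref{def1}--\eqref{def3}. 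For instance, $\nu\partial_{n}\Omega_{R}=\bm{R}\bm{\cdot}\nu\partial_{n}\bm{R}=\bm{R}\bm{\cdot}\bm{\sigma}_{R}^{(1)}$ and $\nu\partial_{n}\Omega_{RS}=\bm{R}\bm{\cdot}\bm{\sigma}_{S}^{(1)}+\bm{S}\bm{\cdot}\bm{\sigma}_{R}^{(1)}$. Consistency with $F_{\pi}^{(1)}=\bm{\omega}_{\pi}\bm{\cdot}\bm{\sigma}_{\pi}^{(1)}$ in \eqref{6p7b} then follows from $\bm{\omega}_{\pi}=\bm{R}+\bm{S}$ and $\bm{\sigma}_{\pi}^{(1)}=\bm{\sigma}_{R}^{(1)}+\bm{\sigma}_{S}^{(1)}$. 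The normal part $F_{n}^{(1)}=\omega_{n}\sigma_{n}^{(1)}$ is taken directly from \eqref{LH4} and \eqref{6p7d}.

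Next, I will substitute \eqref{LH2}--\eqref{LH3} and translate each term using elementary triple-product rules for a tangent vector $\bm{v}$:
\begin{equation*}
\bm{v}\bm{\cdot}(\bm{n}\times\bm{w})=(\bm{v}\times\bm{n})\bm{\cdot}\bm{w}.
\end{equation*}
This gives
\begin{equation*}
\bm{R}\bm{\cdot}(\bm{n}\times\bm{\nabla}_{\pi}\vartheta_{n})=\bm{\xi}_{R}\bm{\cdot}\bm{\nabla}_{\pi}\vartheta_{n},\qquad \bm{S}\bm{\cdot}(\bm{n}\times(\bm{a}-\bm{f}))=\bm{\xi}_{S}\bm{\cdot}(\bm{a}-\bm{f}),
\end{equation*}
and similarly for the pressure term. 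The curvature terms become quadratic or bilinear forms via the symmetry of $\bm{K}$, e.g.\ $\bm{S}\bm{\cdot}\bm{K}\bm{\cdot}\bm{R}=\bm{R}\bm{\cdot}\bm{K}\bm{\cdot}\bm{S}$.

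With these rules, $F_{RR}^{(1)}=\bm{R}\bm{\cdot}\bm{\sigma}_{R}^{(1)}$ yields \eqref{FRR_1} at once. $F_{SS}^{(1)}=\bm{S}\bm{\cdot}\bm{\sigma}_{S}^{(1)}$ yields \eqref{FSS_1}, with $\bm{S}\bm{\cdot}(2\nu K\bm{S})=2\nu KS^2$. For $F_{RS}^{(1)}$ I will add $\bm{R}\bm{\cdot}\bm{\sigma}_{S}^{(1)}$, which contributes the $\bm{\xi}_{R}$-terms, $\nu\bm{R}\bm{\cdot}\bm{\nabla}_{\pi}\omega_{n}$, $\nu\bm{R}\bm{\cdot}\bm{K}\bm{\cdot}\bm{R}$, $2\nu K\bm{R}\bm{\cdot}\bm{S}$ and $-\nu\bm{R}\bm{\cdot}\bm{K}\bm{\cdot}\bm{S}$. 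To this I add $\bm{S}\bm{\cdot}\bm{\sigma}_{R}^{(1)}$, which contributes $-2\nu\bm{\xi}_{S}\bm{\cdot}\bm{\nabla}_{\pi}\vartheta_{n}$, $-2\nu KS^2$ and $2\nu\bm{S}\bm{\cdot}\bm{K}\bm{\cdot}\bm{S}$. The sign bookkeeping here is the only delicate point. The dilatation term must become $+2\nu\bm{\xi}_{R}\bm{\cdot}\bm{\nabla}_{\pi}\vartheta_{n}$ from $\bm{R}\bm{\cdot}(2\nu\bm{n}\times\bm{\nabla}_{\pi}\vartheta_{n})$, so I will check each cross-product orientation against the definitions $\bm{\xi}_{R}=\bm{R}\times\bm{n}$ and $\bm{\xi}_{S}=\bm{S}\times\bm{n}$.

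Finally, the compact forms \eqref{vvvv7p15} follow by regrouping with the first identity in \eqref{713_ab}, $\hat{\bm{K}}=K\bm{G}-\bm{K}$, together with $\bm{G}\bm{\cdot}\bm{R}=\bm{R}$ and $\bm{G}\bm{\cdot}\bm{S}=\bm{S}$ for tangent vectors. Thus $-2\nu K\bm{R}\bm{\cdot}\bm{S}+2\nu\bm{R}\bm{\cdot}\bm{K}\bm{\cdot}\bm{S}=-2\nu\bm{R}\bm{\cdot}\hat{\bm{K}}\bm{\cdot}\bm{S}$, and $2\nu KS^2-\nu\bm{S}\bm{\cdot}\bm{K}\bm{\cdot}\bm{S}=2\nu\bm{S}\bm{\cdot}\hat{\bm{K}}\bm{\cdot}\bm{S}+\nu\bm{S}\bm{\cdot}\bm{K}\bm{\cdot}\bm{S}$. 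In $F_{RS}^{(1)}$, I will split $2\nu K\bm{R}\bm{\cdot}\bm{S}-\nu\bm{R}\bm{\cdot}\bm{K}\bm{\cdot}\bm{S}$ as $2\nu\bm{R}\bm{\cdot}\hat{\bm{K}}\bm{\cdot}\bm{S}+\nu\bm{R}\bm{\cdot}\bm{K}\bm{\cdot}\bm{S}$. Equivalently, one can dot \eqref{LH2equivalent}--\eqref{LH3equivalent} directly with $\bm{R}$ and $\bm{S}$. I expect no real obstacle beyond consistent sign and orientation tracking; the theorem is essentially a corollary of Theorem~\ref{thm7}.
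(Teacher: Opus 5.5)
Your proposal is correct and follows the paper's own proof, which consists only of substituting the expressions \eqref{vvvv1} for $\bm{\sigma}_{R}^{(1)}$, $\bm{\sigma}_{S}^{(1)}$ and $\sigma_{n}^{(1)}$ into the definitions \eqref{defv1}. Your triple-product rule $\bm{v}\bm{\cdot}(\bm{n}\times\bm{w})=(\bm{v}\times\bm{n})\bm{\cdot}\bm{w}$ and the regrouping via $\hat{\bm{K}}=K\bm{G}-\bm{K}$ supply exactly the bookkeeping the paper leaves implicit, and every sign you state checks out.
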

Equation \eqref{vvvv7p15} reveals that all source terms characterizing the interactions between vorticity modes and surface geometry can be expressed in a unified bilinear (or quadratic) form. Furthermore, both $F_{SS}^{(1)}$ and $F_{RS}^{(1)}$ receive additional contributions from the external force, boundary acceleration, and surface pressure gradient.
\begin{proof}[Proof of Theorem~{\upshape\ref{thm8}}] Substituting~\eqref{vvvv1} into~\eqref{defv1} yields~\eqref{eq7p13}. This completes the proof.
\end{proof}
\begin{remark}
	In the expression for $F_{RS}^{(1)}$, the respective contributions of $\bm{\sigma}_{R}^{(1)}\bm{\cdot}\bm{S}$ and $\bm{\sigma}_{S}^{(1)}\bm{\cdot}\bm{R}$ are given by 
	\begin{subequations}
		\begin{equation}
			\bm{\sigma}_{R}^{(1)}\bm{\cdot}\bm{S}=-2\nu\bm{\xi}_{S}\bm{\cdot}\bm{\nabla}_{\pi}\vartheta_{n}-2\nu\bm{S}\bm{\cdot}\hat{\bm{K}}\bm{\cdot}\bm{S},
		\end{equation}
		\begin{eqnarray}
			\bm{\sigma}_{S}^{(1)}\bm{\cdot}\bm{R}&=&2\nu\bm{\xi}_{R}\bm{\cdot}\bm{\nabla}_{\pi}\vartheta_{n}+\nu\bm{R}\bm{\cdot}\bm{\nabla}_{\pi}\omega_{n}\nonumber\\
			 & &+\bm{\xi}_{R}\bm{\cdot}(\bm{a}-\bm{f})+\bm{\xi}_{R}\bm{\cdot}\bm{\nabla}_{\pi}\hat{P}\nonumber\\
			& &+\nu\bm{R}\bm{\cdot}\bm{K}\bm{\cdot}\bm{R}+\nu\bm{R}\bm{\cdot}\bm{K}\bm{\cdot}\bm{S}+2\nu\bm{R}\bm{\cdot}\hat{\bm{K}}\bm{\cdot}\bm{S}.
		\end{eqnarray}
	\end{subequations}
\end{remark}
\begin{remark} 
From~\eqref{ff1},~\eqref{ff2} and~\eqref{ff3}, the quadratic-type forms associated with $(\bm{R},\bm{S})$ and $(\bm{\xi}_{R},\bm{\xi}_{S})$ necessarily satisfy
\begin{subequations}
	\begin{equation}
		\bm{\xi}_{S}\bm{\cdot}\bm{K}\bm{\cdot}\bm{\xi}_{S}=\bm{S}\bm{\cdot}\hat{\bm{K}}\bm{\cdot}\bm{S},~~\bm{\xi}_{S}\bm{\cdot}\hat{\bm{K}}\bm{\cdot}\bm{\xi}_{S}=\bm{S}\bm{\cdot}{\bm{K}}\bm{\cdot}\bm{S},
	\end{equation}
		\begin{equation}
		\bm{\xi}_{R}\bm{\cdot}\bm{K}\bm{\cdot}\bm{\xi}_{S}=\bm{R}\bm{\cdot}\hat{\bm{K}}\bm{\cdot}\bm{S},~~\bm{\xi}_{R}\bm{\cdot}\hat{\bm{K}}\bm{\cdot}\bm{\xi}_{S}=\bm{R}\bm{\cdot}{\bm{K}}\bm{\cdot}\bm{S},
	\end{equation}
	\begin{equation}
		\bm{\xi}_{S}\bm{\cdot}\bm{K}\bm{\cdot}\bm{\xi}_{S}+\bm{S}\bm{\cdot}\bm{K}\bm{\cdot}\bm{S}=K{S}^2=K{\xi}_{S}^2=2K\Omega_{S}.
	\end{equation}
\end{subequations}
\end{remark}

\subsection{Simplified expressions of boundary vorticity and enstrophy fluxes on a stationary wall}\label{yy0000}
On a stationary curved wall (following~\S\ref{simple_BVF}), we have $\bm{R}=\bm{0}$ and $\bm{\omega}=\bm{\omega}_{\pi}=\bm{S}$. This implies $\bm{\xi}_{R}=\bm{0}$ and $\bm{\xi}=\bm{\xi}_{S}$, together with $F_{n}^{(1)}=F_{RR}^{(1)}=0$. Consequently, we obtain
\begin{itemize}
	\item Intrinsic decomposition of Lighthill-Panton-Wu BVF
	\begin{subequations}
	    \begin{equation}
	    	\begin{aligned}
	    		\bm{\sigma}_{\bm\omega}^{(1)}
	    		&=\bm{\sigma}_{R}^{(1)}+\bm{\sigma}_{S}^{(1)}+\bm{\sigma}_{n}^{(1)}\\
	    		&=\dotuline{-\bm{n}\times\bm{f}+\bm{n}\times\bm{\nabla}_{\pi}\hat{P}}~\underline{+\nu\bm{K}\bm{\cdot}\bm{\omega}}-\nu(\bm{\nabla}_{\pi}\bm{\cdot}\bm{\omega})\bm{n},
	    	\end{aligned}
	    \end{equation}
		\begin{equation}\label{LH2equivalentc}
			\bm{\sigma}_{R}^{(1)}=\dashuline{-2\nu\bm{n}\times\bm{\nabla}_{\pi}\vartheta}~\underline{-2\nu\hat{\bm{K}}\bm{\cdot}\bm{\omega}},
		\end{equation}
		\begin{eqnarray}\label{LH3equivalencc}
			\bm{\sigma}_{S}^{(1)}&=&\dashuline{2\nu\bm{n}\times\bm{\nabla}_{\pi}\vartheta}\dotuline{-\bm{n}\times\bm{f}+\bm{n}\times\bm{\nabla}_{\pi}\hat{P}}\nonumber\\
			& &\underline{+\nu\bm{K}\bm{\cdot}\bm{\omega}+2\nu\hat{\bm{K}}\bm{\cdot}\bm{\omega}}.
		\end{eqnarray}
		\begin{equation}
			\bm{\sigma}_{n}^{(1)}=-\nu(\bm{\nabla}_{\pi}\bm{\cdot}\bm{\omega})\bm{n}.
		\end{equation}
	\end{subequations}
	It is noted that $\hat{\bm{K}}\bm{\cdot}\bm{\omega}=K\bm{\omega}-\bm{K}\bm{\cdot}\bm{\omega}=\bm{n}\times\left(\bm{K}\bm{\cdot}\bm{\xi}\right)$.
	\item Intrinsic decomposition of Lighthill-Panton-Wu BEF
	\begin{subequations}
		\begin{eqnarray}
			F_{\Omega}^{(1)}&=&F_{\pi}^{(1)}=F_{SS}^{(1)}+F_{RS}^{(1)}\nonumber\\
			&=&\dotuline{-\bm{\xi}\bm{\cdot}\bm{f}+\bm{\xi}\bm{\cdot}\bm{\nabla}_{\pi}\hat{P}}~\underline{+\nu\bm{\omega}\bm{\cdot}\bm{K}\bm{\cdot}\bm{\omega}},
		\end{eqnarray}
		\begin{eqnarray}
			F_{SS}^{(1)}&=&\dashuline{2\nu\bm{\xi}\bm{\cdot}\bm{\nabla}_{\pi}\vartheta}~\dotuline{-\bm{\xi}\bm{\cdot}\bm{f}+\bm{\xi}\bm{\cdot}\bm{\nabla}_{\pi}\hat{P}}\nonumber\\
			& &\underline{+\nu\bm{\omega}\bm{\cdot}\bm{K}\bm{\cdot}\bm{\omega}+2\nu\bm{\xi}\bm{\cdot}\bm{K}\bm{\cdot}\bm{\xi}},
		\end{eqnarray}
		\begin{equation}
			F_{RS}^{(1)}=\dashuline{-2\nu\bm{\xi}\bm{\cdot}\bm{\nabla}_{\pi}\vartheta}~\underline{-2\nu\bm{\xi}\bm{\cdot}\bm{K}\bm{\cdot}\bm{\xi}}.
		\end{equation}
	\end{subequations}
	It is noted that $\bm{\xi}\bm{\cdot}\bm{K}\bm{\cdot}\bm{\xi}=\bm{\omega}\bm{\cdot}\hat{\bm{K}}\bm{\cdot}\bm{\omega}=K\omega^2-\bm{\omega}\bm{\cdot}{\bm{K}}\bm{\cdot}\bm{\omega}$.
\end{itemize}

\section{Intrinsic decomposition of boundary vorticity and enstrophy fluxes under the Lyman-Huggins definition}\label{BVF_LH}
\subsection{Intrinsic decomposition of the Lyman-Huggins boundary vorticity flux}
\begin{theorem}\label{theorem9}
	On the base surface $\bm{\Sigma}$, the Lyman-Huggins BVF $\bm{\sigma}_{\bm{\omega}}^{(2)}$ in~\eqref{Wub} can be intrinsically decomposed as
	\begin{subequations}
			\begin{equation}\label{uu1}
			\bm{\sigma}_{\bm{\omega}}^{(2)}=\bm{\sigma}_{R}^{(2)}+\bm{\sigma}_{S}^{(2)}+\bm{\mathfrak{S}},
		\end{equation}
\begin{equation}\label{uu2}
	\begin{gathered}
		\bm{\sigma}_{R}^{(2)}=-\nu\bm{n}\times\left(\bm{\nabla}\times\bm{R}\right), \quad
		\bm{\sigma}_{S}^{(2)}=-\nu\bm{n}\times\left(\bm{\nabla}\times\bm{S}\right), \\
		\bm{\mathfrak{S}}=-\nu\bm{n}\times\left(\bm{\nabla}\times\bm{\omega}_{n}\right),
	\end{gathered}
\end{equation}
\end{subequations}
where the constituents of the BVF are explicitly expressed as
\begin{subequations}\label{xxh}
	\begin{equation}\label{LH2prime}
	\bm{\sigma}_{R}^{(2)}=\dashuline{-2\nu\bm{n}\times\bm{\nabla}_{\pi}\vartheta_{n}}~\underline{-2\nu\hat{\bm{K}}\bm{\cdot}\bm{S}-\nu\bm{K}\bm{\cdot}\bm{R}},
\end{equation}
\begin{eqnarray}\label{LH3prime}
	\bm{\sigma}_{S}^{(2)}&=&\dashuline{2\nu\bm{n}\times\bm{\nabla}_{\pi}\vartheta_{n}}\underline{\underline{+\nu\bm{\nabla}_{\pi}\omega_{n}}}\nonumber\\
	& &\dotuline{+\bm{n}\times(\bm{a}-\bm{f})+\bm{n}\times\bm{\nabla}_{\pi}\hat{P}}\nonumber\\
	& &~\underline{+2\nu\hat{\bm{K}}\bm{\cdot}\bm{S}+\nu\bm{K}\bm{\cdot}\bm{R}},
\end{eqnarray}
\begin{equation}\label{LH4prime}	\bm{\mathfrak{S}}=-\nu\bm{n}\times(\bm{\nabla}_{\pi}\times\bm{\omega}_{n})=~\underline{\underline{-\nu\bm{\nabla}_{\pi}\omega_{n}}}.
\end{equation}	
\end{subequations}
\end{theorem}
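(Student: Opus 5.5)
The plan is to reduce everything to the Lighthill-Panton-Wu decomposition of Theorem~\ref{thm7} via the curl identities of Theorem~\ref{thm_new2}. First, \eqref{uu1} is immediate. Since $\bm{\omega}=\bm{R}+\bm{S}+\bm{\omega}_{n}$ and the curl is linear, we have $-\nu\bm{n}\times(\bm{\nabla}\times\bm{\omega})=\bm{\sigma}_{R}^{(2)}+\bm{\sigma}_{S}^{(2)}+\bm{\mathfrak{S}}$ with the definitions \eqref{uu2}; note that $\bm{R}$ and $\bm{S}$ are treated as fields in the thin neighbourhood $\mathscr{U}$, defined through the extended normal $\bm{n}=\bm{\nabla}\zeta$.

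For the rigid-rotation and spin constituents I will rearrange \eqref{pp21} and \eqref{pp22}:
\begin{equation*}
\bm{\sigma}_{R}^{(2)}=\bm{\sigma}_{R}^{(1)}-\nu\bm{K}\bm{\cdot}\bm{R},\qquad \bm{\sigma}_{S}^{(2)}=\bm{\sigma}_{S}^{(1)}-\nu\bm{K}\bm{\cdot}\bm{S}.
\end{equation*}
Substituting \eqref{LH2equivalent} into the first relation gives \eqref{LH2prime} directly. Substituting \eqref{LH3equivalent} into the second cancels the $\nu\bm{K}\bm{\cdot}\bm{S}$ term and leaves exactly \eqref{LH3prime}. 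Here the terms $-\nu\bm{n}\times\bm{\nabla}_{\pi}\vartheta_n$-type and $\hat{\bm{K}}$-type carry over unchanged.

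For $\bm{\mathfrak{S}}$, I split $\bm{\nabla}=\bm{\nabla}_{\pi}+\bm{n}\partial_n$. The normal part gives $\bm{n}\times(\bm{n}\times\partial_n\bm{\omega}_n)$, which is the tangential projection of $-\partial_n\bm{\omega}_n$ (up to sign). Since $\partial_n\bm{n}=\bm{0}$, we have $\partial_n\bm{\omega}_n=(\partial_n\omega_n)\bm{n}$, so this contribution vanishes. What remains is $\bm{\mathfrak{S}}=-\nu\bm{n}\times(\bm{\nabla}_{\pi}\times\bm{\omega}_{n})$, and \eqref{431b} turns it into $-\nu\bm{\nabla}_{\pi}\omega_n$, which is \eqref{LH4prime}.

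As a consistency check, summing the three constituents makes the $\pm2\nu\bm{n}\times\bm{\nabla}_{\pi}\vartheta_n$, $\pm2\nu\hat{\bm{K}}\bm{\cdot}\bm{S}$, $\pm\nu\bm{K}\bm{\cdot}\bm{R}$ and $\pm\nu\bm{\nabla}_{\pi}\omega_n$ terms cancel. This leaves $\bm{n}\times(\bm{a}-\bm{f})+\bm{n}\times\bm{\nabla}_{\pi}\hat{P}$, in agreement with \eqref{Wub}.

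The main obstacle I expect is justifying the use of \eqref{pp21}–\eqref{pp22}, and the vanishing normal part of $\bm{\mathfrak{S}}$, on the base surface. Both rely on $R_n=S_n=0$ and on $\partial_n\bm{n}=\bm{0}$ holding for the normal-extended fields, so I would state explicitly that the modes are extended off $\bm{\Sigma}$ with $\bm{n}(\bm{x})$ independent of $\zeta$, as in the semi-orthogonal coordinates of \S\ref{surface_geometry}.
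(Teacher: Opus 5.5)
Your proposal is correct and follows the paper's proof essentially verbatim. You rearrange \eqref{pp21}--\eqref{pp22} into $\bm{\sigma}_{R}^{(2)}=\bm{\sigma}_{R}^{(1)}-\nu\bm{K}\bm{\cdot}\bm{R}$ and $\bm{\sigma}_{S}^{(2)}=\bm{\sigma}_{S}^{(1)}-\nu\bm{K}\bm{\cdot}\bm{S}$, substitute Theorem~\ref{thm7}, and use \eqref{431b} for $\bm{\mathfrak{S}}$; your explicit remarks on the normal extension ($R_n=S_n\equiv0$ in $\mathscr{U}$, $\partial_n\bm{n}=\bm{0}$) and on the vanishing normal-derivative part of $\bm{\nabla}\times\bm{\omega}_n$ usefully spell out steps the paper leaves implicit.
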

\begin{proof}[Proof of Theorem~{\upshape\ref{theorem9}}] By combining~\eqref{pp21} and~\eqref{pp22} with the definitions in~\eqref{LH1a} and~\eqref{uu2}, we derive the relationship between $(\bm{\sigma}_{R}^{(1)},\bm{\sigma}_{S}^{(1)})$ and $(\bm{\sigma}_{R}^{(2)},\bm{\sigma}_{S}^{(2)})$:
\begin{subequations}\label{relation1}
	\begin{equation}
		\bm{\sigma}_{R}^{(2)}=\bm{\sigma}_{R}^{(1)}-\nu\bm{K}\bm{\cdot}\bm{R},
	\end{equation}
	\begin{equation}
		\bm{\sigma}_{S}^{(2)}=\bm{\sigma}_{S}^{(1)}-\nu\bm{K}\bm{\cdot}\bm{S}.
	\end{equation}
\end{subequations}
Substituting~\eqref{LH2} and~\eqref{LH3} into~\eqref{relation1} yields~\eqref{LH2prime} and~\eqref{LH3prime}, while \eqref{LH4prime} is an immediate consequence of \eqref{431b}. The proof is thus complete.
\end{proof}
Unlike $\bm{\sigma}_{R}^{(1)}$ and $\bm{\sigma}_{S}^{(1)}$, both $\bm{\sigma}_{R}^{(2)}$ and $\bm{\sigma}_{S}^{(2)}$ share contributions due to $\hat{\bm{K}}-\bm{S}$ and $\bm{K}-\bm{R}$ couplings, the sole difference being the sign reversal. The contributions arising from the tangential gradient of surface-normal vorticity cancel out identically between $\bm{\sigma}_{S}^{(2)}$ and $\mathfrak{S}$.

\subsection{Intrinsic decomposition of the Lyman-Huggins boundary enstrophy flux}
\begin{theorem}\label{theorem13}
On the base surface $\bm{\Sigma}$, the Lyman-Huggins BEF $F_{\Omega}^{(2)}$ in~\eqref{6p8} can be intrinsically decomposed as
\begin{subequations}\label{eq84abcd}
	\begin{equation}\label{eq84a}
		F_{\Omega}^{(2)}=F_{RR}^{(2)}+F_{RS}^{(2)}+F_{SS}^{(2)}+F_{R\mathfrak{S}}^{(2)}+F_{S\mathfrak{S}}^{(2)},
	\end{equation}
	\begin{equation}\label{eq84b}
		F_{RR}^{(2)}=\bm{R}\bm{\cdot}\bm{\sigma}_{R}^{(2)}=-\nu\bm{\xi}_{R}\bm{\cdot}(\bm{\nabla}\times\bm{R}),~F_{SS}^{(2)}=\bm{S}\bm{\cdot}\bm{\sigma}_{S}^{(2)}=-\nu\bm{\xi}_{S}\bm{\cdot}(\bm{\nabla}\times\bm{S}),
	\end{equation}
	\begin{equation}\label{eq84c}
		F_{RS}^{(2)}=\bm{R}\bm{\cdot}\bm{\sigma}_{S}^{(2)}+\bm{S}\bm{\cdot}\bm{\sigma}_{R}^{(2)}=-\nu\bm{\xi}_{R}\bm{\cdot}(\bm{\nabla}\times\bm{S})-\nu\bm{\xi}_{S}\bm{\cdot}(\bm{\nabla}\times\bm{R}),
	\end{equation}
	\begin{equation}\label{eq84d}
		F_{R\mathfrak{S}}^{(2)}=\bm{R}\bm{\cdot}\bm{\mathfrak{S}},~F_{S\mathfrak{S}}^{(2)}=\bm{S}\bm{\cdot}\bm{\mathfrak{S}},
	\end{equation}
\end{subequations}
where the constituents of the BEF are explicitly expressed as
\begin{subequations}\label{eq85abcd}
\begin{eqnarray}\label{eq85a}
	F_{RR}^{(2)}=\dashuline{-2\nu\bm{\xi}_{R}\bm{\cdot}\bm{\nabla}_{\pi}\vartheta_{n}}~\underline{-2\nu\bm{R}\bm{\cdot}\hat{\bm{K}}\bm{\cdot}\bm{S}-\nu\bm{R}\bm{\cdot}\bm{K}\bm{\cdot}\bm{R}},
\end{eqnarray}
\begin{eqnarray}\label{eq85b}
F_{SS}^{(2)}&=&\dashuline{2\nu\bm{\xi}_{S}\bm{\cdot}\bm{\nabla}_{\pi}\vartheta_{n}}~\underline{\underline{+\nu\bm{S}\bm{\cdot}\bm{\nabla}_{\pi}\omega_{n}}}\nonumber\\
& &\dotuline{+\bm{\xi}_{S}\bm{\cdot}(\bm{a}-\bm{f})+\bm{\xi}_{S}\bm{\cdot}\bm{\nabla}_{\pi}\hat{P}}\nonumber\\
& &\underline{+\nu\bm{R}\bm{\cdot}\bm{K}\bm{\cdot}\bm{S}+2\nu\bm{S}\bm{\cdot}\hat{\bm{K}}\bm{\cdot}\bm{S}},
\end{eqnarray}
\begin{eqnarray}\label{eq85c}
F_{RS}^{(2)}&=&\dashuline{-2\nu\bm{\xi}_{S}\bm{\cdot}\bm{\nabla}_{\pi}\vartheta_{n}+2\nu\bm{\xi}_{R}\bm{\cdot}\bm{\nabla}\vartheta_{n}}~\underline{\underline{+\nu\bm{R}\bm{\cdot}\bm{\nabla}_{\pi}\omega_{n}}}\nonumber\\
& &\dotuline{+\bm{\xi}_{R}\bm{\cdot}(\bm{a}-\bm{f})+\bm{\xi}_{R}\bm{\cdot}\bm{\nabla}_{\pi}\hat{P}}\nonumber\\
& &\underline{-2\nu\bm{S}\bm{\cdot}\hat{\bm{K}}\bm{\cdot}\bm{S}+\nu\bm{R}\bm{\cdot}\bm{K}\bm{\cdot}\bm{R}+2\nu\bm{R}\bm{\cdot}\hat{\bm{K}}\bm{\cdot}\bm{S}-\nu\bm{R}\bm{\cdot}\bm{K}\bm{\cdot}\bm{S}},
\end{eqnarray}
\begin{equation}\label{eq85d}
F_{R\mathfrak{S}}^{(2)}=\underline{\underline{-\nu\bm{R}\bm{\cdot}\bm{\nabla}_{\pi}\omega_{n}}},
\end{equation}
\begin{equation}\label{eq85e}
F_{S\mathfrak{S}}^{(2)}=\underline{\underline{-\nu\bm{S}\bm{\cdot}\bm{\nabla}_{\pi}\omega_{n}}}.
\end{equation}
\end{subequations}
\end{theorem}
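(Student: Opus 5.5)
The plan is to dot the decomposition of Theorem~\ref{theorem9} with the vorticity decomposition~\eqref{decomp_vorticity1} and then substitute the explicit constituents~\eqref{xxh}. Since $F_{\Omega}^{(2)}=\bm{\omega}\bm{\cdot}\bm{\sigma}_{\bm\omega}^{(2)}$ and $\bm{\omega}=\bm{R}+\bm{S}+\omega_{n}\bm{n}$, while every term in \eqref{LH2prime}--\eqref{LH4prime} is tangent to $\bm{\Sigma}$, we have $\bm{n}\bm{\cdot}\bm{\sigma}_{\bm\omega}^{(2)}=0$. (Indeed, $\bm{n}\times(\cdot)$, $\hat{\bm{K}}\bm{\cdot}$, $\bm{K}\bm{\cdot}$ and $\bm{\nabla}_{\pi}\omega_{n}$ all produce tangent vectors, consistent with $\bm{\sigma}_{\bm\omega}^{(2)}=-\nu\bm{n}\times(\cdots)$.) Hence only $\bm{\omega}_{\pi}=\bm{R}+\bm{S}$ contributes. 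Expanding $(\bm{R}+\bm{S})\bm{\cdot}(\bm{\sigma}_{R}^{(2)}+\bm{\sigma}_{S}^{(2)}+\bm{\mathfrak{S}})$ and grouping the terms gives exactly the five pieces in \eqref{eq84a}. Each definitional form in \eqref{eq84b}--\eqref{eq84c} then follows from the identity $\bm{a}\bm{\cdot}(\bm{n}\times\bm{b})=-(\bm{a}\times\bm{n})\bm{\cdot}\bm{b}$ applied with $\bm{\xi}_{R}=\bm{R}\times\bm{n}$ and $\bm{\xi}_{S}=\bm{S}\times\bm{n}$. For example, $\bm{R}\bm{\cdot}\bm{\sigma}_{R}^{(2)}=-\nu\bm{R}\bm{\cdot}(\bm{n}\times(\bm{\nabla}\times\bm{R}))=-\nu\bm{\xi}_{R}\bm{\cdot}(\bm{\nabla}\times\bm{R})$.

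Next I will substitute \eqref{LH2prime}, \eqref{LH3prime} and \eqref{LH4prime} term by term. The dilatation terms use $\bm{R}\bm{\cdot}(\bm{n}\times\bm{\nabla}_{\pi}\vartheta_{n})=\bm{\xi}_{R}\bm{\cdot}\bm{\nabla}_{\pi}\vartheta_{n}$, and likewise with $\bm{S}$. The acceleration, force and pressure terms transform in the same way into $\bm{\xi}_{R,S}\bm{\cdot}(\bm{a}-\bm{f})$ and $\bm{\xi}_{R,S}\bm{\cdot}\bm{\nabla}_{\pi}\hat{P}$. The curvature terms become the quadratic and bilinear forms $\bm{R}\bm{\cdot}\hat{\bm{K}}\bm{\cdot}\bm{S}$, $\bm{R}\bm{\cdot}\bm{K}\bm{\cdot}\bm{R}$, and so on, where I use the symmetry of $\bm{K}$ and $\hat{\bm{K}}$ (so that $\bm{S}\bm{\cdot}\bm{K}\bm{\cdot}\bm{R}=\bm{R}\bm{\cdot}\bm{K}\bm{\cdot}\bm{S}$) to bring them to the stated ordering. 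This directly yields \eqref{eq85a} from $\bm{R}\bm{\cdot}\bm{\sigma}_{R}^{(2)}$ and \eqref{eq85b} from $\bm{S}\bm{\cdot}\bm{\sigma}_{S}^{(2)}$. The last two constituents, \eqref{eq85d} and \eqref{eq85e}, are immediate from $\bm{\mathfrak{S}}=-\nu\bm{\nabla}_{\pi}\omega_{n}$.

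A shortcut and cross-check is \eqref{relation1}, which gives $F_{RR}^{(2)}=F_{RR}^{(1)}-\nu\bm{R}\bm{\cdot}\bm{K}\bm{\cdot}\bm{R}$ and $F_{SS}^{(2)}=F_{SS}^{(1)}-\nu\bm{S}\bm{\cdot}\bm{K}\bm{\cdot}\bm{S}$. For the cross term it gives $F_{RS}^{(2)}=F_{RS}^{(1)}-\nu\bm{R}\bm{\cdot}\bm{K}\bm{\cdot}\bm{S}-\nu\bm{S}\bm{\cdot}\bm{K}\bm{\cdot}\bm{R}=F_{RS}^{(1)}-2\nu\bm{R}\bm{\cdot}\bm{K}\bm{\cdot}\bm{S}$. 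Applied to \eqref{FRS_1x}, this turns $+\nu\bm{R}\bm{\cdot}\bm{K}\bm{\cdot}\bm{S}$ into $-\nu\bm{R}\bm{\cdot}\bm{K}\bm{\cdot}\bm{S}$, which matches \eqref{eq85c}.

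The main obstacle is bookkeeping in $F_{RS}^{(2)}$, which collects contributions from both $\bm{R}\bm{\cdot}\bm{\sigma}_{S}^{(2)}$ and $\bm{S}\bm{\cdot}\bm{\sigma}_{R}^{(2)}$. There the sign of each curvature coupling has to be tracked carefully, as the cross-check above illustrates. I will handle it by computing the two dot products separately, mirroring the earlier remark on $\bm{\sigma}_{R}^{(1)}\bm{\cdot}\bm{S}$, and then summing. The dilatation term written as $\bm{\xi}_{R}\bm{\cdot}\bm{\nabla}\vartheta_{n}$ in \eqref{eq85c} equals $\bm{\xi}_{R}\bm{\cdot}\bm{\nabla}_{\pi}\vartheta_{n}$, since $\bm{\xi}_{R}$ is tangent. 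Finally, as a consistency check, summing all five constituents should give $\bm{\xi}\bm{\cdot}(\bm{a}-\bm{f})+\bm{\xi}\bm{\cdot}\bm{\nabla}_{\pi}\hat{P}$, matching \eqref{6p8}: all curvature, dilatation and $\bm{\nabla}_{\pi}\omega_{n}$ terms must cancel.
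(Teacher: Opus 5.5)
Your proposal is correct and takes the same route as the paper, whose proof simply substitutes \eqref{xxh} into \eqref{eq84abcd}. You do this after noting that $\bm{n}\bm{\cdot}\bm{\sigma}_{\bm\omega}^{(2)}=0$ removes the $\omega_n\bm{n}$ contribution, and both your cross-check via \eqref{relation1} and the sum back to \eqref{6p8} go through. One slip: the triple-product identity should read $\bm{a}\bm{\cdot}(\bm{n}\times\bm{b})=(\bm{a}\times\bm{n})\bm{\cdot}\bm{b}$ with no minus sign; every instance you actually compute (e.g.\ $\bm{R}\bm{\cdot}(\bm{n}\times\bm{\nabla}_{\pi}\vartheta_{n})=\bm{\xi}_{R}\bm{\cdot}\bm{\nabla}_{\pi}\vartheta_{n}$) already uses the correct sign, so nothing downstream changes.
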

\begin{proof}[Proof of Theorem~{\upshape\ref{theorem13}}]Substituting~\eqref{xxh} into~\eqref{eq84abcd} yields~\eqref{eq85abcd}. This completes the proof.
\end{proof}
\begin{remark}
In~\eqref{eq85a},~\eqref{eq85b} and~\eqref{eq85c}, the terms representing vorticity-curvature interaction can be expanded as
\begin{subequations}
\begin{equation}
	-2\nu\bm{R}\bm{\cdot}\hat{\bm{K}}\bm{\cdot}\bm{S}=-2\nu{K}\bm{R}\bm{\cdot}\bm{S}+2\nu\bm{R}\bm{\cdot}\bm{K}\bm{\cdot}\bm{S},
\end{equation}
\begin{equation}
	2\nu\bm{S}\bm{\cdot}\hat{\bm{K}}\bm{\cdot}\bm{S}=2\nu\bm{\xi}_{S}\bm{\cdot}\bm{K}\bm{\cdot}\bm{\xi}_{S}=2\nu KS^2-2\nu\bm{S}\bm{\cdot}\bm{K}\bm{\cdot}\bm{S},
\end{equation}
\begin{equation}
	2\nu\bm{R}\bm{\cdot}\hat{\bm{K}}\bm{\cdot}\bm{S}-\nu\bm{R}\bm{\cdot}\bm{K}\bm{\cdot}\bm{S}=2\nu{K}\bm{R}\bm{\cdot}\bm{S}-3\nu\bm{R}\bm{\cdot}\bm{K}\bm{\cdot}\bm{S}.
\end{equation}	
\end{subequations}
\end{remark}
\begin{remark}
Comparison between~\eqref{vvvv7p15} and~\eqref{eq85abcd} yields
\begin{subequations}
\begin{equation}
	F_{RR}^{(2)}=F_{RR}^{(1)}-\nu\bm{R}\bm{\cdot}\bm{K}\bm{\cdot}\bm{R},
\end{equation}
\begin{equation}
	F_{SS}^{(2)}=F_{SS}^{(1)}-\nu\bm{S}\bm{\cdot}\bm{K}\bm{\cdot}\bm{S},
\end{equation}
\begin{equation}	
	F_{RS}^{(2)}=F_{RS}^{(1)}-2\nu\bm{R}\bm{\cdot}\bm{K}\bm{\cdot}\bm{S}.
\end{equation}
Therefore, we obtain
\begin{equation}
	F_{RR}^{(2)}+F_{SS}^{(2)}+F_{RS}^{(2)}=F_{RR}^{(1)}+F_{SS}^{(1)}+F_{RS}^{(1)}-\nu\bm{\omega}_{\pi}\bm{\cdot}\bm{K}\bm{\cdot}\bm{\omega}_{\pi}.
\end{equation}	
\end{subequations}
\end{remark}

\subsection{Simplified expressions of boundary vorticity and enstrophy fluxes on a stationary wall}
On a stationary curved wall $\bm{\Sigma}$ (following~\S\ref{simple_BVF} and~\S\ref{yy0000}), we obtain
\begin{itemize}
	\item Intrinsic decomposition of Lyman-Huggins BVF
	\begin{subequations}
	\begin{equation}
		\bm{\sigma}_{\bm{\omega}}^{(2)}=\bm{\sigma}_{R}^{(2)}+\bm{\sigma}_{S}^{(2)}=\dotuline{-\bm{n}\times\bm{f}+\bm{n}\times\bm{\nabla}_{\pi}\hat{P}},
	\end{equation}
	\begin{equation}
		\bm{\sigma}_{R}^{(2)}=\dashuline{-2\nu\bm{n}\times\bm{\nabla}_{\pi}\vartheta}~\underline{-2\nu\hat{\bm{K}}\bm{\cdot}\bm{\omega}},
	\end{equation}
	\begin{equation}
			\bm{\sigma}_{S}^{(2)}=\dashuline{2\nu\bm{n}\times\bm{\nabla}_{\pi}\vartheta}~\dotuline{-\bm{n}\times\bm{f}+\bm{n}\times\bm{\nabla}_{\pi}\hat{P}}~\underline{+2\nu\hat{\bm{K}}\bm{\cdot}\bm{\omega}}.
	\end{equation}
	\end{subequations}
	It is noted that $\hat{\bm{K}}\bm{\cdot}\bm{\omega}=K\bm{\omega}-\bm{K}\bm{\cdot}\bm{\omega}=\bm{n}\times\left(\bm{K}\bm{\cdot}\bm{\xi}\right)$.
	\item Intrinsic decomposition of Lyman-Huggins BEF
	\begin{subequations}
	\begin{equation}
	F_{\Omega}^{(2)}=F_{SS}^{(2)}+F_{RS}^{(2)}=\dotuline{-\bm{\xi}\bm{\cdot}\bm{f}+\bm{\xi}\bm{\cdot}\bm{\nabla}_{\pi}\hat{P}},
	\end{equation}
	\begin{equation}
		F_{SS}^{(2)}=\dashuline{2\nu\bm{\xi}\bm{\cdot}\bm{\nabla}_{\pi}\vartheta}~\dotuline{-\bm{\xi}\bm{\cdot}\bm{f}+\bm{\xi}\bm{\cdot}\bm{\nabla}_{\pi}\hat{P}}~\underline{+2\nu\bm{\xi}\bm{\cdot}\bm{K}\bm{\cdot}\bm{\xi}},
	\end{equation}
\begin{equation}
	F_{RS}^{(2)}=\dashuline{-2\nu\bm{\xi}\bm{\cdot}\bm{\nabla}_{\pi}\vartheta}~\underline{-2\nu\bm{\xi}\bm{\cdot}\bm{K}\bm{\cdot}\bm{\xi}}.
\end{equation}
	\end{subequations}
	It is noted that $\bm{\xi}\bm{\cdot}\bm{K}\bm{\cdot}\bm{\xi}=\bm{\omega}\bm{\cdot}\hat{\bm{K}}\bm{\cdot}\bm{\omega}=K\omega^2-\bm{\omega}\bm{\cdot}{\bm{K}}\bm{\cdot}\bm{\omega}$.
\end{itemize}

\section{Several supplementary theorems on kinematics and dynamics}\label{Several supplementary theorems}
\begin{theorem}\label{thm4}
	On the base surface $\bm{\Sigma}$, the wall-normal derivative of the dilatation $\vartheta$ is expressed as
	\begin{subequations}
			\begin{eqnarray}\label{eq9p1}
			\left[\frac{\partial\vartheta}{\partial n}\right]_{\Sigma}=		\bm{K}\bm{:}\bm{\nabla}_{\pi}\bm{u}+\bm{\nabla}_{\pi}\bm{\cdot}\left(\frac{1}{2}\bm{\xi}_{R}+\bm{\xi}_{S}\right)-K\vartheta_{n}+\left[\frac{\partial^2 u_n}{\partial n^2}\right]_{\Sigma},
		\end{eqnarray}
		where the first two terms on the right-hand side are further expressed as
		\begin{eqnarray}
			\bm{K}\bm{:}\bm{\nabla}_{\pi}\bm{u}=\bm{K}\bm{:}\bm{\nabla}_{C}\bm{u}_{\pi}-u_n(K^2-2K_G),
		\end{eqnarray}
		\begin{eqnarray}
			\bm{\nabla}_{\pi}\bm{\cdot}\left(\frac{1}{2}\bm{\xi}_{R}+\bm{\xi}_{S}\right)=(\bm{n}\bm{\times}\bm{\nabla}_{\pi})\bm{\cdot}\left(\frac{1}{2}\bm{R}+\bm{S}\right).
		\end{eqnarray}
	\end{subequations}
\end{theorem}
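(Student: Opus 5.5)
The plan is to take the trace of the identity \eqref{hh1} in Theorem~\ref{thm2}. Since $\vartheta={\rm tr}(\bm{A})$ and the trace commutes with $\partial/\partial n$ (the trace is the full contraction with $\bm{I}$, which is constant), we have $[\partial\vartheta/\partial n]_{\Sigma}={\rm tr}[\partial\bm{A}/\partial n]_{\Sigma}$. Applying the trace to the three terms of \eqref{hh1} term by term gives the following. The first term is ${\rm tr}(\bm{K}\bm{\cdot}\bm{\nabla}_{\pi}\bm{u})=\bm{K}\bm{:}\bm{\nabla}_{\pi}\bm{u}$, using the symmetry of $\bm{K}$. The last term is ${\rm tr}(\bm{n}[\partial^2\bm{u}/\partial n^2])=\bm{n}\bm{\cdot}[\partial^2\bm{u}/\partial n^2]_{\Sigma}$. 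Since $\bm{n}=\bar{\bm{g}}_3$ is independent of $\zeta$ along the normal lines, this equals $[\partial^2u_n/\partial n^2]_{\Sigma}$. For the middle term we use \eqref{hh2}. The trace of $(\bm{\nabla}_{\pi}\vartheta_n)\bm{n}$ vanishes because $\bm{\nabla}_{\pi}\vartheta_n$ is tangent. The trace of $-\vartheta_n\bm{K}$ is $-K\vartheta_n$. The trace of $\bm{\nabla}_{\pi}(\tfrac12\bm{\xi}_R+\bm{\xi}_S)$ is $\bm{\nabla}_{\pi}\bm{\cdot}(\tfrac12\bm{\xi}_R+\bm{\xi}_S)$. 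Summing the three contributions yields \eqref{eq9p1}.

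For the first auxiliary identity, substitute \eqref{qq00} into $\bm{K}\bm{:}\bm{\nabla}_{\pi}\bm{u}$. The term $-\tfrac12\bm{K}\bm{:}(\bm{\xi}_R\bm{n})$ vanishes because $\bm{K}\bm{\cdot}\bm{n}=\bm{0}$. The term $-u_n\bm{K}\bm{:}\bm{K}=-u_n{\rm tr}(\bm{K}^2)$ is evaluated by tracing the Cayley--Hamilton identity \eqref{third}: ${\rm tr}(\bm{K}^2)=K^2-K_G\,{\rm tr}(\bm{G})=K^2-2K_G$. This step needs ${\rm tr}(\bm{G})=2$.

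For the second auxiliary identity, write $\bm{\xi}_R=\bm{R}\times\bm{n}$ and $\bm{\xi}_S=\bm{S}\times\bm{n}$ from \eqref{xi_decomp}, and set $\bm{V}=\tfrac12\bm{R}+\bm{S}\in T\Sigma$. We need $\bm{\nabla}_{\pi}\bm{\cdot}(\bm{V}\times\bm{n})=(\bm{n}\times\bm{\nabla}_{\pi})\bm{\cdot}\bm{V}$. Expanding in the frame gives $\bm{g}^{\alpha}\bm{\cdot}(\partial_\alpha\bm{V}\times\bm{n})+\bm{g}^{\alpha}\bm{\cdot}(\bm{V}\times\partial_\alpha\bm{n})$. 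The first piece equals $(\bm{n}\times\bm{g}^{\alpha})\bm{\cdot}\partial_\alpha\bm{V}=(\bm{n}\times\bm{\nabla}_{\pi})\bm{\cdot}\bm{V}$ by the cyclic property of the scalar triple product. The second piece equals $-\bm{V}\bm{\cdot}(\bm{\nabla}_{\pi}\times\bm{n})$, which vanishes by Theorem~\ref{curl_1}. Alternatively, one can take $\bm{\Phi}=\bm{V}$ in \eqref{Xie_id1} together with $\bm{n}\bm{\cdot}\bm{V}=0$; either route is short.

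The main obstacle is bookkeeping rather than conceptual: confirming that the trace legitimately commutes with the normal derivative in the semi-orthogonal system and that $\bm{n}\bm{\cdot}\partial^2_n\bm{u}=\partial^2_n u_n$. Both follow from $\partial\bm{n}/\partial\zeta=\bm{0}$ by \eqref{eq26b}. The other point to check is the sign and ordering conventions of the dyads in \eqref{hh1}, in particular whether the trace of $\bm{K}\bm{\cdot}\bm{\nabla}_{\pi}\bm{u}$ is indeed $\bm{K}\bm{:}\bm{\nabla}_{\pi}\bm{u}$ with the left-gradient convention, which holds because $\bm{K}$ is symmetric.
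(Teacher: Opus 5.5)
Your proposal is correct and follows the paper's proof essentially step for step: trace \eqref{hh1} and \eqref{hh2}, substitute \eqref{qq00} and trace the Cayley--Hamilton identity \eqref{third} (using ${\rm tr}\,\bm{G}=2$), and use $\bm{\nabla}_{\pi}\times\bm{n}=\bm{0}$ from \eqref{curl_n} for the last identity. One small slip is in your optional alternative: to get the needed identity from \eqref{Xie_id1} you must take $\bm{\Phi}=\bm{V}\times\bm{n}$ (so that $\bm{n}\times\bm{\Phi}=\bm{V}$), whereas $\bm{\Phi}=\bm{V}$ gives the companion relation $\bm{\nabla}_{\pi}\bm{\cdot}\bm{V}=(\bm{n}\times\bm{\nabla}_{\pi})\bm{\cdot}(\bm{n}\times\bm{V})$ instead.
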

\begin{proof}[Proof of Theorem~{\upshape\ref{thm4}}]
	Taking the trace of~\eqref{hh1} and~\eqref{hh2} gives
	\begin{eqnarray}\label{hh1x}
		\left[\frac{\partial\vartheta}{\partial n}\right]_{\Sigma}=\bm{K}\bm{:}\bm{\nabla}_{\pi}\bm{u}+\bm{\nabla}_{\pi}\bm{\cdot}\left[\frac{\partial\bm{u}}{\partial n}\right]_{\Sigma}+\left[\frac{\partial^2 u_n}{\partial n^2}\right]_{\Sigma},
	\end{eqnarray}
	\begin{eqnarray}\label{hh2x}
		\bm{\nabla}_{\pi}\bm{\cdot}\left[\frac{\partial\bm{u}}{\partial n}\right]_{\Sigma}=	-\vartheta_{n}K+\bm{\nabla}_{\pi}\bm{\cdot}\left(\frac{1}{2}\bm{\xi}_{R}+\bm{\xi}_{S}\right).
	\end{eqnarray}
	Using~\eqref{third} and~\eqref{qq00}, the first term on the right-hand side of~\eqref{hh1x} is evaluated as
	\begin{equation}
		\begin{aligned}
			\bm{K}\bm{:}\bm{\nabla}_{\pi}\bm{u}&=\bm{K}\bm{:}\bm{\nabla}_{C}\bm{u}_{\pi}-u_n{\rm tr}(\bm{K}^2)\\
			&=\bm{K}\bm{:}\bm{\nabla}_{C}\bm{u}_{\pi}-u_{n}{\rm tr}(K\bm{K}-K_{G}\bm{G})\\
			&=\bm{K}\bm{:}\bm{\nabla}_{C}\bm{u}_{\pi}-u_{n}(K^2-2K_G),
		\end{aligned}
	\end{equation}
	where $K^2-2K_{G}=\lambda_{1}^{2}+\lambda_{2}^{2}$.
	Using~\eqref{curl_n}, the second term on the right-hand side of~\eqref{hh2x} is
	\begin{eqnarray}
		\bm{\nabla}_{\pi}\bm{\cdot}\left(\frac{1}{2}\bm{\xi}_{R}+\bm{\xi}_{S}\right)&=&\bm{\nabla}_{\pi}\bm{\cdot}\left[\left(\frac{1}{2}\bm{R}+\bm{S}\right)\times\bm{n}\right]\nonumber\\
		&=&\bm{n}\bm{\cdot}\bm{\nabla}_{\pi}\times\left(\frac{1}{2}\bm{R}+\bm{S}\right)-\left(\frac{1}{2}\bm{R}+\bm{S}\right)\bm{\cdot}(\bm{\nabla}_{\pi}\times\bm{n})\nonumber\\
		&=&(\bm{n}\times\bm{\nabla}_{\pi})\bm{\cdot}\left(\frac{1}{2}\bm{R}+\bm{S}\right).
	\end{eqnarray}
	This completes the proof.
\end{proof}

\begin{theorem}\label{thm5}
	Expressed via the surface Laplacian $\nabla_{\pi}^{2}\bm{u}$ on the base surface $\bm{\Sigma}$, the wall-normal derivative of the tangential vorticity $\bm{\omega}_{\pi}$ can be expressed as
	\begin{subequations}
			\begin{equation}\label{appx1}
			\begin{aligned}
				\left[\frac{\partial\bm{\omega}_{\pi}}{\partial n}\right]_{\Sigma}=&-\bm{n}\times\bm{\nabla}_{\pi}\vartheta+\bm{n}\times(\nabla_{\pi}^{2}\bm{u})_{\pi}-K\left(\frac{1}{2}\bm{R}+\bm{S}\right)\\
				&+\bm{\nabla}_{\pi}\omega_{n}+\bm{K}\bm{\cdot}\bm{\omega}_{\pi}+\bm{n}\times\left[\frac{\partial^2\bm{u}_{\pi}}{\partial n^2}\right]_{\Sigma},
			\end{aligned}
		\end{equation}
		and it holds that
		\begin{equation}\label{appx2}
			\begin{aligned}
				\bm{n}\times(\nabla_{\pi}^{2}\bm{u})_{\pi}&=\frac{1}{2}K\bm{R}-\bm{K}\bm{\cdot}\bm{R}-\bm{\nabla}_{\pi}\omega_{n}+\bm{n}\times\bm{\nabla}_{\pi}\vartheta_{\pi}\\
				&=\frac{1}{2}\hat{\bm{K}}\bm{\cdot}\bm{R}-\frac{1}{2}\bm{K}\bm{\cdot}\bm{R}-\bm{\nabla}_{\pi}\omega_{n}+\bm{n}\times\bm{\nabla}_{\pi}\vartheta_{\pi}.
			\end{aligned}
		\end{equation}
	\end{subequations}
\end{theorem}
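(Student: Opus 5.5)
The plan is to prove \eqref{appx2} first and then obtain \eqref{appx1} from the already established formula \eqref{pp1} by pure substitution.

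\textbf{Reduction of \eqref{appx1} to \eqref{appx2}.} Insert \eqref{appx2} into the right-hand side of \eqref{appx1}.
\begin{itemize}
\item The dilatation terms combine through $\vartheta=\vartheta_{\pi}+\vartheta_{n}$: $-\bm{n}\times\bm{\nabla}_{\pi}\vartheta+\bm{n}\times\bm{\nabla}_{\pi}\vartheta_{\pi}=-\bm{n}\times\bm{\nabla}_{\pi}\vartheta_{n}$.
\item The two $\bm{\nabla}_{\pi}\omega_{n}$ terms cancel.
\item Using $\bm{\omega}_{\pi}=\bm{R}+\bm{S}$, the curvature terms reduce to $\tfrac12K\bm{R}-\bm{K}\bm{\cdot}\bm{R}-\tfrac12K\bm{R}-K\bm{S}+\bm{K}\bm{\cdot}\bm{R}+\bm{K}\bm{\cdot}\bm{S}=-\hat{\bm{K}}\bm{\cdot}\bm{S}$, by \eqref{713_ab}.
\end{itemize}
What remains is exactly the second line of \eqref{pp1} in Theorem~\ref{thm3}. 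So \eqref{appx1} is equivalent to \eqref{appx2}. The second equality in \eqref{appx2} is immediate from $\hat{\bm{K}}=K\bm{G}-\bm{K}$, since $\tfrac12\hat{\bm{K}}\bm{\cdot}\bm{R}-\tfrac12\bm{K}\bm{\cdot}\bm{R}=\tfrac12K\bm{R}-\bm{K}\bm{\cdot}\bm{R}$.

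\textbf{Proof of the first line of \eqref{appx2}.} Write $\nabla_{\pi}^2\bm{u}=\bm{\nabla}_{\pi}\bm{\cdot}(\bm{\nabla}_{\pi}\bm{u})$ and use Theorem~\ref{SLC1} in the form $\bm{\nabla}_{\pi}\bm{u}=\bm{M}-\tfrac12\bm{\xi}_{R}\bm{n}$, where $\bm{M}\equiv\bm{\nabla}_{C}\bm{u}_{\pi}-u_{n}\bm{K}$ is purely tangential. The two pieces are handled as follows.
\begin{itemize}
\item For the normal piece, $\bm{\nabla}_{\pi}\bm{\cdot}(\bm{\xi}_{R}\bm{n})=(\bm{\nabla}_{\pi}\bm{\cdot}\bm{\xi}_{R})\bm{n}-\bm{K}\bm{\cdot}\bm{\xi}_{R}$. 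Its tangential part, after applying $\bm{n}\times$, is handled by Theorem~\ref{thm1} with the pair $(\bm{a},\bm{c})=(\bm{R},\bm{\xi}_{R})$. This contributes $\tfrac12\hat{\bm{K}}\bm{\cdot}\bm{R}$.
\item For the tangential piece, the tangential part of $\bm{\nabla}_{\pi}\bm{\cdot}\bm{M}$ is the Levi-Civita divergence $\nabla^{\alpha}M_{\alpha\beta}$. Split it into $\nabla^{\alpha}\nabla_{\alpha}u_{\beta}-\nabla^{\alpha}(u_{n}b_{\alpha\beta})$.
\end{itemize}

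\textbf{The Laplacian of $\bm{u}_{\pi}$.} I would use the two-dimensional Ricci commutation identity. In 2D the Ricci tensor is $K_{G}g_{\alpha\beta}$, so
\[
\nabla^{\alpha}\nabla_{\alpha}u_{\beta}=\nabla_{\beta}(\nabla^{\alpha}u_{\alpha})+\nabla^{\alpha}(\nabla_{\alpha}u_{\beta}-\nabla_{\beta}u_{\alpha})-K_{G}u_{\beta}.
\]
By \eqref{pp14}, $\nabla_{\alpha}u_{\beta}-\nabla_{\beta}u_{\alpha}=\varepsilon_{3\alpha\beta}\omega_{n}$. Hence, after applying $\bm{n}\times$, the middle term becomes $-\bm{\nabla}_{\pi}\omega_{n}$ (sign to be checked).

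\textbf{The curvature term.} For $\nabla^{\alpha}(u_{n}b_{\alpha\beta})$ I would invoke Codazzi--Mainardi, which gives $\nabla^{\alpha}b_{\alpha\beta}=\partial_{\beta}K$. This yields $\bm{K}\bm{\cdot}\bm{\nabla}_{\pi}u_{n}+u_{n}\bm{\nabla}_{\pi}K$.

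\textbf{Regrouping.} Combine the pieces into
\[
\bm{\nabla}_{\pi}(\bm{\nabla}_{\pi}\bm{\cdot}\bm{u}_{\pi})-\bm{\nabla}_{\pi}(u_{n}K)=\bm{\nabla}_{\pi}\vartheta_{\pi},
\]
which uses \eqref{aa1}. The leftover terms $-\bm{K}\bm{\cdot}\bm{\nabla}_{\pi}u_{n}+K\bm{\nabla}_{\pi}u_{n}-K_{G}\bm{u}_{\pi}$ must then be matched, after $\bm{n}\times$, against the remark after Theorem~\ref{thm7}:
\[
\bm{K}\bm{\cdot}\bm{R}=2K_{G}(\bm{u}_{\pi}\times\bm{n})+2\bm{K}\bm{\cdot}(\bm{\nabla}_{\pi}u_{n}\times\bm{n}),
\]
together with $\bm{R}=2\hat{\bm{K}}\bm{\cdot}(\bm{u}_{\pi}\times\bm{n})-2\bm{n}\times\bm{\nabla}_{\pi}u_{n}$. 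Theorem~\ref{thm1} will be used repeatedly to move $\bm{n}\times$ through $\bm{K}$ and $\hat{\bm{K}}$.

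\textbf{Main obstacle.} The hard step is this final bookkeeping: getting the signs of the Ricci term $K_{G}\bm{u}_{\pi}$ and of the $u_{n}$-curvature terms right so that together they assemble exactly into $\tfrac12 K\bm{R}-\bm{K}\bm{\cdot}\bm{R}$ (equivalently $\tfrac12\hat{\bm{K}}\bm{\cdot}\bm{R}-\tfrac12\bm{K}\bm{\cdot}\bm{R}$). As a safeguard I would first verify the sign conventions in the special cases $u_{n}=0$ on a sphere and $\bm{u}_{\pi}=\bm{0}$.
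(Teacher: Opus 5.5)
Your strategy works and differs from the paper's, but the Ricci commutation identity you wrote has the wrong sign, and as written the final bookkeeping will not close. On a surface, $\nabla^{\alpha}\nabla_{\beta}u_{\alpha}-\nabla_{\beta}\nabla^{\alpha}u_{\alpha}=R_{\beta}{}^{\lambda}u_{\lambda}=K_{G}u_{\beta}$. The correct form is therefore
\[
\nabla^{\alpha}\nabla_{\alpha}u_{\beta}=\nabla_{\beta}(\nabla^{\alpha}u_{\alpha})+\nabla^{\alpha}(\nabla_{\alpha}u_{\beta}-\nabla_{\beta}u_{\alpha})+K_{G}u_{\beta}.
\]
Your proposed sphere test catches this. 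Take the rigid rotation $\bm{u}=\bm{e}_{z}\times\bm{x}$ on a sphere of radius $a$. Then $\vartheta_{\pi}=0$, $\omega_{n}=2z/a$ and $\bm{n}\times\bm{\nabla}_{\pi}\omega_{n}=-2K_{G}\bm{u}$. Componentwise, $(\nabla_{\pi}^{2}\bm{u})_{\pi}=-2K_{G}\bm{u}$. So the curvature leftovers must cancel: the $+K_{G}\bm{u}_{\pi}$ from the Ricci term offsets the $-\bm{K}^{2}\bm{\cdot}\bm{u}_{\pi}=-K_{G}\bm{u}_{\pi}$ from the normal piece. Your sign instead leaves a spurious $-2K_{G}\bm{u}_{\pi}$.

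With the corrected sign, your leftover tangential terms are $\hat{\bm{K}}\bm{\cdot}\bm{\nabla}_{\pi}u_{n}+K_{G}\bm{u}_{\pi}$. By \eqref{id1_equivalent} and $\hat{\bm{K}}=K\bm{G}-\bm{K}$ one has $\bm{n}\times(\hat{\bm{K}}\bm{\cdot}\bm{c})=\bm{K}\bm{\cdot}(\bm{n}\times\bm{c})$. Hence applying $\bm{n}\times$ to the leftovers gives $\bm{K}\bm{\cdot}(\bm{n}\times\bm{\nabla}_{\pi}u_{n})+K_{G}\,\bm{n}\times\bm{u}_{\pi}$. By the remark after Theorem~\ref{thm7}, this is exactly $-\tfrac12\bm{K}\bm{\cdot}\bm{R}$. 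Add the $\tfrac12\hat{\bm{K}}\bm{\cdot}\bm{R}$ from the normal piece and your $\bm{n}\times\bm{\nabla}_{\pi}\vartheta_{\pi}-\bm{\nabla}_{\pi}\omega_{n}$ (that sign is correct). This gives precisely the second line of \eqref{appx2}. Your reduction of \eqref{appx1} to \eqref{appx2} via \eqref{pp1} is also correct.

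The paper runs the logic in the opposite order. It proves \eqref{appx1} directly from the ambient identity $\bm{\nabla}\times\bm{\omega}=\bm{\nabla}\vartheta-\nabla^{2}\bm{u}$ on $\bm{\Sigma}$. It rewrites $\bm{n}\times(\bm{\nabla}\times\bm{\omega})$ via \eqref{BVF_identity2} and \eqref{BVF_identity4}. It splits $\nabla^{2}\bm{u}$ with \eqref{sdo4} and \eqref{qq8}, so the term $-K(\tfrac12\bm{R}+\bm{S})$ comes from $-K[\partial\bm{u}/\partial n]_{\Sigma}$. It then reads off \eqref{appx2} by comparison with \eqref{pp1}. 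That route takes a few lines and needs no intrinsic curvature identities, since the flat ambient Laplacian absorbs all the Gauss--Codazzi bookkeeping. Your route proves \eqref{appx2} as a self-contained Weitzenb\"ock-type identity for the surface Laplacian, independent of any normal derivatives. The price is the Ricci identity, Codazzi--Mainardi and Cayley--Hamilton, with the sign sensitivity seen above. A side benefit is that your intrinsic proof of \eqref{appx2}, combined with the paper's direct derivation of \eqref{appx1}, gives an independent check of \eqref{pp1}.
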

\begin{proof}[Proof of Theorem~{\upshape\ref{thm5}}]
	The curl of the vorticity admits the following decomposition:
	\begin{eqnarray}\label{eq9p10}
		\bm{\nabla}\times\bm{\omega}=\bm{\nabla}\times\left(\bm{\nabla}\times\bm{u}\right)=\bm{\nabla}{\vartheta}-\nabla^{2}\bm{u}.
	\end{eqnarray}
	Evaluated on the base surface $\bm{\Sigma}$, it is implied from~\eqref{BVF_identity2} and~\eqref{BVF_identity4} that
	\begin{eqnarray}
		\bm{n}\times\left(\bm{\nabla}\times\bm{\omega}\right)=\bm{\nabla}_{\pi}\omega_{n}+\bm{K}\bm{\cdot}\bm{\omega}_{\pi}-\left[\frac{\partial\bm{\omega}_{\pi}}{\partial n}\right]_{\Sigma}.
	\end{eqnarray}
	By using~\eqref{sdo4} and~\eqref{qq8}, the Laplacian of velocity on $\bm{\Sigma}$ is calculated as
	\begin{eqnarray}\label{eq9p12}
		\nabla^{2}\bm{u}&=&\nabla_{\pi}^2\bm{u}-K\left[\frac{\partial\bm{u}}{\partial n}\right]_{\Sigma}+\left[\frac{\partial^2\bm{u}}{\partial n^2}\right]_{\Sigma}\nonumber\\
		&=&\nabla_{\pi}^2\bm{u}-K\left[\vartheta_{n}\bm{n}+\left(\frac{1}{2}\bm{\xi}_{R}+\bm{\xi}_{S}\right)\right]+\left[\frac{\partial^2\bm{u}}{\partial n^2}\right]_{\Sigma}.
	\end{eqnarray}
Consequently, using \eqref{eq9p10} and \eqref{eq9p12}, we obtain
	\begin{eqnarray}
		\bm{n}\times(\bm{\nabla}\vartheta-\nabla^{2}\bm{u})=\bm{n}\times\bm{\nabla}_{\pi}\vartheta-\bm{n}\times(\nabla_{\pi}^{2}\bm{u})_{\pi}+K\left(\frac{1}{2}\bm{R}+\bm{S}\right)-\bm{n}\times\left[\frac{\partial^2\bm{u}_{\pi}}{\partial n^2}\right]_{\Sigma}.
	\end{eqnarray}
Equation~\eqref{appx2} is verified by direct comparison of~\eqref{pp1} and~\eqref{appx1}. This completes the proof.
\end{proof}

\begin{theorem}\label{app_theorem}
	On the based surface $\bm{\Sigma}$, the boundary pressure flux can be expressed as
	\begin{eqnarray}\label{bpf}	\left[\frac{\partial\hat{P}}{\partial n}\right]_{\Sigma}&=&-\left(a_n-f_n\right)-\nu\bm{\nabla}_{\pi}\bm{\cdot}\bm{\xi}\nonumber\\
		&=&-\left(a_n-f_n\right)-\nu(\bm{n}\times\bm{\nabla}_{\pi})\bm{\cdot}\bm{\omega}\nonumber\\
		&=&-\left(a_n-f_n\right)-\nu(\bm{n}\times\bm{\nabla})\bm{\cdot}\bm{\omega}\nonumber\\
		&=&-\left(a_n-f_n\right)-\nu\bm{n}\bm{\cdot}(\bm{\nabla}\times\bm{\omega}).
	\end{eqnarray}
	Here, $a_{n}\equiv\bm{a}\bm{\cdot}\bm{n}$ and $f_{n}\equiv\bm{f}\bm{\cdot}\bm{n}$ are the surface-normal components of acceleration and external force at $\bm{\Sigma}$.
\end{theorem}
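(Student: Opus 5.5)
Take the normal component of the momentum equation~\eqref{NS2} on $\bm{\Sigma}$. Dotting with $\bm{n}$ and using $-\bm{\nabla}\hat{P}\bm{\cdot}\bm{n}=-[\partial\hat{P}/\partial n]_{\Sigma}$ (the tangential surface gradient $\bm{\nabla}_{\pi}\hat{P}$ is orthogonal to $\bm{n}$) gives
\begin{equation*}
a_n=-\left[\frac{\partial\hat{P}}{\partial n}\right]_{\Sigma}-\nu\bm{n}\bm{\cdot}(\bm{\nabla}\times\bm{\omega})+f_n .
\end{equation*}
This is already the last line of~\eqref{bpf}. The work is therefore to show that the three viscous expressions agree:
\begin{equation*}
\bm{n}\bm{\cdot}(\bm{\nabla}\times\bm{\omega})=(\bm{n}\times\bm{\nabla})\bm{\cdot}\bm{\omega}=(\bm{n}\times\bm{\nabla}_{\pi})\bm{\cdot}\bm{\omega}=\bm{\nabla}_{\pi}\bm{\cdot}\bm{\xi}.
\end{equation*}

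For the first equality, the scalar triple product gives $\bm{n}\bm{\cdot}(\bm{\nabla}\times\bm{\omega})=(\bm{n}\times\bm{\nabla})\bm{\cdot}\bm{\omega}$. We must keep in mind that $\bm{\nabla}$ acts only on $\bm{\omega}$ here, with $\bm{n}$ held as the value at the point. For the second, decompose $\bm{\nabla}=\bm{\nabla}_{\pi}+\bm{n}\partial_n$ as in~\eqref{sdo1}. The normal part contributes $(\bm{n}\times\bm{n})\bm{\cdot}\partial_n\bm{\omega}=0$, so only $(\bm{n}\times\bm{\nabla}_{\pi})\bm{\cdot}\bm{\omega}=(\bm{n}\times\bm{g}^{\alpha})\bm{\cdot}\partial_\alpha\bm{\omega}$ remains. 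The same form appears in Theorem~\ref{thm4}.

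The main obstacle is the last equality, $(\bm{n}\times\bm{\nabla}_{\pi})\bm{\cdot}\bm{\omega}=\bm{\nabla}_{\pi}\bm{\cdot}(\bm{\omega}\times\bm{n})$. The right-hand side differentiates $\bm{n}$, and the extra term must be shown to vanish. Expand
\begin{equation*}
\bm{\nabla}_{\pi}\bm{\cdot}(\bm{\omega}\times\bm{n})=\bm{g}^{\alpha}\bm{\cdot}(\partial_\alpha\bm{\omega}\times\bm{n})+\bm{g}^{\alpha}\bm{\cdot}(\bm{\omega}\times\partial_\alpha\bm{n}).
\end{equation*}
The first term equals $(\bm{n}\times\bm{g}^{\alpha})\bm{\cdot}\partial_\alpha\bm{\omega}$. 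The second equals $-\bm{\omega}\bm{\cdot}(\bm{g}^{\alpha}\times\partial_\alpha\bm{n})=-\bm{\omega}\bm{\cdot}(\bm{\nabla}_{\pi}\times\bm{n})$, which vanishes by Theorem~\ref{curl_1} because $\bm{K}$ is symmetric. This is exactly the manipulation used in the proof of Theorem~\ref{thm4}.

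Finally, rearrange the normal momentum balance for $[\partial\hat{P}/\partial n]_{\Sigma}$ and substitute the chain of equalities. Each line of~\eqref{bpf} then follows in turn.
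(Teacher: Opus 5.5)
Your proof is correct, but it is not the route the paper writes out.

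\textbf{Your route.} You dot \eqref{NS2} with $\bm{n}$, which gives the last line $-\nu\bm{n}\bm{\cdot}(\bm{\nabla}\times\bm{\omega})$ immediately. You then climb the chain to $\bm{\nabla}_{\pi}\bm{\cdot}\bm{\xi}$ using only three ingredients: the triple product, the splitting $\bm{\nabla}=\bm{\nabla}_{\pi}+\bm{n}\partial_n$, and Theorem~\ref{curl_1}.

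\textbf{The paper's route.} The paper rewrites the viscous term as $\bm{\nabla}\vartheta-\nabla^2\bm{u}$ and expands the normal part of the Laplacian via \eqref{eq9p12}. It then substitutes the dilatation flux of Theorem~\ref{thm4} and the identity $-\tfrac12\bm{\nabla}_{\pi}\bm{\cdot}\bm{\xi}_R=(\nabla_\pi^2\bm{u})_n-\bm{K}\bm{:}\bm{\nabla}_{\pi}\bm{u}$. The curvature terms $\bm{K}\bm{:}\bm{\nabla}_{\pi}\bm{u}$ and $K\vartheta_n$ cancel, and it lands first on the top line, $-\nu\bm{\nabla}_{\pi}\bm{\cdot}(\bm{\xi}_R+\bm{\xi}_S)$. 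The remaining equalities are attributed to Xie's identity \eqref{Xie_id1}, and the direct derivation you gave is only mentioned in passing.

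\textbf{Comparison.} Your version is shorter and self-contained. It also makes plain that the statement is just the normal component of the momentum balance plus the symmetry of $\bm{K}$, which you use explicitly to kill the $\bm{\omega}\bm{\cdot}(\bm{\nabla}_{\pi}\times\bm{n})$ term. The paper's version costs more, but it doubles as a consistency check on its kinematic results (Theorem~\ref{thm4} and \eqref{eq9p12}). It also shows how the rigid-rotation and spin parts $\bm{\xi}_R$ and $\bm{\xi}_S$ combine to form the pressure flux.
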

\begin{proof}[Proof of Theorem~{\upshape\ref{app_theorem}}]
From~\eqref{aa2} and~\eqref{xi_decomp2}, it holds that
\begin{equation}\label{eq9p14}
-\frac{1}{2}\bm{\xi}_{R}=-\bm{W}_{\rm eff}\times\bm{n}=\bm{\nabla}_{\pi}\bm{u}\bm{\cdot}\bm{n},
\end{equation}
which implies that
\begin{eqnarray}\label{eq9p15}
-\frac{1}{2}\bm{\nabla}_{\pi}\bm{\cdot}\bm{\xi}_{R}&=&\bm{\nabla}_{\pi}\bm{\cdot}\left(\bm{\nabla}_{\pi}\bm{u}\bm{\cdot}\bm{n}\right)\nonumber\\
&=&(\nabla_{\pi}^2\bm{u})_{n}-\bm{K}\bm{:}\bm{\nabla}_{\pi}\bm{u}.
\end{eqnarray}
On the base surface $\bm{\Sigma}$, by using~\eqref{eq9p10}, evaluating the wall-normal component of~\eqref{NS2} gives
\begin{eqnarray}\label{eq9p16}
\left[\frac{\partial\hat{P}}{\partial n}\right]_{\Sigma}=-({a}_{n}-f_{n})-\nu\left[\frac{\partial\vartheta}{\partial n}\right]_{\Sigma}+\nu\nabla^{2}\bm{u}\bm{\cdot}\bm{n}.
\end{eqnarray}
The wall-normal component of~\eqref{eq9p12} gives
\begin{equation}\label{eq9p17}
	\nu\nabla^{2}\bm{u}\bm{\cdot}\bm{n}=\nu(\nabla_{\pi}^{2}\bm{u})_{n}-\nu {K}\vartheta_{n}+\nu\left[\frac{\partial^2{u}_{n}}{\partial n^2}\right]_{\Sigma}.
\end{equation}
Substituting \eqref{eq9p17} into \eqref{eq9p16}, and using \eqref{eq9p1} and~\eqref{eq9p15}, we have
\begin{eqnarray}\label{eq9p18}
\left[\frac{\partial\hat{P}}{\partial n}\right]_{\Sigma}&=&-({a}_{n}-f_{n})+\nu(\nabla_{\pi}^{2}\bm{u})_{n}-\nu {K}\vartheta_{n}-\nu\left(\left[\frac{\partial\vartheta}{\partial n}\right]_{\Sigma}-\left[\frac{\partial^2{u}_{n}}{\partial n^2}\right]_{\Sigma}\right)\nonumber\\
&=&-({a}_{n}-f_{n})-\frac{1}{2}\nu\bm{\nabla}_{\pi}\bm{\cdot}\bm{\xi}_{R}+\nu\bm{K}\bm{:}\bm{\nabla}_{\pi}\bm{u}-\nu K \vartheta_{n}\nonumber\\
& &-\nu\bm{K}\bm{:}\bm{\nabla}_{\pi}\bm{u}-\nu\bm{\nabla}_{\pi}\bm{\cdot}\left(\frac{1}{2}\bm{\xi}_{R}+\bm{\xi}_{S}\right)+\nu K \vartheta_{n}\nonumber\\
&=&-(a_{n}-f_{n})-\nu\bm{\nabla}_{\pi}\bm{\cdot}\bm{\xi}.
\end{eqnarray}
 Equation \eqref{bpf} can also be directly derived from \eqref{NS2} on $\bm{\Sigma}$. The same result is obtained via different approaches to proof. Also, it is obvious from~\eqref{Xie_id1} that 
 \begin{equation}
 \bm{\nabla}_{\pi}\bm{\cdot}\bm{\xi}=(\bm{n}\times\bm{\nabla}_{\pi})\bm{\cdot}\bm{\omega}=(\bm{n}\times\bm{\nabla})\bm{\cdot}\bm{\omega}=\bm{n}\bm{\cdot}(\bm{\nabla}\times\bm{\omega}).
 \end{equation}
 This completes the proof.
\end{proof}
\begin{remark}
On a stationary boundary $\bm{\Sigma}$,~\eqref{bpf} reduces to
\begin{eqnarray}
	\left[\frac{\partial{P}}{\partial n}\right]_{\Sigma}=\rho{f}_{n}-\bm{\nabla}_{\pi}\bm{\cdot}\bm{\tau},
\end{eqnarray}
where the right-hand side is determined by the wall-normal component of the body force $(\rho{f}_{n})$, and the skin-friction divergence $(\bm{\nabla}_{\pi}\bm{\cdot}\bm{\tau})$. Assuming $f_{n}=0$, we consider a point $\mathcal{P}\in\bm{\Sigma}(\mathscr{D}_{\bm x})$. Let $S\subset\bm{\Sigma}(\mathscr{D}_{\bm x})$ be a small open neighborhood of $\mathcal{P}$, bounded by the closed circuit $L$, with the unit tangent vector $\bm{t}$ (oriented counterclockwise when viewed from the tip of $\bm{n}$). An application of the mean-value theorem together with the generalized Stokes formula gives
\begin{eqnarray}
	\left[\frac{\partial{P}}{\partial n}\right]_{\mathcal{P}}&=&-\lim\limits_{{\rm Area}(S)\rightarrow 0}\frac{1}{{\rm Area}(S)}\int_{S}\bm{\nabla}_{\pi}\bm{\cdot}\bm{\tau}dS\nonumber\\
	&=&-\lim\limits_{{\rm Area}(S)\rightarrow 0}\frac{1}{{\rm Area}(S)}\oint_{L}(\bm{t}\times\bm{n})\bm{\cdot}\bm{\tau}dl.
\end{eqnarray}
Thus, the sign of $\partial_{n}P$ at $\mathcal{P}$ is directly tied to the flux of $\bm{\tau}$ across $L$. The sources and sinks of the $\bm{\tau}$-field  accordingly indicate the nature of the normal pressure gradient.
\end{remark}

\section{Concluding remarks}\label{conclusion}
\begin{itemize}
\item By adopting a differential-geometric approach, we propose a rigorous theory for the intrinsic decompositions of boundary vorticity flux (BVF) and boundary enstrophy flux (BEF) in a compressible viscous flow interacting with an arbitrarily moving and deforming boundary. Moving beyond the single-vorticity paradigm in fluid mechanics, the theory transcends the classical framework of boundary vorticity dynamics originally proposed by~\citet{Lighthill1963}, providing general mathematical measures that quantify the strengths of boundary sources responsible for the rigid-rotation and spin modes inherent in vorticity.
\item The proposed framework is applicable to both the Lighthill-Panton-Wu and Lyman-Huggins interpretations of boundary vorticity and enstrophy dynamics. In contrast to existing vorticity dynamics theories, the theoretical structure in the present exposition allows for a clear discrimination and connection between kinematics (Theorems~\ref{SLC1}~--~\ref{thm_new2},~\ref{thm4},~\ref{thm5}) and dynamics (Theorems~\ref{thm7}~--~\ref{theorem13},~\ref{app_theorem}) on a deformable boundary.
\item The theory rigorously identifies a complete set of boundary sources for the rigid-rotation and spin modes, as well as for distinct enstrophy constituents. The derived expressions explicitly reveal the interaction among external force fields, surface geometry, surface kinematics, as well as both longitudinal and transverse physical processes on a deformable boundary.
\item Despite the high generality of arbitrary boundary motion and deformation, the results demonstrate that introducing a conjugate curvature tensor pair effectively absorbs these complexities into a unified algebraic structure, thereby yielding compact expressions for all boundary sources, manifesting as quadratic-form-type interactions between vorticity modes and surface geometry. It is emphasized that the effects of Gaussian and mean curvatures are intrinsically embedded within these source terms.
\item The proposed theory is expected to provide a valuable tool for (i) global surface flow diagnostics when integrated with experimental measurements and numerical simulations, (ii) the rational design of boundary flow control strategies, and (iii) advancing the physical understanding of formation mechanisms underlying near-wall coherent structures and flow-induced noise, in fluid mechanics.
\end{itemize}
\section*{Declarations}
\begin{itemize}
	\item \textbf{Funding} This work was funded by the National Natural Science Foundation of China (Grant No. 12402262).
	\item \textbf{Conflict of interest} The authors have no conflict of interest to declare that are
	relevant to the content of this article.
	\item \textbf{Data availability} No data was generated for the purposes of this research.
\end{itemize}

\backmatter


\begin{appendices}




\end{appendices}


\bibliography{sn-bibliography}

\end{document}